\def\maxwidth{ %
  \ifdim\Gin@nat@width>\linewidth
    \linewidth
  \else
    \Gin@nat@width
  \fi
}
\definecolor{fgcolor}{rgb}{0.345, 0.345, 0.345}
\definecolor{shadecolor}{rgb}{.97, .97, .97}
\definecolor{messagecolor}{rgb}{0, 0, 0}
\definecolor{warningcolor}{rgb}{1, 0, 1}
\definecolor{errorcolor}{rgb}{1, 0, 0}
\newcommand{\Intv}[1]{\left[#1\right]}
\newtheorem{thm}{Theorem}
\newtheorem{defn}[thm]{Definition}
\newtheorem{lemma}[thm]{Lemma}
\newtheorem{proposition}[thm]{Proposition}
\newtheorem{remark}[thm]{Remark}
\newcommand{\ds}{\displaystyle}
\newcommand{\norm}[1]{\left\Vert#1\right\Vert}
\newcommand{\abs}[1]{\left\vert#1\right\vert}
\newcommand{\set}[1]{\left\{#1\right\}}
\newcommand{\scp}[1]{\left\langle#1\right\rangle}
\newcommand{\rb}[1]{\left(#1\right)}
\newcommand{\R}{\mathbb{R}}
\newcommand{\N}{\mathbb{N}}
\numberwithin{equation}{section}
\numberwithin{equation}{section}
\DeclareMathAlphabet{\pazocal}{OMS}{zplm}{m}{n}
\begin{document}
\title{
Hierarchical Model with Allee Effect, Immigration, and Holling Type II Functional Response
}
\author{Eddy Kwessi\footnote{Corresponding author: Department of Mathematics, Trinity University, 1 Trinity Place, San Antonio, TX 78212, Email: ekwessi@trinity.edu}}
\date{}
\maketitle
\begin{abstract}
In this paper, we discuss a hierarchical model, based on a Ricker competition model. The species considered are competing for resources and may be subject to an Allee effect due to mate limitation, anti-predator vigilance or aggression, cooperative predation or resource defense, or social thermoregulation. The species may be classified into a more dominant species  and less dominant or ``wimpy" species or just as a predator and prey. 
The model under consideration also has components taking into account immigration in both  species  and a more structured Holling type II functional response. Local and global stability analyses are  discussed and simulations are provided. We also consider demographic stochasticity on the species 
due to environmental fluctuations in the form of Wiener processes and we show that there are conditions under which a global solution exists, stationary distributions exists,  and strong persistence in mean of the species is possible. We also use Wasserstein distance to show empirically that stochasticity can acts as bifurcation parameter. 
\end{abstract}
\section{Introduction}
Complex systems in nature often have an either open or hidden hierarchy between its parts or subparts. Therefore dissections of  its structure can help understand its dynamics. This hierarchy may be dictated by size, individual strength, group structure or organization.
For instance, in a bee colony, there is queen who is the largest member of the colony by size, males drones,  and under-developed female called  workers. Another example can be found among different species living in the wild, for instance, there is a hierarchy in strength between lions, hyenas, and say antelopes. In the wild especially, there is a constant displacement of species, either individually or by group, or  due to research for food or water, due changes in the environment, or due to mating needs. There is a vast literature on using hierarchical models in biological systems. We will focus in this manuscript on populations dynamics in the ecological world. In particular, we are interested studying the dynamics between species that are subject to immigration and emigration, subject to an Allee effect, with a more structured functional response. For self-containment, we recall that an Allee effects (see \cite{Allee1949}) is a phenomenon  in population dynamics where there  is a  positive correlation between a population density and its  relative growth rate. It is sometime divided into weak and strong Allee effect, see for instance \cite{Hutchings2015}. The strong Allee effect occurs when a population has a critical density $A$ below which it declines to extinction and above which it increases towards its carrying capacity $K$. The weak Allee effect occurs when a population lacks such a critical density, but at lower densities, the population growth rate arises with increasing densities. \\
\begin{figure}[H] 
   \centering\includegraphics[scale=0.5]{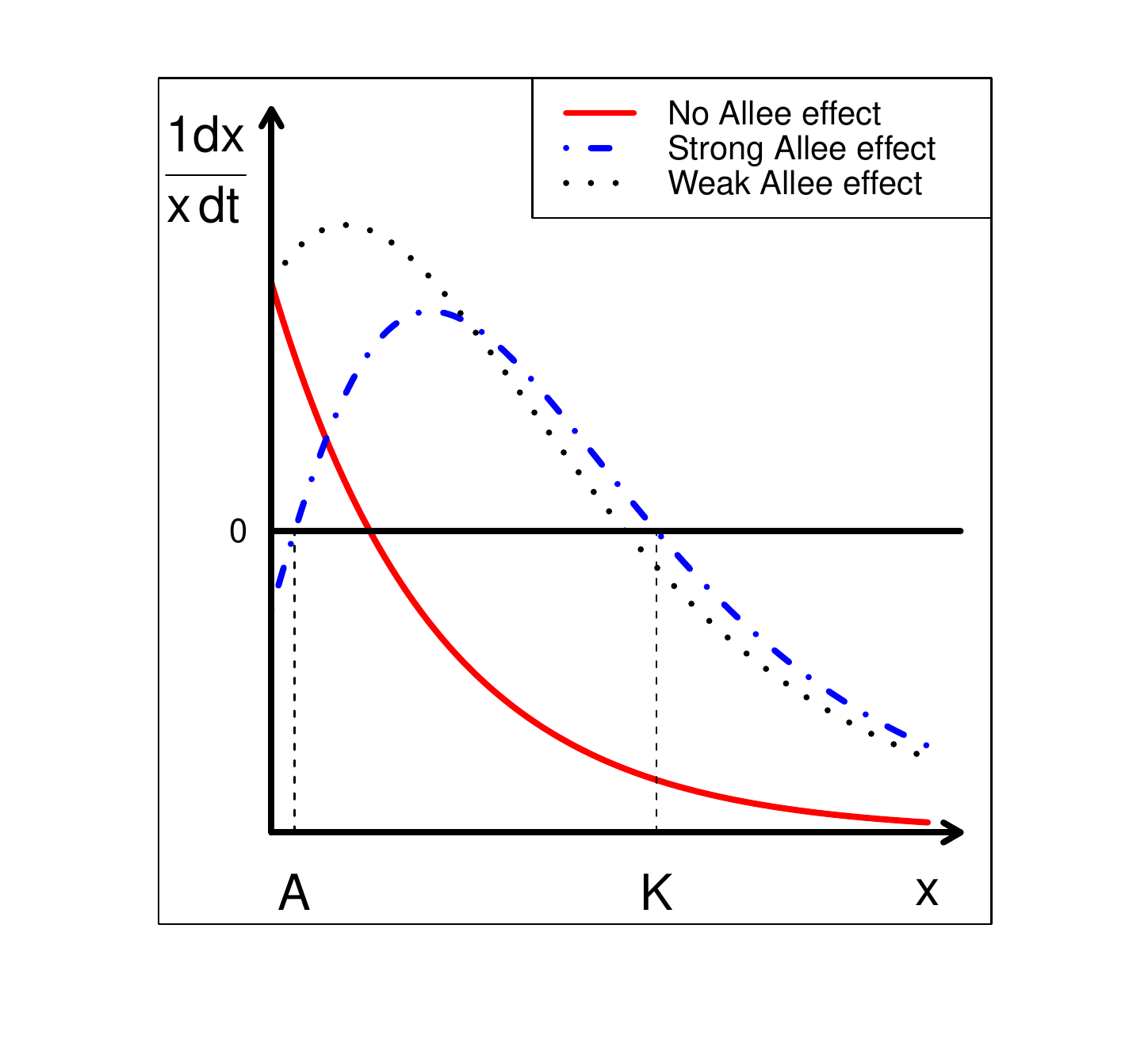}     
   \caption{ Represented is  the relative capita growth rate   as a function of the  population  density  $x$. The red curve represents no Allee effect since there is a negative correlation between the per capita growth rate and the density. The black dashed curve represents the weak Allee effect since at lower densities $x$, the per capita growth rate is increasing but there is no Allee threshold $A$. 
    The blue dashed represents  the strong Allee effect since there is positive correlation between the per capita growth rate and the existence of a Threshold under with the population decreases to extinction.}
   \label{fig:AlleeEffect1}
\end{figure}
We also recall in that in ecology, a functional response represents the intake rate of a species as a function of food density. \cite{Holling1959} proposed  three types of functional responses. Let $x$ and $y$ represents the density of two populations; then a type I functional response is of the form $h_1(x,y)=\mu xy$, where $\mu$ is a constant; a type II functional response is of the form $\ds h_1(x,y)=\frac{\alpha xy}{1+\alpha \beta xy}$, where $\alpha$ represents the attack rate and $\beta$ is the handling time, that is, the time spent by say species $x$ searching and  processing food obtained from species $y$. A type III functional response is of the form $\ds h_1(x,y)=\frac{\mu}{1+\alpha e^{-\beta xy}}$, where $\alpha$ and $\beta$ are as above and $\mu$ is a constant representing a saturation level, that is, a rate threshold when species densities are high. 
\begin{figure}[H] %
   \centering\includegraphics[scale=0.4]{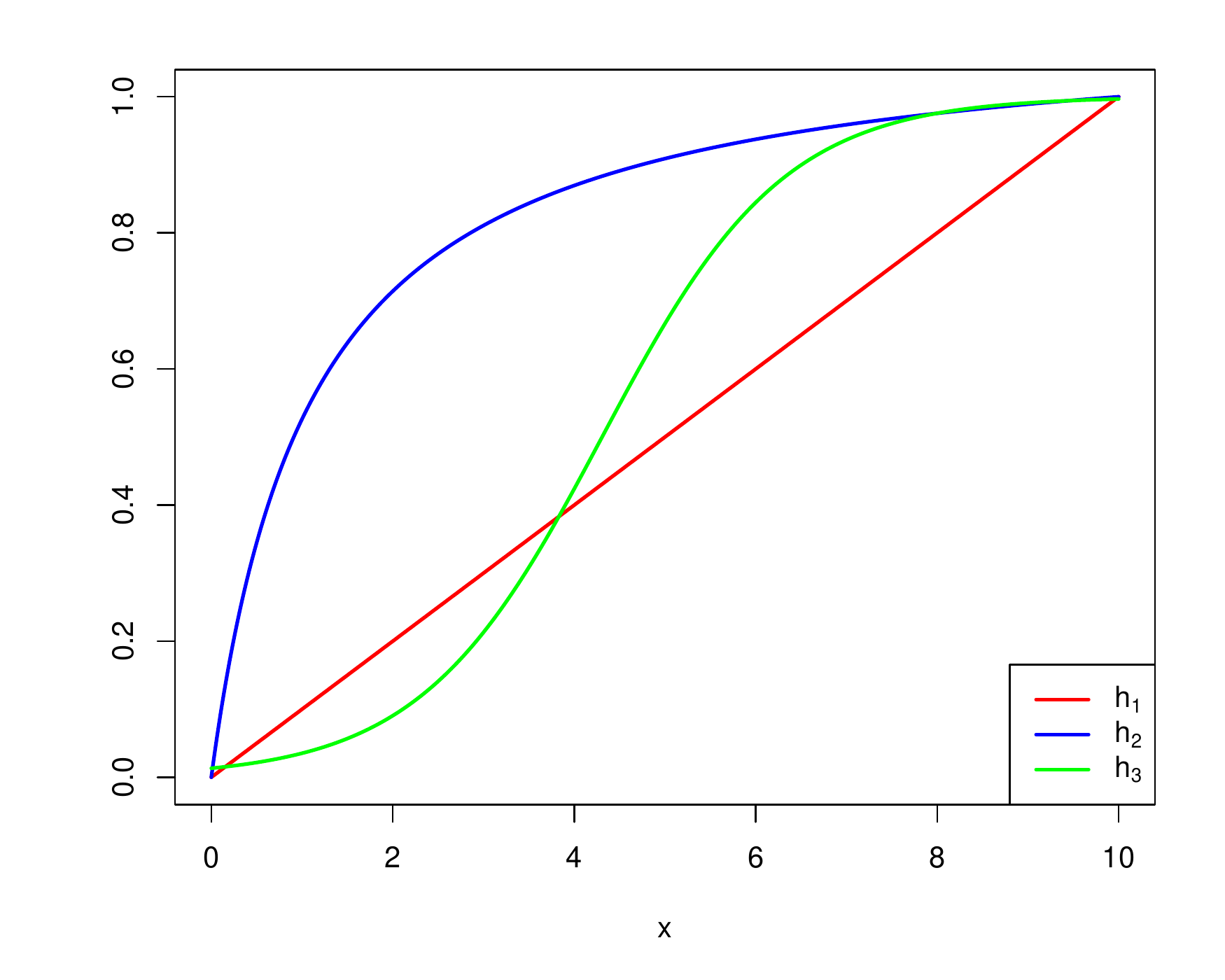}
   \caption{Represented are the three functional response types with chosen parameters. }
   \label{fig:Holling2D}
\end{figure}
In ecology,  there is already a vast literature on the stability analysis of deterministic models. For hierarchical model with one, two, or more species, the interested reader can see for instance \cite{Kwessi2015_7}, \cite{Kwessi2015_6}, and \cite{Kwessi2018_4}. Considering that species live in  habitats that are often subject to demographics fluctuations or perturbations, it is sometime more accurate to consider stochastic  models. In this case, persistence and coexistence of species despite environmental fluctuations are  of particular interest. Papers such as \cite{Chesson2009}, \cite{Benaim2009},  \cite{Hening2021}, \cite{Hening2022},  the references therein, and their subsequent iterations  are great introduction to the  understanding of  the biological motivations and necessary  theoretical underpinnings.  The remainder of the paper if organized as follows:  In section \ref{sect:DeterMdel}, we discuss the deterministic model by finding its fixed points, then we discuss its stability both local and global , and we propose some numerical results. In section \ref{sect:stochatic}, we discuss the stochastic model. In particular, we show existence of a global solution,  we show strong persistence in mean, then we proved the existence of  a stationary distribution, and we propose numerical results. These numerical results consist of phase space diagrams, histograms, and level curves of species densities as well a comparative analysis based on the Wasserstein distance. In section \ref{sect:discussion}, we make some concluding remarks. 
\section{Deterministic Model} \label{sect:DeterMdel}

In \cite{Kwessi2015_7}, \cite{Kwessi2015_6}, and \cite{Kwessi2018_4}, hierarchical models for two and three species were discussed. In particular, the following model was discussed \begin{equation}\label{EQ:stocking}
	\begin{cases}
		x_{t+1} &=  H_1 x_t+x_t e^{\ds r_1 - x_t - \frac{m_1}{1+s_1x_t}} \vspace{0.125cm}\\
        y_{t+1} &=H_2 y_t+ y_t e^{\ds r_2 - y_t -bx_t - \frac{m_2}{1+s_2y_t}} 
	\end{cases}\;, 
\end{equation}
In this model, $x_t$ and $y_t$ represent the  densities of the two species $x$ and $y$ under consideration with $x$ being the ``stronger" species and $y$ being the weaker one. Moreover,  the $H_i$'s  are the immigration constants,  the $r_i$'s  are the species' growth rates, the $m_i$'s are the species'  mortality rates,  the $s_i$'s are the Allee effect constants, and $b$ is a competition constant. For $i=1,2$,   the  following assumptions were proposed to guarantee a strong Allee effect on each species:
\begin{itemize}
\item[(i)] $0<H_i<1, ~~r_is_i>1+s_i\ln(1-H_i)$\;, and $m_i>r_i-\ln(1-H_i)$\;.
\item[(ii)] $[r_is_i-1-s_i\ln(1-H_i)]^2>4s_i(m_i-r_i+\ln(1-H_i))$\;.
\end{itemize}
 Each species is subject to a strong  Allee effect induced multiplicatively with the terms $\ds e^{\frac{m_1}{1+s_1x_t}}$ and  $\ds e^{\frac{m_2}{1+s_2y_t}}$.  In this manuscript, we propose to discuss the following model with immigration, and Allee effect on the waker species, and a  Hollins type II functional response. The choice of a type II rather than a Type III stems from the fact that type III tends to occurs in experimental data but is rare in nature. 
 \begin{equation}
 \begin{cases} 
 \ds x_{t+1}=x_te^{\ds r_1-x_t+dh(x,y)+H_1}  \\
 y_{t+1}=y_te^{\ds r_2-y_t-\frac{m}{a+y_t}-h(x,y)+H_2}
 \end{cases}\;.
 \end{equation}
 with a Holling type II functional response $\ds h(x,y)=\frac{(b+cx_t)y_t}{1+py_t(b+cx_t)}$ and 
 
 \[
 \begin{array}{||l||l||l||}
\hline
 \mbox{Parameters} &\mbox{Denomination} &\mbox{Dimension}\\ \hline
 r_1,r_2 & \mbox{Populations growth rates} &(\mbox{Time})^{-1}\\ \hline
 H_1, H_2 & \mbox{Immigration/emigration rates} &(\mbox{Time})^{-1}\\ \hline
 a & \mbox{Allee Effect constant} & \mbox{Biomass}\\ \hline
b &\mbox{Species $x$ attack rate} & (\mbox{Biomass})^{-1}(\mbox{Time})^{-1} \\ \hline
 c & \mbox{Cooperation intensity between the two species}&(\mbox{Biomass})^{-2}(\mbox{Time})^{-1} \\ \hline
 d & \mbox{Conversion coefficient} & \mbox{Dimensionless}\\ \hline
  m & \mbox{Mortality rate  of species $y$ due to the Allee Effect}& \mbox{Time}^{-1} \\ \hline
 p & \mbox{Species $x$  handling time} & \mbox{Time} \\ \hline
 \end{array}
 \]
  \begin{remark}
One could also consider   a spatial or cluster model  for which immigration is dependent on the  distance between the center position $\xi$ of the cluster and the center $\eta$  from  which the species is immigrating from,  using a  Laplacian spatial dispersion kernel:
  \[
  K(\xi,\eta,\gamma)=\frac{1}{\gamma}e^{-\frac{\abs{\xi-\eta}}{\gamma}}\;.
\]
  For a given cluster/patch centered a $\eta$,  we will assume that the ecosystem has $N$ clusters each centered at $\xi_i$, for $j=1,2,\cdots, N$ and  individuals from each species $x$ and $y$ from these clusters move into the cluster centered at $\eta$ according to spatial dispersion kernels $K(\eta, \xi, \gamma)$. We will then have  the model.

 \begin{equation}
 \begin{cases} \ds x_{t+1}(\eta)=x_t(\eta)\sum_{j=1}^NK(\eta,\xi_j,\gamma_1)e^{\ds r_1-x_t(\xi_j)+dh(x_t(\xi_j),y_t(\xi_j))}  \\
\ds  y_{t+1}(\eta)=y_t(\eta)\sum_{j=1}^N K_2(\eta,\xi_j,\gamma_2)e^{\ds r_2-y_t(\xi_j)-\frac{m}{a+y_t(\xi_j)}-h(x_t(\xi_j),y_t(\xi_j))}
 \end{cases}\;.
\end{equation}
In the presence of multiple preys, say $M$, each with density $y_t^i$, ($i=1, \cdots M$) for the predator with density $x_t$  at time $t$, we could consider the general model
\begin{equation}
 \begin{cases} \ds x_{t+1}(\eta)=x_t(\eta)\sum_{j=1}^NK(\eta,\xi_j,\gamma_1)e^{\ds r_1-x_t(\xi_j)+\sum_{i=1}^Md_ih(x_t(\xi_j),y_t^i(\xi_j))}  & \\
\ds  y_{t+1}^i(\eta)=y_t^j(\eta)\sum_{j=1}^N K_2(\eta,\xi_j,\gamma_i)e^{\ds r_{i+1}-y_t^i(\xi_j)-\frac{m_i}{a_i+y_t^i(\xi_j)}-h(x_t(\xi_j),y_t^i(\xi_j))} & 
 \end{cases}\;.
\end{equation}
\end{remark}
 \subsection{Stability Analysis}
 \subsection{Fixed Points}
Recall that $r_1,r_2,a,b,c,d,p,m$ are all {\bf positive parameters}. \\
We start by finding the fixed points of the model; they are the origin $(0,0)$ and the intersections of the  isoclines of  equations 
\begin{eqnarray}
\ds r_1+H_1-x+\frac{d(b+cx)y}{1+py(b+cx)}&=& 0\;. \vspace{0.2cm}\\
\ds r_2+H_2-y-\frac{m}{a+y}-\frac{(b+cx)y}{1+py(b+cx)}&=& 0\;.
\end{eqnarray}
After simplifications, we will have 
\begin{eqnarray}
(C_1): \quad x&=&r_1+H_1+d(r_2+H_2)-dy-\frac{dm}{a+y}\;.\\
(C_2): \quad y &=& \frac{x-(r_1+H_1)}{(b+cx)(-px+d+p(r_1+H_1))}\;.
\end{eqnarray}
The fixed points will be the intersection between the curve ($C_2$) in the $xy$-plane with the $(C_1)$ in the $yx$-plane. Things to note:
\begin{itemize}
\item $(C_1)$ has a vertical asymptote $y=-a$ and an asymptote $x=r_1+H_1+d(r_2+H_2)-dy$ is the $yx$-plane.
\item $(C_2)$ has two vertical asymptotes $\ds x=-\frac{b}{c}$ and $x=\frac{d+p(r_1+H_1)}{p}$ and a horizontal asymptote $y=0$ in the $xy$plane.
\end{itemize}
In all, we can expect the following fixed points.
\begin{itemize}
\item The origin $E_{00}(0,0)$.
\item Axial fixed points: 
\begin{enumerate}
\item A predator-free fixed point $E_{10}(0,y)$,
\item A  prey-free fixed $E_{01}(x,0)$.
\end{enumerate}
\item At most two interior fixed points $E_{11}(\alpha_{1x},\alpha_{1y})$ and $E_{22}(\alpha_{2x},\alpha_{2y})$. 
\end{itemize}

\begin{figure}[H]
\begin{center}
\begin{tabular}{cc}
({\bf a})& ({\bf b})\\
\includegraphics[scale=0.3]{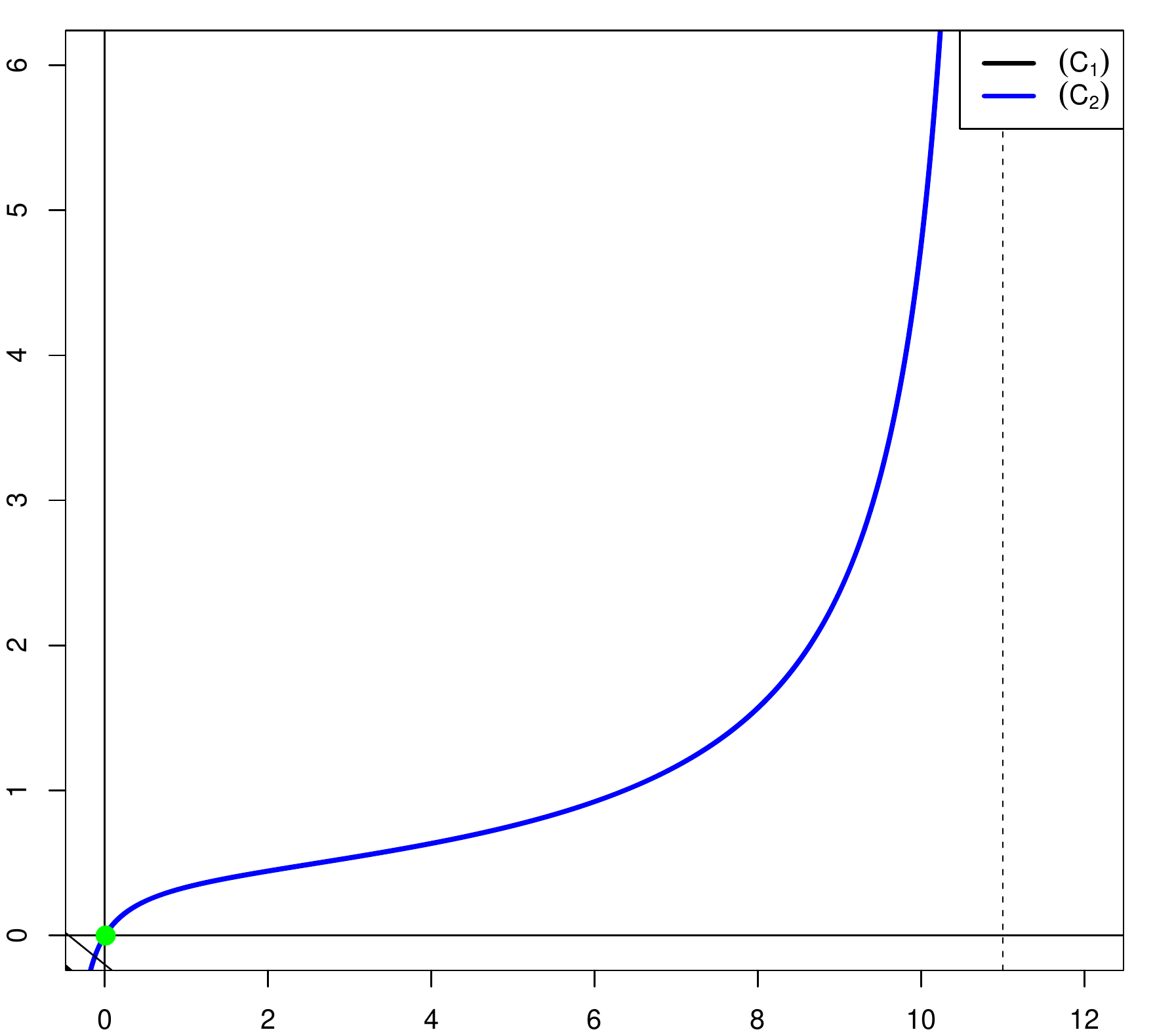} & \includegraphics[scale=0.3]{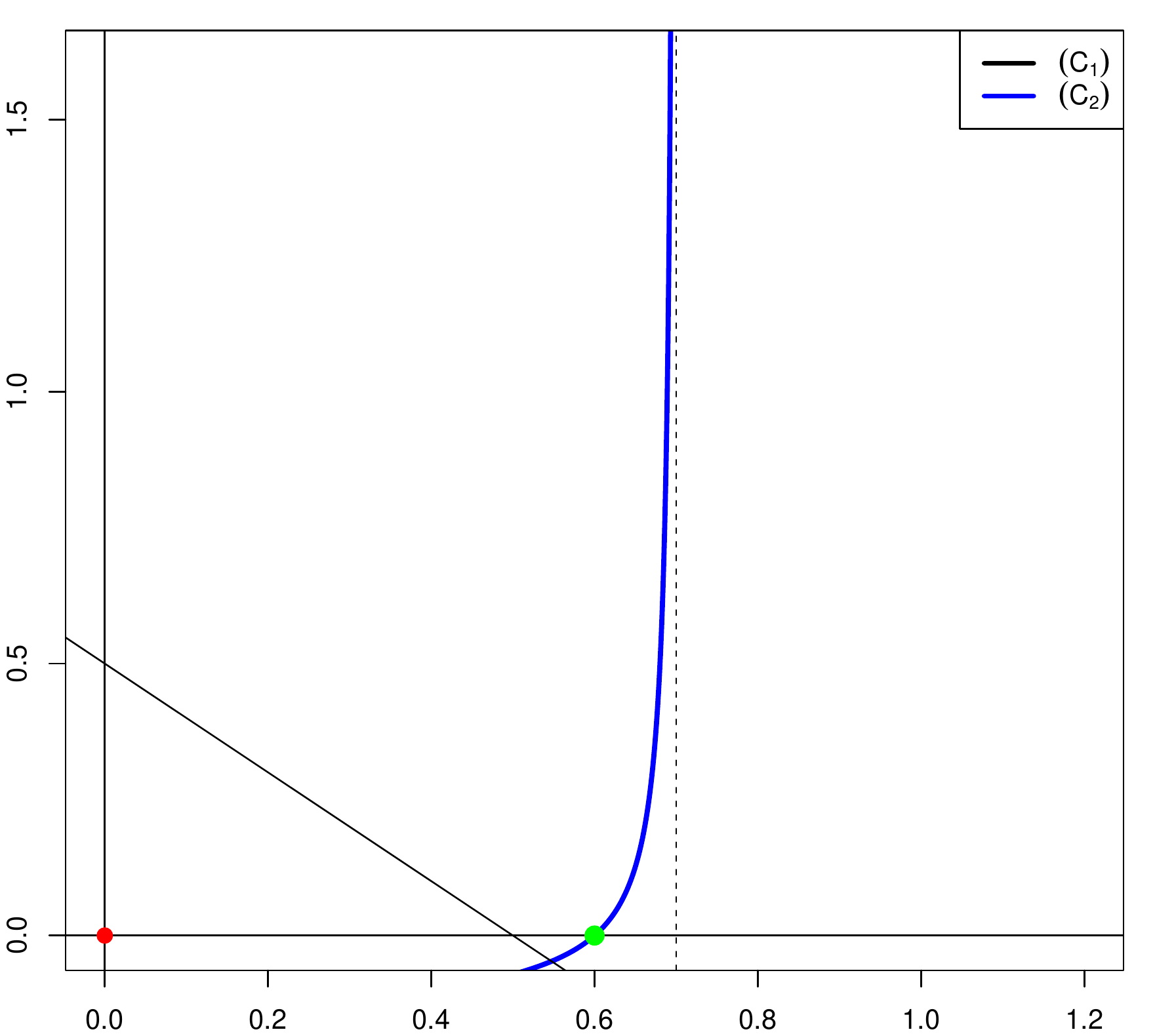}\\
\\
({\bf c}) & ({\bf d})\\
\includegraphics[scale=0.3]{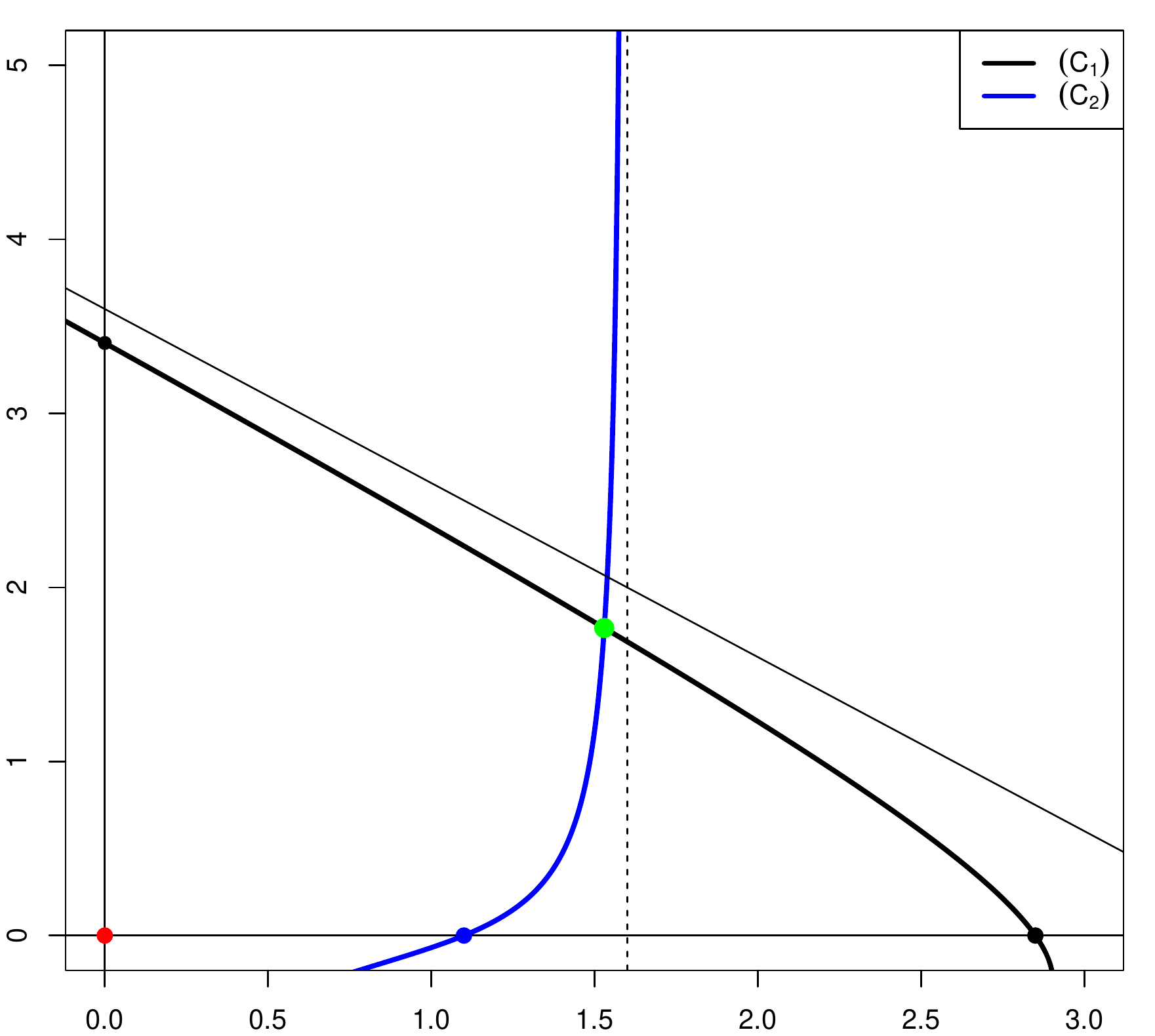} & \includegraphics[scale=0.3]{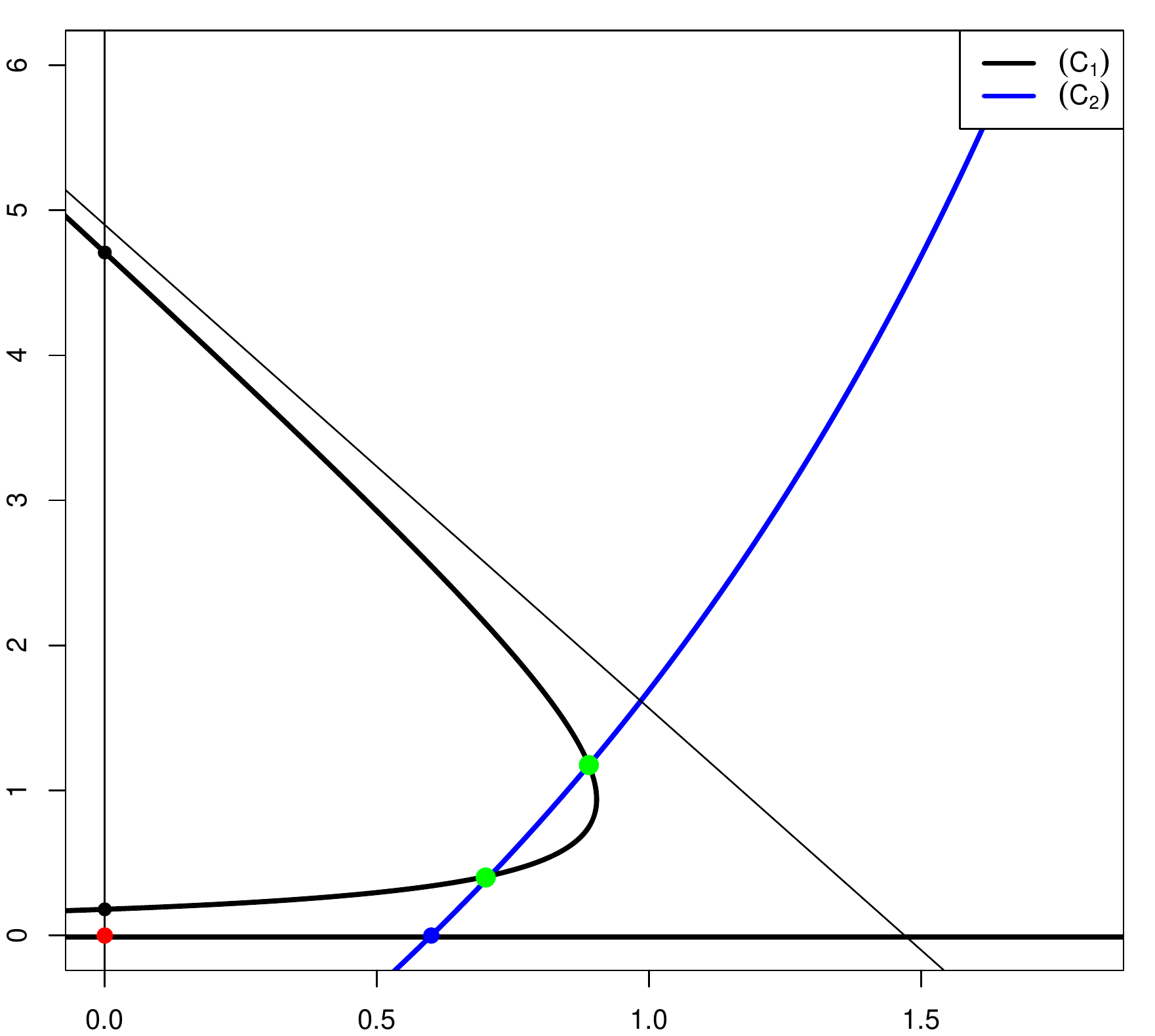}
\end{tabular}
\end{center}
\caption{Figure  ({\bf a}) represents the case of no interior fixed point, Figure  ({\bf b}) represents the case of one axial fixed point,  Figure ({\bf b}) represents the case of one interior  fixed point, and Figure  ({\bf d}) represents the case of two interior fixed points.}
\label{fig:AllFixedPoints}
\end{figure}
\subsubsection{Local Stability}
Let \begin{eqnarray*}
h(x,y)&=&\frac{(b+cx)y}{1+py(b+cx)}\;,\\
f_1(x,y)&=&e^{\ds r_1+H_1-x+dh(x,y)}\;,\\
f_2(x,y)&=& e^{\ds r_2+H_2-y-\frac{m}{a+y}-h(x,y)}\;,\\
F(x,y)&=&(F_1(x,y), F_2(x,y))=(xf_1(x,y), yf_2(x,y))\;.
\end{eqnarray*}
The Jacobian matrix of the system above at any given  point $M(x,y)$ is given as 
\[
J(x,y)=\begin{pmatrix}
f_1(x,y)\rb{1+x\rb{-1+\frac{dcy}{(1+py(b+cx))^2}}} & xf_1(x,y)\rb{\frac{d(b+cx)}{(1+py(b+cx))^2}}\\
yf_2(x,y)\rb{\frac{-cy^2}{(1+py(b+cx))^2}} & f_2(x,y)\left(1+y\left(-1+\frac{b+cx}{(1+py(b+cx))^2}\right)\right)
\end{pmatrix}\;.
\]
At the origin $E(0,0)$, the Jacobian is 
\[
J(0,0)=\begin{pmatrix}
f_1(0,0) & 0\\
0 & f_2(0,0)
\end{pmatrix}= \begin{pmatrix}
e^{r_1+H_1} & 0\\
0 & e^{\ds r_2+H_2-\frac{m}{a}}
\end{pmatrix}\;.
\]
Thus the eigenvalues are 
\[ \lambda_1=e^{r_1+H_1} \quad \mbox{and $\lambda_2=e^{r_2+H_2-\frac{m}{a}}$}\;.\]
At the predator-free axial fixed point $E(0,y)$ the Jacobian is 
\[
\begin{array}{ll}
J(0,y)&=\begin{pmatrix}
f_1(0,y) & 0\\
f_2(0,y)\rb{\ds \frac{-cy^3}{(1+pby))^2}} & f_2(0,y)\left(1+y\left(-1+\frac{b+}{(1+pby))^2}\right)\right)
\end{pmatrix}\mbox{}\\
& \mbox{}\\
&=\begin{pmatrix}
e^{r_1+H_1} & 0\\
e^{r_2+H_2-\frac{m}{a+y}}\rb{\ds \frac{-cy^3}{(1+pby))^2}} & e^{r_2+H_2-\frac{m}{a+y}}\left(1+y\left(-1+\frac{b}{(1+pby))^2}\right)\right)
\end{pmatrix}\;.
\end{array}
\]
Thus the eigenvalues are 
\[ \lambda_1=e^{r_1+H_1} \quad \mbox{and $\lambda_2=e^{r_2+H_2-\frac{m}{a+y}}\left(1+y\left(-1+\frac{b}{(1+pby))^2}\right)\right)$}\;.\]
At the prey-free axial fixed point $E(x,0)$ the Jacobian is
\[
\begin{array}{lll}
J(x,0)&= \begin{pmatrix}
f_1(x,0)(1-x) & xdf_1(x,0)(b+cx)\\
0 & f_2(x,0)
\end{pmatrix}
&=
\begin{pmatrix}
e^{r_1+H_1-x}(1-x) & xd(b+cx)e^{r_1+H_2-x}\\
0 & e^{r_2+H_2-\frac{m}{a}}
\end{pmatrix}\;.
\end{array}
\]
Thus the eigenvalues are \[ \lambda_1=e^{r_1+H_1-x}(1-x) \quad \mbox{and $\lambda_2=e^{r_2+H_2-\frac{m}{a}}$}\;.\]
At  an interior  fixed point  $E(x,y)$ with $x,y>0$, the Jacobian is 
\[
J(x,y)=\begin{pmatrix}
1+x\rb{-1+\frac{dcy}{(1+py(b+cx))^2}} & \frac{dx(b+cx)}{(1+py(b+cx))^2}\\
\frac{-cy^3}{(1+py(b+cx))^2} & 1+y\left(-1+\frac{b+cx}{(1+py(b+cx))^2}\right)
\end{pmatrix}\;,
\]
since $f_1(x,y)=f_2(x,y)=1$. Let $D$ be the determinant and $T$ be the trace of $J(x,y)$ respectively. Then the  eigenvalues are\\ 
\[ \ds \lambda_1=\frac{T-\sqrt{T^2-4D}}{2} \quad \mbox{and $\ds \lambda_2=\frac{T-\sqrt{T^2-4D}}{2}$}\;.\]
We the have the following result:
\begin{thm}
Consider the deterministic system above. Then 
\begin{itemize}
\item[$\blacktriangleright$] The origin $E_{00}(0,0)$ is locally asymptotically stable if and only if  $r_1+H_1<0$ and $r_2+H_2<\frac{m}{a}$.
\item[$\blacktriangleright$] The Predator-free equilibrium $E_{01}(0,y)$ is 
locally  asymptotically stable if and only if  $r_1+H_1<0$ and $\abs{1-y+\frac{yb}{(1+pby)^2}}<e^{\frac{m}{a}-r_2-H_2}$.
\item[$\blacktriangleright$] The Prey-free equilibrium  $E_{10}(0,y)$ is 
locally asymptotically stable if and only if  $\abs{x-1}<e^{x-r_1-H_1}$ and $r_2+H_2<\frac{m}{a}$.
 \item[$\blacktriangleright$] From the Determinant-Trace analysis, we know that 
\begin{itemize}
 \item If $\abs{\lambda_1}<1$ and  $\abs{\lambda_2}>1$ or $\abs{\lambda_1}>1$ and  $\abs{\lambda_2}<1$, then $E(x,y)$ is  locally a saddle point.
\item $\abs{\lambda_1}<1$ and  $\abs{\lambda_2}<1$, then $E(x,y)$ is  locally asymptotically stable.
\item $\abs{\lambda_1}>1$ and  $\abs{\lambda_2}>1$, then $E(x,y)$ is  locally  unstable.
\end{itemize}

\end{itemize}
\end{thm}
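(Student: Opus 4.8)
The plan is to invoke the standard linearized-stability theorem for discrete dynamical systems: a hyperbolic fixed point $E$ of the map $F=(xf_1,yf_2)$ is locally asymptotically stable precisely when every eigenvalue $\lambda$ of the Jacobian $J$ evaluated at $E$ satisfies $\abs{\lambda}<1$, and is unstable as soon as one eigenvalue has $\abs{\lambda}>1$. Since all the relevant Jacobians have already been computed above, the proof reduces to reading off the spectra and translating the modulus conditions into inequalities on the parameters.

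First, for the origin $E_{00}$ and the two axial equilibria $E_{01},E_{10}$, I would exploit the fact that each Jacobian computed above is triangular, since the off-diagonal entry in the appropriate corner vanishes when one of the populations is zero. Hence the eigenvalues are exactly the diagonal entries. For the origin these are $\lambda_1=e^{r_1+H_1}$ and $\lambda_2=e^{r_2+H_2-\frac{m}{a}}$; as both are positive, $\abs{\lambda_i}<1$ is equivalent to $\lambda_i<1$, i.e.\ to $r_1+H_1<0$ and $r_2+H_2<\frac{m}{a}$, which is the first bullet. The same reading-off at $E_{01}$ and $E_{10}$ yields the stated conditions, where now one eigenvalue carries the extra factor $1+y\left(-1+\frac{b}{(1+pby)^2}\right)$ for the predator-free point (resp.\ $1-x$ for the prey-free point); these may be negative, so the absolute value is retained and the inequality rearranged into the exponential form appearing in the statement.

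For the two interior equilibria $E_{11},E_{22}$, where $f_1=f_2=1$, I would pass from the eigenvalues to the trace $T$ and determinant $D$ of $J(x,y)$ and apply the Jury (Schur--Cohn) conditions for a $2\times2$ matrix: both multipliers lie in the open unit disk if and only if $\abs{T}<1+D<2$. The qualitative classification in the last bullet then follows from the signs of the discriminant $T^2-4D$ together with $1-T+D$ and $1+T+D$: a saddle arises when exactly one multiplier leaves the unit disk, local asymptotic stability when both remain inside, and instability when both lie outside.

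The main obstacle, and the reason the interior case is stated qualitatively rather than as explicit parameter inequalities, is that $T$ and $D$ at $E_{11},E_{22}$ are rational functions of the equilibrium coordinates $(\alpha_{ix},\alpha_{iy})$, which are themselves the implicitly defined intersections of $(C_1)$ and $(C_2)$, with no closed form to substitute back. A secondary subtlety is the non-hyperbolic borderline $\abs{\lambda}=1$, where linearization is inconclusive; the ``if and only if'' assertions are therefore understood on the hyperbolic regime, the boundary cases requiring a separate center-manifold or higher-order argument that I would flag but not pursue here.
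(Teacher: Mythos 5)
Your proposal is correct and takes essentially the same route as the paper: the paper's own justification is precisely the Jacobian computations preceding the theorem, with eigenvalues read off the triangular matrices at the origin and axial points (all positive at the origin, so $\abs{\lambda_i}<1$ reduces to the stated sign conditions) and the trace--determinant formula $\lambda_{1,2}=\frac{T\pm\sqrt{T^2-4D}}{2}$ at interior points, the interior case being left qualitative for exactly the reason you give. One minor observation: carrying out your rearrangement at the predator-free point carefully yields $e^{\frac{m}{a+y}-r_2-H_2}$ on the right-hand side rather than the $e^{\frac{m}{a}-r_2-H_2}$ printed in the theorem, which suggests a typo in the paper's statement rather than any gap in your argument.
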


\subsection{Global Stability}
In this section, we discuss global stability condition for the origin and the interior equilibrium point.
\begin{thm} \label{thm:GlobalStat}
Put \[\mbox{$\alpha=e^{r_1+H_1+\frac{d}{p}}$ and $\beta=e^{r_2+H_2}$}\;.\]
\begin{itemize}
\item[$\blacktriangleright$] The origin $E_{00}(0,0)$ is globally asymptotically stable if $\alpha^2\leq 1$ and $\beta^2\leq 1$.
\item[$\blacktriangleright$] The interior equilibrium $E(x_*,y_*)$ is 
globally asymptotically stable if   $\alpha^2=1$ and  $\beta^2= 1$.

\end{itemize}
\end{thm}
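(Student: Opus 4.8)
The plan is to treat the two equilibria with different tools: a scalar Ricker comparison for the origin, and a discrete Lyapunov/LaSalle argument for the interior point. For $E_{00}$ I would first record two uniform bounds that explain the definitions of $\alpha,\beta$. Setting $u=(b+cx)y\ge0$ one has $h(x,y)=\frac{u}{1+pu}\in[0,\tfrac1p)$, so $d\,h(x,y)<\tfrac dp$, while $\tfrac{m}{a+y}>0$ for every $y\ge0$. Feeding these into the update maps and using $e^{-x_t}\le1$ yields the scalar dominations
\[
0\le x_{t+1}\le \alpha\, x_t e^{-x_t},\qquad 0\le y_{t+1}\le \beta\, y_t e^{-y_t}.
\]
When $\alpha^2\le1$, i.e. $\alpha\le1$ (recall $\alpha>0$), the inequality $x_{t+1}\le\alpha x_t e^{-x_t}\le x_t$ shows $(x_t)$ is non-increasing and bounded below, hence convergent to some $L\ge0$; passing to the limit gives $L\le Le^{-L}$, forcing $L=0$. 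The same argument with $\beta\le1$ gives $y_t\to0$. Global attractivity together with the local stability read off from the eigenvalues $e^{r_1+H_1}<1$ and $e^{r_2+H_2-m/a}<1$ (both forced by $\alpha,\beta\le1$) gives global asymptotic stability of $E_{00}$.

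For the interior equilibrium $E(x_*,y_*)$ the plan is to apply LaSalle's invariance principle with a relative-entropy (Gompertz-type) candidate
\[
V(x,y)=\Bigl(x-x_*-x_*\ln\tfrac{x}{x_*}\Bigr)+\kappa\Bigl(y-y_*-y_*\ln\tfrac{y}{y_*}\Bigr),\qquad \kappa>0,
\]
which is nonnegative on the open positive quadrant and vanishes only at $E$. Using the fixed-point identities $r_1+H_1=x_*-d\,h_*$ and $r_2+H_2=y_*+\tfrac{m}{a+y_*}+h_*$, with $h_*=h(x_*,y_*)$, the log-ratios collapse to $\ln\tfrac{x_{t+1}}{x_t}=-(x_t-x_*)+d(h-h_*)$ and $\ln\tfrac{y_{t+1}}{y_t}=-(y_t-y_*)-\bigl(\tfrac{m}{a+y_t}-\tfrac{m}{a+y_*}\bigr)-(h-h_*)$. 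I would then expand $\Delta V$ along trajectories, bound it with the convexity estimates $e^z-1\ge z$, $\ln(1+z)\le z$ and the entropy inequality $s-1-\ln s\ge0$, and choose $\kappa$ so that the coupling created by the common factor $h-h_*$ cancels; the equality constraints $\alpha^2=\beta^2=1$ are precisely what annihilate the residual, non-sign-definite terms and leave $\Delta V\le0$, with equality only on $\{x=x_*,\,y=y_*\}$. LaSalle then yields convergence to $E$.

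The interior case is the main obstacle, for two reasons. First, the Holling term couples the species nonlinearly, so $h-h_*=h(x,y)-h(x_*,y_*)$ does not separate additively; pinning down its sign and size against the diagonal contributions $-(x-x_*)$ and $-(y-y_*)$ — using monotonicity of $s\mapsto\frac{s}{1+ps}$ and of $h$ in each argument — is delicate and is exactly what fixes the admissible weight $\kappa$. Second, to close the LaSalle argument I must verify that the largest invariant subset of $\{\Delta V=0\}$ is $\{E\}$, ruling out nontrivial orbits trapped there. I would also establish a compact, positively invariant absorbing region (again from the Ricker bounds of the first step) so that trajectories stay where $V$ is defined and the $\omega$-limit set is nonempty.
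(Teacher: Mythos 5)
Your argument for $E_{00}$ is correct but follows a genuinely different route from the paper's. The paper proves this bullet with the quadratic Lyapunov function $L(x,y)=x^2+y^2$: using the same estimates $h(x,y)<\frac{1}{p}$ and $\frac{m}{a+y}>0$, it obtains $L(F(x,y))-L(x,y)\le x^2\left(\alpha^2-1\right)+y^2\left(\beta^2-1\right)\le 0$ and stops there. Your scalar dominations $x_{t+1}\le\alpha x_t e^{-x_t}$ and $y_{t+1}\le\beta y_t e^{-y_t}$, combined with monotone convergence, actually deliver more: genuine attractivity (the limit $\ell$ must satisfy $\ell\le\ell e^{-\ell}$, hence $\ell=0$), and this works also in the boundary cases $\alpha=1$ or $\beta=1$, whereas $\Delta L\le 0$ alone yields only Lyapunov stability, not the asymptotic part. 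For the first bullet your proof is, if anything, more complete than the paper's.

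The interior equilibrium is where your proposal has a genuine gap: it is a program, not a proof. Every load-bearing step is deferred --- the choice of the weight $\kappa$, the sign and size control of $h-h_*$ against the diagonal terms $-(x-x_*)$ and $-(y-y_*)$, the claim that $\alpha^2=\beta^2=1$ ``annihilates the residual terms,'' and the identification of the largest invariant subset of $\{\Delta V=0\}$. None of these is carried out, so the second bullet remains unproven in your write-up. There is also a directional problem in the tools you list: writing $x_{t+1}-x_t=x_t\left(e^{z_1}-1\right)$ with $z_1=\ln\frac{x_{t+1}}{x_t}$, the inequality $e^{z}-1\ge z$ gives a \emph{lower} bound on the increment, while $\Delta V\le 0$ requires an \emph{upper} bound, so the convexity estimates you cite point the wrong way. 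By contrast, the paper's proof of this bullet is one line on top of what it already has: it inserts $f_1\le\alpha$, $f_2\le\beta$ into $L(x,y)=(x-x_*)^2+(y-y_*)^2$ to get $\Delta L\le(\alpha x-x_*)^2+(\beta y-y_*)^2-(x-x_*)^2-(y-y_*)^2$, which vanishes identically when $\alpha=\beta=1$. (That comparison step has its own soft spot, since $t\mapsto(t-x_*)^2$ is not monotone on $[0,\alpha x]$, but that is the paper's chosen route.)

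More importantly, your own part-one bounds make the LaSalle apparatus moot and expose a degeneracy you should flag. If $\alpha=\beta=1$, then $f_1(x,y)<1$ and $f_2(x,y)<1$ strictly on the open positive quadrant (because $h<\frac{1}{p}$ strictly and $\frac{m}{a+y}>0$), so every orbit is componentwise strictly decreasing; in particular an orbit with $x_0<x_*$ could never converge to $x_*$. Worse, no interior fixed point exists at all under this hypothesis: the fixed-point identity you yourself quote gives $x_*=r_1+H_1+dh_*=d\left(h_*-\frac{1}{p}\right)<0$, a contradiction. Thus the hypothesis $\alpha^2=\beta^2=1$ is incompatible with the existence of $E(x_*,y_*)$, the second bullet is vacuous as stated, and your entropy function $V$ is centered at a point that is not there. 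Noting this degeneracy would be a sharper contribution than completing the LaSalle argument.
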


\subsection{Simulations}

In this section, we illustrate the deterministic model  for the different types of equilibrium points. We chose eight starting points with respective $x$ and $y$ coordinates: $X_0=(0.1, 0.1, 0.1, 10, 10, 10, 2.5, 20), Y_0=(10, 0.5, 0.1, 0.1, 0.5,10, 0.1, 3)$. The trajectories are represented by the black, light green, light blue, light red, light cyan, orange, purple, and  magenta colors. The solid dots represent the fixed points of the model and the solid blue and light black curves represent the isoclines $(C_2)$ and $C_1$ respectively. 
\begin{figure}[H]
\begin{tabular}{ccc}
({\bf a}) & ({\bf b})  & ({\bf c}) \\
\includegraphics[scale=0.32]{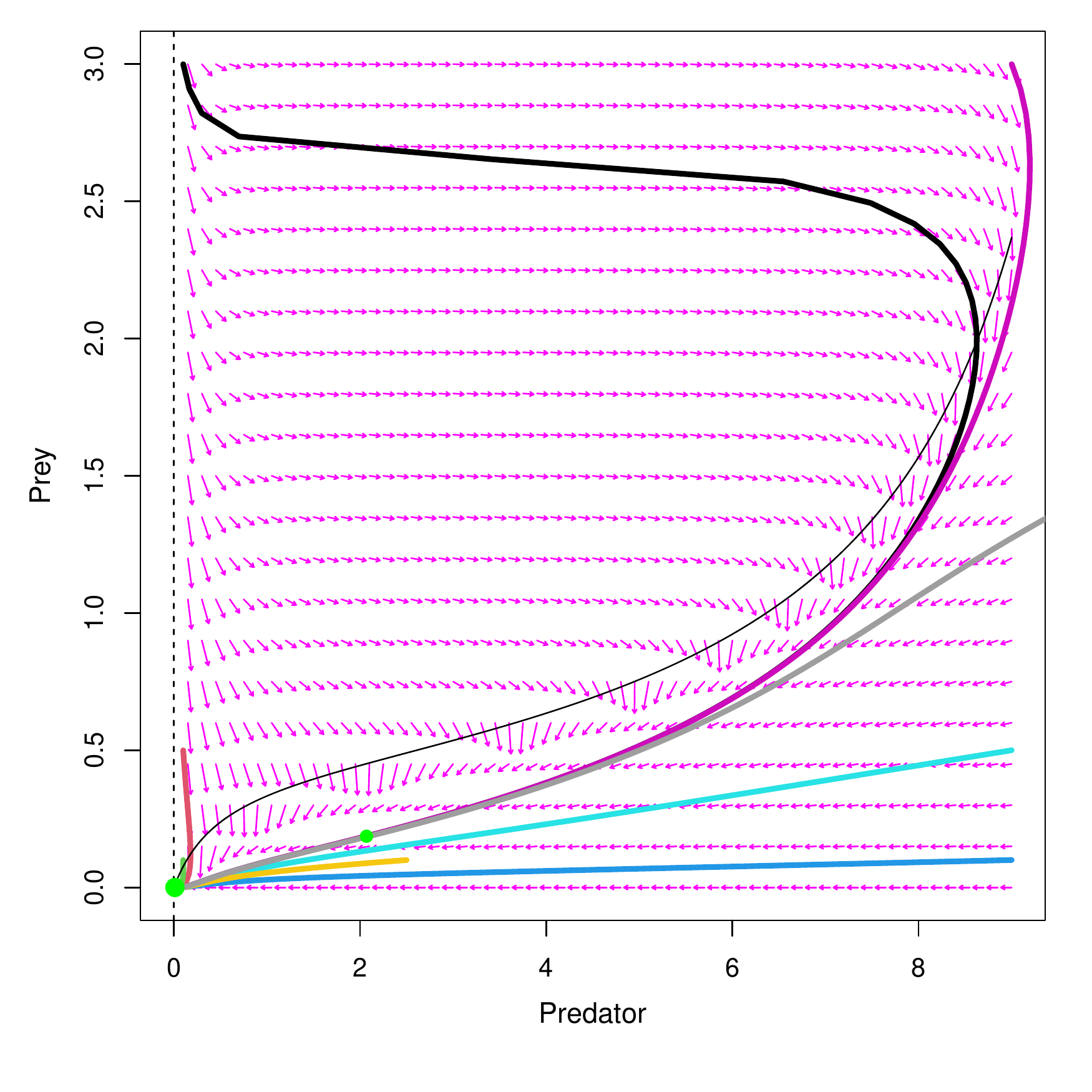} &  
\includegraphics[scale=0.32]{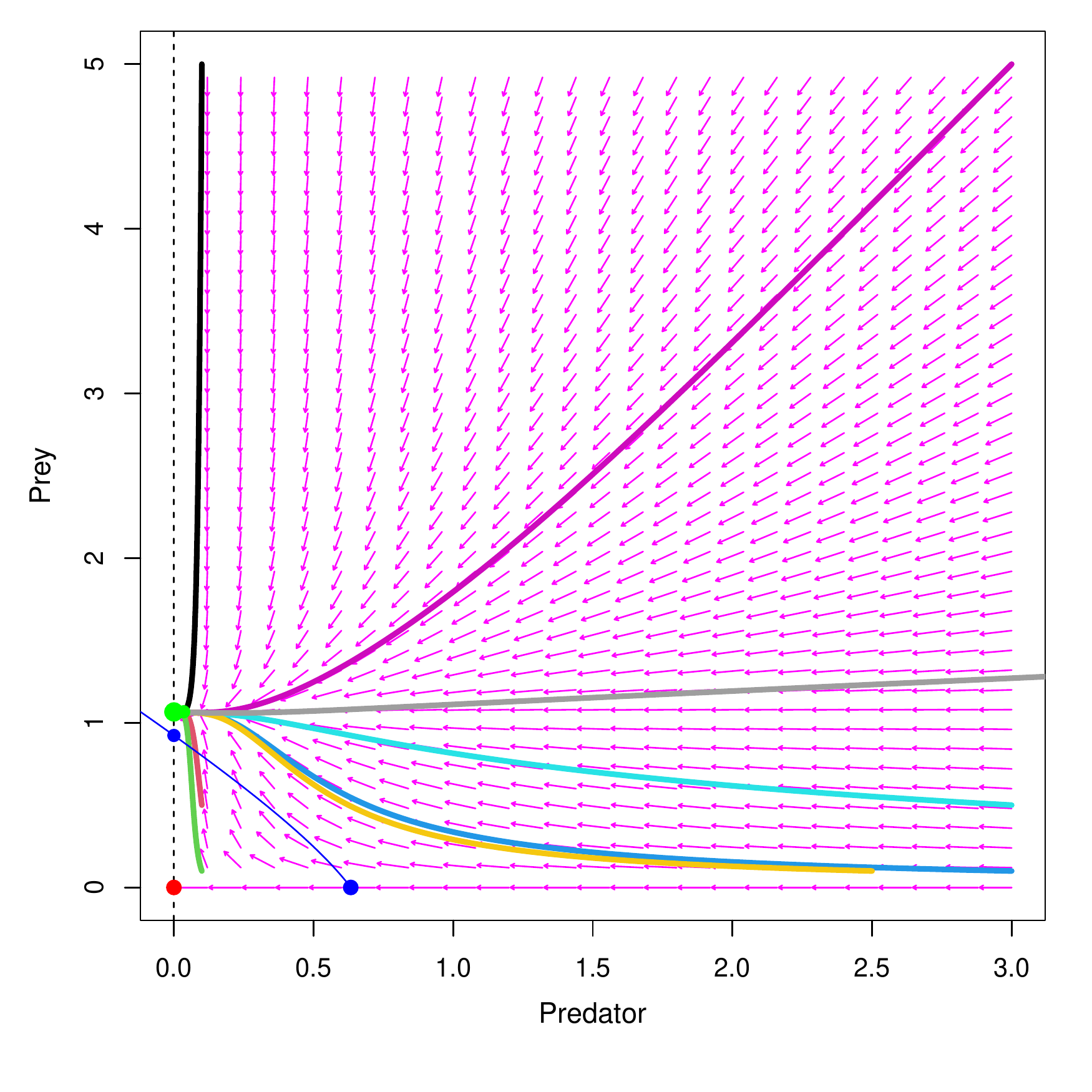}  &   \includegraphics[scale=0.32]{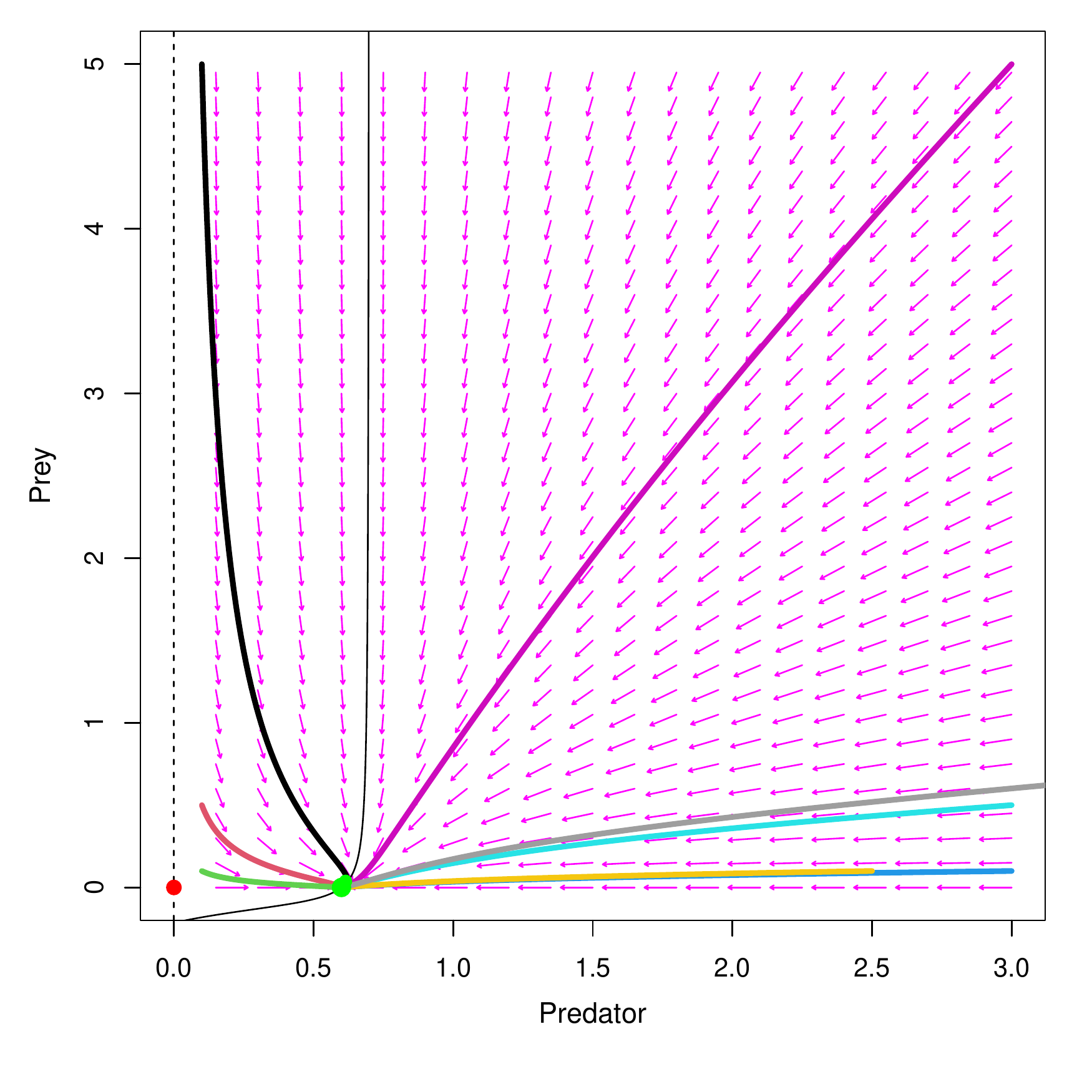} \\
({\bf d}) & ({\bf e}) & \\  
\includegraphics[scale=0.32]{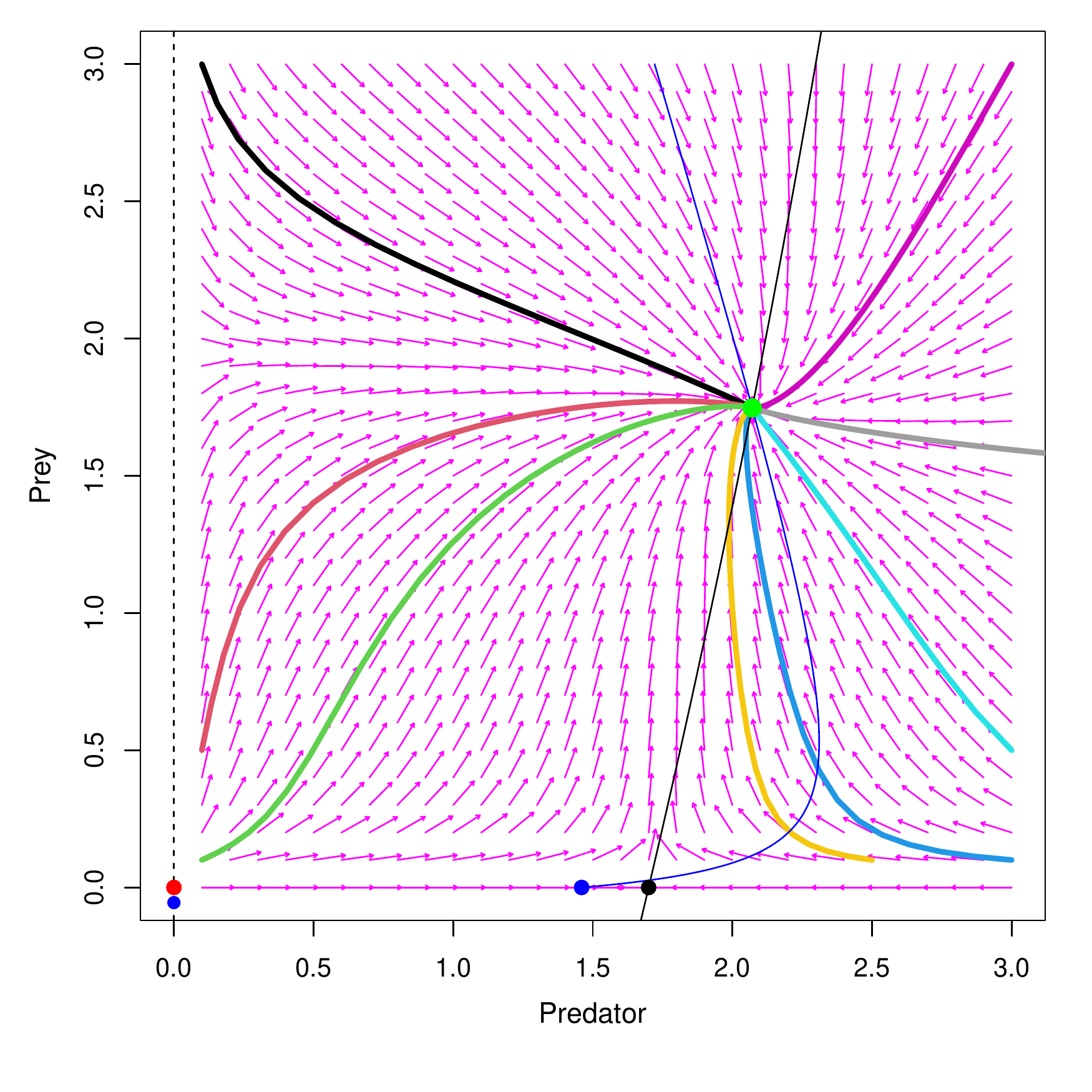} & \includegraphics[scale=0.32]{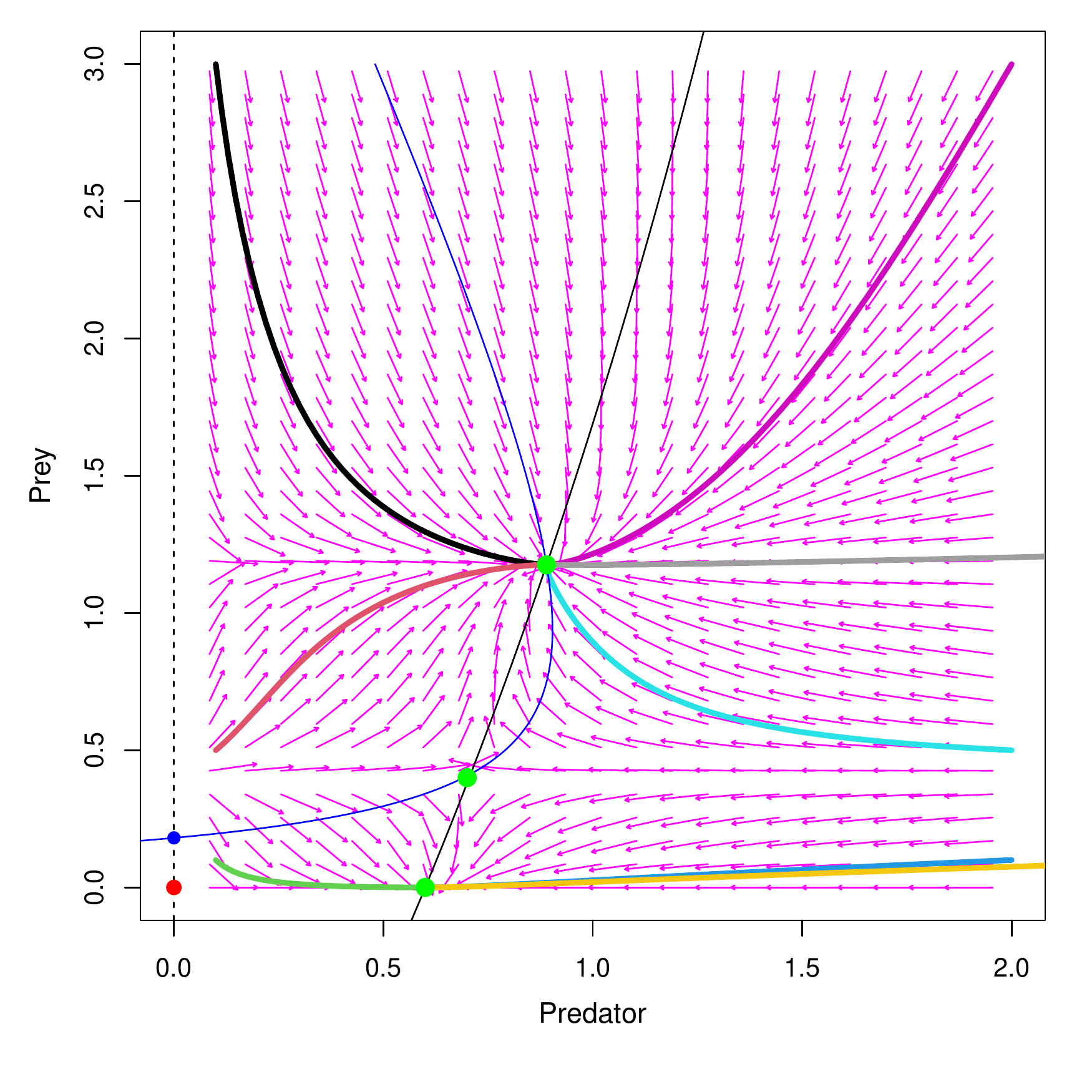}  & 
\end{tabular}
\caption{Figure ({\bf a}) show convergence of trajectories to the origin $E_00$, Figure ({\bf b}) shows convergence to the predator free axial point $E_{0y}$, Figure ({\bf c}) shows convergence to the prey free axial point $E_{x0}$, Figure ({\bf d}) shows convergence to the interior fixed point $E_{xy}$ Figure ({\bf e}) shows two interior fixed points, one stable and one unstable.}
\label{fig-Deterministic}
\end{figure}
%
%
%
%
%

\subsection{Discussion}

We can make the following observation from the study above:\\
\noindent 1) The choice of the Holling functional certainly plays a role in this model. Our motivation for choosing type II rather than type III is that type III is found in population dynamics if the prey density is assumed constant, which is hardly the case here. \\
\noindent 2) Another observation is that our choice of type II functional response is different from the model suggested by Holling at inception. However, our functional response accounts for an important parameter in population dynamics such as the cooperation constant $c$.\\
\noindent 3) Our discussion of global stability is limited to using Lyapounov functional approach. It is important to point out that tools such monotone maps could also be used to prove global stability, see \cite{Balreira2014}.\\
\noindent 4) It  is known that in standard Ricker model with no immigration term, the intrinsic growth rate parameter is a bifurcation parameter (see for example \cite{Kwessi2018_3}), and in some case, the standard Ricker model possesses deterministic chaos. In the current  model, the bifurcations parameters will be $r_1+H_1+\frac{d}{p}$ and $r_2+H_2$ respectively for individual species and it highly likely that this system also possess deterministic chaos. Since this is beyond the investigation we are interested in pursuing, we will not discus it further for sake of brevity.

\section{Stochastic Model} \label{sect:stochatic}

To take into account environmental fluctuations on the species under consideration,  we let 
\begin{itemize}
\item $k\in \N$, we let $t_k=k\Delta t$ for some  $\Delta t>0$.
\item We define $X_k=(x_{_{t_k}},y_{_{t_k}})$ for $k\in \N$.
\item We let the initial condition be $X_0=(x_0,y_0)\in \R^+\times \R^+$.
\item We consider sequences of random variables $\ds \set{N_{j,k}(\Delta t),j=1,2,  k\in \N}$, normally distributed  such that for $j=1,2$ and for all $k\in \N$,
\begin{enumerate}
\item $E[N_{j,k}(\Delta t)]=0$.
\item $E[(N_{j,k}(\Delta t))^2]=\sigma_j^2 \Delta t$ with  for $j=1,2$.
\item $E[(N_{j,k}(\Delta t))^4]=o(\Delta t)$.
\end{enumerate}
\item We will assume that within the interval $[t_k, t_{k+1}]$, $X_k$ is affected by random perturbations $(x_{_{t_k}}N_{1,k}(\Delta t), y_{_{t_k}}N_{2,k}(\Delta t))$.
\end{itemize}
Therefore, for $k\in \N$, we will have 

\[\begin{cases}  x_{_{t_{k+1}}}-x_{t_k}=x_{_{t_k}}N_{1,k}(\Delta t)+x_{t_k}\rb{e^{\ds r_1-x_{t_k}+\frac{d(b+cx_{t_k})x_{t_k}}{1+py_{t_k}(b+cx_{t_k})}+H_1}}\Delta t  \\
 y_{_{t_{k+1}}}-y_{t_k}=y_{_{t_k}}N_{2,k}(\Delta t) +y_{_{t_k}}\rb{e^{\ds r_2-y_{_{t_k}}-\frac{m}{a+y_{_{t_k}}}-\frac{(b+cx_{_{t_k}})y_{_{t_k}}}{1+py_{_{t_k}}(b+cx_{_{t_k}})}+H_2}}\Delta t
 \end{cases}\;,\]
 or in vector form as 
 \[X_{k+1}-X_k=X_k F(X_k)\Delta t+X_kN_k(\Delta t)\;.\]
 From It\^o's Calculus (see for instance \cite{Durrett1996}, if we put $W_{k+1,j}(\Delta t)-W_{k,j}(\Delta t):=N_{k,j}(\Delta t)$ and let   $\Delta t\to 0$, then  the equation above converges to an autonomous stochastic differential equation (SDE)
\begin{equation}\label{eqn:stochastic}
dX(t)=a(X(t))dt+b(X(t))dW(t)\;,
\end{equation}
  with initial condition $X_0=(x_0,y_0)$ and where 
 \[
 a(X(t))=\rb{\begin{array}{l}
 x(t)e^{\ds r_1-x(t)+\frac{d(b+cx(t))y(t)}{1+py(t)(b+cx(t))}+H_1}\\
 y(t)e^{\ds r_2-y(t)-\frac{m}{a+y(t)}-\frac{(b+cx(t))y(t)}{1+py(t)(b+cx(t))}+H_2}
 \end{array}}
 \;,\]
  and $b(X(t))=[x(t)\sigma_1,y(t)\sigma_2]$ represents the per-capita magnitude of environmental fluctuations,   and $W(t)=[W_1(t), W_2(t)]$ is a vector of Wiener processes.
  \begin{remark}
 It would be an important question to ask why the stochastic model  cannot be introduced by adding a ``stochastic" or ``random" term to the deterministic model or even to randomize the model parameters by assuming they are selected from specific probability distributions. While it would be a worthwhile effort theoretically,  the best answer to this question can be found in \cite{Hening2021}, Remark 2.1. Because of its importance, let us restate it here for self-containment: just adding a stochastic fluctuating term to a deterministic model has some short comings because it does not usually give a mechanism on  how different species are influenced by the environment. Instead, following the fundamental work by \cite{Turelli1977},  we see the SDE models as ``approximations for more realistic, but often analytically intractable, models". In particular, SDE's can be seen as scaling limits, or approximations, of  difference equations.
  \end{remark}
   We will be  concerned the existence of global solution, strongly persistence in mean, and the existence of a stationary distribution.
  
\subsection{Existence of global solution}

Let us start by recalling the following  Theorem on the existence of global solutions to a stochastic differential equation, see for instance \cite{Oksendal2014}, Theorem 5.2.1, p. 66.
\begin{thm}
Let $T>0$ and $A:\R^n\times [0,T]\to \R^n,\quad B:\R^n\times [0,T]\to \R^{n\times m}$ be measurable functions. Let  $X_0\in \R^n$ be a random variable such that $\mathbb{E}[\norm{X_0}^2]<\infty$, where $\norm{\cdot}$ is a norm in $\R^n$. Suppose that for given  
$t \in[0,T]$ and  $X(t), Y(t)\in \R^n$, there exist $K_1, K_2>0$ such that
\begin{enumerate}
\item  Linear growth condition:
\[\norm{A(X(t),t)}+\norm{B(X(t),t)}\leq K_1(1+\norm{X(t)})\;.\]
\item  Local Lipschitz Condition:
\[\norm{A(X(t),t)-A(Y(t),t)}+\norm{B(X(t),t)-B(Y(t),t)}\leq K_2\norm{X(t)-Y(t)}\;.\]
\end{enumerate}
Then the stochastic differential equation $dX(t)=A(X(t),t)+B(X(t),t)dW(t)$ with initial condition $X(0)=X_0$ has a unique solution $X(t)$ such that 
\[\mathbb{E}\Intv{\int_0^T \norm{X(t)}^2dt}<\infty\;.\]
\end{thm}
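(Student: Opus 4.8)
The plan is to prove existence and uniqueness by a Picard (successive approximation) scheme carried out in the complete space of continuous, $\{\mathcal{F}_t\}$-adapted processes $X$ on $[0,T]$ equipped with the norm induced by $\mathbb{E}\big[\sup_{0\le t\le T}\norm{X(t)}^2\big]$. First I would set $X^{(0)}(t)=X_0$ and define the iterates
\[
X^{(k+1)}(t)=X_0+\int_0^t A(X^{(k)}(s),s)\,ds+\int_0^t B(X^{(k)}(s),s)\,dW(s)\;.
\]
A preliminary step is to check that each $X^{(k)}$ is well defined and square-integrable: the linear growth condition bounds $\norm{A}$ and $\norm{B}$ by $K_1(1+\norm{X^{(k)}})$, so applying the Cauchy--Schwarz inequality to the drift integral and the It\^o isometry to the diffusion integral shows $\mathbb{E}\big[\norm{X^{(k+1)}(t)}^2\big]<\infty$ whenever $\mathbb{E}\big[\norm{X^{(k)}(t)}^2\big]<\infty$, the base case being guaranteed by $\mathbb{E}[\norm{X_0}^2]<\infty$.

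Next I would establish that $(X^{(k)})$ is Cauchy. Setting $D_k(t)=\mathbb{E}\big[\sup_{0\le s\le t}\norm{X^{(k+1)}(s)-X^{(k)}(s)}^2\big]$, I would subtract consecutive iterates, use the elementary inequality $\norm{u+v}^2\le 2\norm{u}^2+2\norm{v}^2$ to split the drift and diffusion contributions, and then invoke the Lipschitz condition together with the It\^o isometry (and Doob's maximal inequality to handle the supremum of the stochastic integral) to obtain a recursion of the form
\[
D_{k}(t)\le C\int_0^t D_{k-1}(s)\,ds\;,
\]
with $C$ depending only on $K_2$ and $T$. Iterating gives $D_k(t)\le (Ct)^k D_0(T)/k!$, so $\sum_k D_k(T)^{1/2}<\infty$ and the sequence converges to a limit $X$ in the complete space above. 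Passing to the limit in the defining recursion (the Lipschitz bound controls the error in each integral term) shows that $X$ solves the equation, while the linear growth condition combined with Gronwall's inequality yields the stated bound $\mathbb{E}\big[\int_0^T\norm{X(t)}^2\,dt\big]<\infty$.

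For uniqueness I would take two solutions $X,Y$ with the same initial datum, set $\Delta(t)=\mathbb{E}\big[\norm{X(t)-Y(t)}^2\big]$, and repeat the subtraction estimate: the Lipschitz condition and It\^o isometry give $\Delta(t)\le C\int_0^t \Delta(s)\,ds$, so Gronwall's inequality forces $\Delta\equiv 0$, that is $X=Y$ almost surely.

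The main obstacle is the careful treatment of the stochastic integral throughout: one must verify that every iterate is progressively measurable so that each It\^o integral is defined, and every moment estimate on the diffusion term must be routed through the It\^o isometry rather than a pathwise bound, since $B(X^{(k)},s)$ is only $L^2$ in general. Once this interplay between the growth/Lipschitz hypotheses and the It\^o isometry is set up correctly, the remaining steps are the standard Picard--Gronwall machinery.
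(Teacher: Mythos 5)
Your proposal is correct: it is the classical Picard iteration argument (successive approximations, It\^o isometry plus Doob's maximal inequality for the martingale term, factorial decay of the increments, and Gronwall for both the a priori bound and uniqueness). Note, however, that the paper does not prove this statement at all—it is recalled verbatim from the cited reference (\cite{Oksendal2014}, Theorem 5.2.1), and the proof given there is essentially the same Picard--Gronwall scheme you describe, so your attempt matches the standard proof behind the citation rather than any argument original to the paper.
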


\begin{thm}
For any initial value  $X_0=(x_0,y_0)\in \R^+\times \R^+$, the above stochastic system has a unique positive global solution $X(t)=(x(t),y(t))\in \R^+\times \R^+$.
\end{thm}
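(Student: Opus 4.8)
The plan is to adapt the classical Khasminskii non-explosion argument, since the preceding existence theorem cannot be invoked directly: the drift $a(\cdot)$ and diffusion $b(\cdot)$ violate the global Lipschitz and linear growth hypotheses on the closed quadrant (the need to keep $x,y>0$ and the logarithmic behaviour near the axes rule out a naive application). What does hold is that on the open quadrant $(0,\infty)\times(0,\infty)$ both coefficients are smooth, hence locally Lipschitz, so standard theory furnishes a unique maximal local solution $X(t)=(x(t),y(t))$ up to an explosion time $\tau_e$, where explosion means that $X(t)$ leaves every compact subset of $(0,\infty)^2$ (a component tends to $0$ or to $+\infty$). The entire task thus reduces to showing $\tau_e=\infty$ almost surely. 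The structural fact that makes this tractable is that the drift is globally bounded: since $\ds h(x,y)=\frac{(b+cx)y}{1+py(b+cx)}\le\frac1p$ for all $x,y\ge0$, the exponents in $a(\cdot)$ are bounded above, and using $ue^{-u}\le e^{-1}$ one obtains
\[
0\le a_1(x,y)\le \frac{\alpha}{e},\qquad 0\le a_2(x,y)\le\frac{\beta}{e},
\]
with $\alpha=e^{r_1+H_1+d/p}$ and $\beta=e^{r_2+H_2}$ exactly as in Theorem~\ref{thm:GlobalStat}.

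Next I would introduce the Lyapunov function
\[
V(x,y)=(x-1-\ln x)+(y-1-\ln y),
\]
which is $C^2$ and nonnegative on $(0,\infty)^2$ and, crucially, tends to $+\infty$ both as a coordinate approaches $0$ and as a coordinate approaches $+\infty$; this single function therefore penalises extinction (hitting an axis) and explosion at once. Writing $\mathcal{L}$ for the generator of \eqref{eqn:stochastic} and using $\partial_xV=1-1/x$, $\partial_{xx}V=1/x^2$ (and symmetrically in $y$) together with the diagonal diffusion $b=(x\sigma_1,y\sigma_2)$, a direct computation gives
\[
\mathcal{L}V=a_1\rb{1-\frac1x}+a_2\rb{1-\frac1y}+\frac{\sigma_1^2+\sigma_2^2}{2}\le \frac{\alpha+\beta}{e}+\frac{\sigma_1^2+\sigma_2^2}{2}=:K,
\]
where I discard the nonnegative terms $a_1/x$ and $a_2/y$ and insert the drift bounds above. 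The compartmentalised growth of the model forces $\mathcal{L}V$ to be bounded by a constant, which is even stronger than the usual estimate $\mathcal{L}V\le K(1+V)$.

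Finally I would run the standard localisation. For large $n$ set $\tau_n=\inf\set{t\ge0:x(t)\notin(1/n,n)\ \text{or}\ y(t)\notin(1/n,n)}$, so that $\tau_n\uparrow\tau_e$. By Dynkin's formula applied to $V$ on $[0,t\wedge\tau_n]$ and the bound on $\mathcal{L}V$,
\[
\E\sbb{V(X(t\wedge\tau_n))}\le V(X_0)+Kt.
\]
On the event $\set{\tau_n\le t}$ some coordinate of $X(\tau_n)$ equals $n$ or $1/n$, so $V(X(\tau_n))\ge a_n:=\min\set{n-1-\ln n,\ \tfrac1n-1+\ln n}\to\infty$. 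Hence $a_n\,\mathbb{P}(\tau_n\le t)\le V(X_0)+Kt$, which gives $\mathbb{P}(\tau_n\le t)\to0$ and therefore $\mathbb{P}(\tau_e\le t)=0$ for every $t$; letting $t\to\infty$ yields $\tau_e=\infty$ a.s. Since the solution never leaves $(0,\infty)^2$ before $\tau_e$, this proves that $X(t)$ is global and remains in $\R^+\times\R^+$.

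The main obstacle is conceptual rather than computational: it lies in selecting a single $V$ that controls both boundaries of the quadrant simultaneously, and in first extracting the boundedness of the drift (via $h\le1/p$ and $ue^{-u}\le e^{-1}$) so that $\mathcal{L}V$ is dominated by a constant. Without that preliminary observation one would obtain only a $K(1+V)$ bound and would have to run the slightly more delicate Gronwall variant of the localisation argument to reach the same conclusion.
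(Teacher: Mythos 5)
Your proof is correct, but it takes a genuinely different route from the paper's. The paper disposes of the theorem by verifying the hypotheses of the quoted existence result (Theorem 5.2.1 of \cite{Oksendal2014}): since $f_1\le e^{r_1+H_1+d/p}$ and $f_2\le e^{r_2+H_2}$, the drift has (at most) linear growth, the diffusion $(x\sigma_1,y\sigma_2)$ is linear, and both are globally Lipschitz via the Mean Value Theorem, so a unique solution exists on $[0,T]$ for every $T$. You instead run the Khasminskii localisation: a unique maximal local solution on the open quadrant, the Lyapunov function $V(x,y)=(x-1-\ln x)+(y-1-\ln y)$ that blows up at both boundaries, the constant bound $\mathcal{L}V\le \frac{\alpha+\beta}{e}+\frac{\sigma_1^2+\sigma_2^2}{2}$ (your use of $h\le 1/p$ and $ue^{-u}\le e^{-1}$ is exactly right), Dynkin's formula, and $a_n\,\mathbb{P}(\tau_n\le t)\le V(X_0)+Kt$ to force $\tau_e=\infty$ a.s.; each of these steps is correctly executed. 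What each approach buys: the paper's argument is much shorter, but it only yields a unique global solution of an SDE on $\R^2$ and never addresses why that solution stays in $\R^+\times\R^+$, which is part of the theorem's claim; your argument is longer but proves precisely this positivity, since your $V$ penalises the axes as well as infinity, so the solution can neither explode nor reach the boundary. One correction to your framing, though: your stated reason for abandoning the classical theorem is not accurate. On the closed quadrant the coefficients do satisfy linear growth (the drift is even bounded) and are globally Lipschitz (bounded derivatives); there is no ``logarithmic behaviour near the axes'' in the coefficients themselves. The genuine deficiency of the naive application is only the invariance/positivity issue that your Lyapunov function resolves, so your approach is best presented not as a repair of failed hypotheses but as supplying the positivity statement that the paper's proof leaves unproved.
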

\begin{proof}
For $n=2$, consider the above non-autonomous stochastic differential equation \[dX(t)=a(X(t))+b(X(t))dW(t)\;.\]
Put $X(0)=X_0=(x_0,y_0)$, for given $x_0,y_0\in \R^+$. Let $T>0$ be arbitrary.\\
Clearly, $\ds\mathbb{E}[\norm{X_0}^2]=\norm{X_0}^2<\infty$.
Recall that 
\[a(X(t))=[x(t)f_1(x(t),y(t)),y(t)f_2(x(t),y(t)];
 \quad b(X(t))=[x(t)\sigma_1,y(t)\sigma_2]\;.\]
 $\ds f_1(x(t),y(t))=e^{\ds r_1-x(t)+\frac{d(b+cx(t))y(t)}{1+py(t)(b+cx(t))}+H_1}\leq e^{r_1+H_1+\frac{d}{p}}$.\\
 $\ds f_2(x(t),y(t))=e^{\ds r_2-y(t)-\frac{m}{a+y(t)}-\frac{(b+cx(t))y(t)}{1+py(t)(b+cx(t))}+H_2}\leq e^{r_2+H_2}$.\\
 Using the Mean Value Theorem, we can choose \[K_1=K_2=\max\set{e^{r_1+H_1+\frac{d}{p}}+\sigma_1,e^{r_2+H_2}+\sigma_2}\;.\]
 Since this is true for any $T>0$, then we conclude that the solution exists and is global.

\end{proof}

\subsection{Strong Persistence in Mean}

Let us start by recalling the notion of strong persistence in mean and an important lemma used to prove that a stochastic differential equation is strongly persistent in mean.
\begin{defn}
Let $X(t)$ be the solution to a stochastic differential equation.
Suppose that for all $t>0$, the  normalized occupational measure or mean satisfies
\[\scp{X(t)}=\frac{1}{t}\int_0^tX(u)du<\infty\;.\]
$X(t)$ is said to be strongly persistent in mean if 
\[\lim_{t \to \infty} \inf \scp{X(t)}>0 \;.\]
\end{defn}
\noindent To prove that the solution of a stochastic differential equation is strongly persistent in mean, the following result is often used.
\begin{lemma} \label{lemMeanPers}
Let $W_i$ be  a Wiener Process for all $1\leq i\leq n$. If $X(t)\in (C[0,\infty], \R^+)$ and there exist positive constants $\gamma_1, \gamma_2$ and $T>0$ such that 
\[ \ds \ln(X(t))\geq \gamma_1 t-\gamma_{2} \int_0^tX(u)du+\sum_{i=1}^n\sigma_iW_i(t), \quad \mbox{\bf for $t\geq T$}\;,\]
then by the Strong Law of Large numbers, 
\[\lim_{t \to \infty} \inf\scp{X(t)}\geq \frac{\gamma_1}{\gamma_2}, \quad \mbox{almost surely}\;.\]
\end{lemma}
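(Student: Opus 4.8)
The plan is to turn the integral inequality into a first-order differential inequality for the antiderivative of $X$ and then solve it with an integrating factor. First I would set $\phi(t)=\int_0^t X(u)\,du$, so that $\scp{X(t)}=\phi(t)/t$, and note that because $X$ is continuous and strictly positive on $[0,\infty)$, the function $\phi$ is $C^1$ with $\phi'(t)=X(t)>0$; in particular $\ln X(t)=\ln\phi'(t)$ is well defined, which is what makes the hypothesis meaningful. Exponentiating the assumed bound gives, for $t\geq T$,
\[
\phi'(t)\geq \exp\!\Big(\gamma_1 t-\gamma_2\phi(t)+\textstyle\sum_{i=1}^n\sigma_iW_i(t)\Big).
\]
Multiplying by $\gamma_2\,e^{\gamma_2\phi(t)}$ and recognizing the left-hand side as an exact derivative yields $\frac{d}{dt}\,e^{\gamma_2\phi(t)}\geq \gamma_2\exp\!\big(\gamma_1 t+\sum_{i=1}^n\sigma_iW_i(t)\big)$. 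Integrating from $T$ to $t$ and discarding the nonnegative boundary term $e^{\gamma_2\phi(T)}$ then gives
\[
e^{\gamma_2\phi(t)}\geq \gamma_2\int_T^t \exp\!\Big(\gamma_1 s+\textstyle\sum_{i=1}^n\sigma_iW_i(s)\Big)\,ds .
\]

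I expect the main obstacle to be the pathwise control of this stochastic integral, since its integrand depends on the whole Brownian trajectory rather than on a single value. The key input is the Strong Law of Large Numbers for Wiener processes, namely $W_i(t)/t\to 0$ almost surely, which gives $\tfrac1t\sum_{i=1}^n\sigma_iW_i(t)\to 0$ a.s. I would fix a sample path outside the relevant null set and an arbitrary $\eps\in(0,\gamma_1)$; then there exists a (path-dependent) $T_\eps\geq T$ such that $\sum_{i=1}^n\sigma_iW_i(s)\geq -\eps s$ for all $s\geq T_\eps$. The delicate point is precisely that $T_\eps$ depends on $\omega$, but this is harmless for an almost-sure conclusion. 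With this lower bound the integrand is bounded below by $e^{(\gamma_1-\eps)s}$, so for $t\geq T_\eps$,
\[
e^{\gamma_2\phi(t)}\geq \gamma_2\int_{T_\eps}^t e^{(\gamma_1-\eps)s}\,ds=\frac{\gamma_2}{\gamma_1-\eps}\Big(e^{(\gamma_1-\eps)t}-e^{(\gamma_1-\eps)T_\eps}\Big).
\]

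Finally I would extract the asymptotics. Taking logarithms, $\gamma_2\phi(t)\geq \ln\frac{\gamma_2}{\gamma_1-\eps}+\ln\!\big(e^{(\gamma_1-\eps)t}-e^{(\gamma_1-\eps)T_\eps}\big)$, and for large $t$ the last term equals $(\gamma_1-\eps)t+\ln\!\big(1-e^{-(\gamma_1-\eps)(t-T_\eps)}\big)$, whose correction tends to $0$. Dividing by $\gamma_2 t$ and letting $t\to\infty$ gives $\liminf_{t\to\infty}\scp{X(t)}=\liminf_{t\to\infty}\phi(t)/t\geq (\gamma_1-\eps)/\gamma_2$ almost surely. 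Since $\eps\in(0,\gamma_1)$ is arbitrary, letting $\eps\downarrow 0$ yields $\liminf_{t\to\infty}\scp{X(t)}\geq \gamma_1/\gamma_2$ almost surely, as claimed. The only genuinely probabilistic ingredient is the SLLN for $W_i$; everything else is deterministic once a generic path is fixed, so the argument is essentially a calculus estimate carried out pathwise.
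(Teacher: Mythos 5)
Your proof is correct, and it is the standard integrating-factor argument for this lemma: setting $\phi(t)=\int_0^t X(u)\,du$, recognizing $\gamma_2 e^{\gamma_2\phi(t)}\phi'(t)$ as an exact derivative, integrating, and then using the strong law of large numbers for Brownian motion ($W_i(t)/t\to 0$ a.s.) to bound the resulting stochastic integral pathwise. Note that the paper itself offers no proof of this lemma --- it is recalled as a known tool from the stochastic persistence literature --- so your argument supplies exactly the classical proof that the paper implicitly relies on; in particular you handle the two delicate points correctly (the path-dependence of $T_\eps$, which is harmless for an almost-sure statement, and the fact that the exceptional null set comes only from the SLLN and hence does not depend on $\eps$, so letting $\eps\downarrow 0$ is legitimate).
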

\noindent Now we can state the result on mean persistence pertaining to our system:
\begin{thm}\label{thm:MeanPersistence}
Let  $X(t)$ be the solution of the stochastic differential equation \eqref{eqn:stochastic}.
If \[\frac{1}{2}\max\set{\sigma_1^2e^{-r_1-H_1},\sigma_2^2e^{\frac{m}{a}+\frac{1}{p}-r_2-H_2}}<1\;,\] then $X(t)$ is strongly persistent in mean. 
\end{thm}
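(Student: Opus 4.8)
The plan is to apply It\^o's formula to $\ln x(t)$ and $\ln y(t)$ separately, bring each into the exact form required by Lemma~\ref{lemMeanPers}, and then conclude the strong persistence in mean of the vector $X(t)$ componentwise. Writing \eqref{eqn:stochastic} as $dx = x f_1(x,y)\,dt + \sigma_1 x\,dW_1$ and $dy = y f_2(x,y)\,dt + \sigma_2 y\,dW_2$, It\^o's formula gives $d\ln x = \bigl(f_1 - \tfrac12\sigma_1^2\bigr)\,dt + \sigma_1\,dW_1$ and $d\ln y = \bigl(f_2 - \tfrac12\sigma_2^2\bigr)\,dt + \sigma_2\,dW_2$. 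Everything then reduces to producing affine lower bounds $f_1 \ge \gamma_1^x - \gamma_2^x\, x$ and $f_2 \ge \gamma_1^y - \gamma_2^y\, y$ with positive constants, since integrating over $[0,t]$ immediately yields an inequality of the shape demanded by the lemma.

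For the first species I would discard the nonnegative functional-response contribution: because $d,h\ge 0$ one has $f_1 = e^{\,r_1+H_1+dh}\,e^{-x}\ge e^{\,r_1+H_1}e^{-x}$, and the elementary convexity bound $e^{-z}\ge 1-z$ gives $f_1 \ge e^{\,r_1+H_1}(1-x)$. Integrating the It\^o identity then produces
\[\ln x(t) \ge \ln x_0 + \bigl(e^{\,r_1+H_1}-\tfrac12\sigma_1^2\bigr)t - e^{\,r_1+H_1}\int_0^t x(u)\,du + \sigma_1 W_1(t).\]
The additive constant $\ln x_0$ is absorbed by weakening the linear coefficient by an arbitrarily small $\varepsilon>0$ for all $t\ge T$, so Lemma~\ref{lemMeanPers} applies with $\gamma_1 = e^{\,r_1+H_1}-\tfrac12\sigma_1^2$ and $\gamma_2 = e^{\,r_1+H_1}$, giving $\liminf_{t\to\infty}\scp{x(t)} \ge 1 - \tfrac12\sigma_1^2 e^{-r_1-H_1}$, which is positive precisely under the first half of the hypothesis.

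For the second species the structure is identical, but the two negative terms inside the exponent must first be bounded below by constants uniformly in $(x,y)$: since $a+y\ge a$ we have $-\tfrac{m}{a+y}\ge -\tfrac{m}{a}$, and since $h = \tfrac{(b+cx)y}{1+py(b+cx)} < \tfrac1p$ we have $-h > -\tfrac1p$. Hence $f_2 \ge e^{\,r_2+H_2-\frac{m}{a}-\frac1p}e^{-y} \ge e^{\,r_2+H_2-\frac{m}{a}-\frac1p}(1-y)$, and the same integration-and-absorption step lets Lemma~\ref{lemMeanPers} run with $\gamma_1 = e^{\,r_2+H_2-\frac{m}{a}-\frac1p}-\tfrac12\sigma_2^2$ and $\gamma_2 = e^{\,r_2+H_2-\frac{m}{a}-\frac1p}$, yielding $\liminf_{t\to\infty}\scp{y(t)} \ge 1 - \tfrac12\sigma_2^2 e^{\frac{m}{a}+\frac1p-r_2-H_2}$, positive precisely under the second half of the hypothesis. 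Since the stated maximum being below $1$ forces both lower bounds to be strictly positive simultaneously, both coordinates are strongly persistent in mean and the theorem follows.

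I expect the only delicate point to be the choice of constant bounds on the $-\tfrac{m}{a+y}$ and $-h$ terms: it is exactly the replacements of $\tfrac{m}{a+y}$ by $\tfrac{m}{a}$ and of $h$ by $\tfrac1p$ that generate the factor $e^{\frac{m}{a}+\frac1p}$ appearing in the threshold, so these have to be the sharpest constant bounds available for the statement to come out with the displayed constant rather than a weaker one. The remaining care is pure bookkeeping, namely discarding $\ln x_0,\ln y_0$ and the favorable drift contributions only in the correct direction, and making sure the linearization $e^{-z}\ge 1-z$ is applied to the full exponent \emph{after} the constant reductions rather than before.
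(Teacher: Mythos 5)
Your proposal is correct and takes essentially the same route as the paper's own proof: It\^o's formula applied to $\ln x(t)$ and $\ln y(t)$, the sharp constant bounds $\frac{m}{a+y}\le \frac{m}{a}$ and $h\le \frac1p$, the linearization $e^{-z}\ge \mathrm{const}-z$, and then Lemma~\ref{lemMeanPers}. The only cosmetic difference is that the paper keeps a parameter $\theta\in(0,1)$ in the bound $e^{-z}\ge \theta-z$ and uses the hypothesis to guarantee a valid choice of $\theta$, whereas you take $\theta=1$ and absorb the initial-condition constant $\ln x_0$ by an $\varepsilon$-weakening of the drift coefficient --- which is, if anything, slightly more careful than the paper on that bookkeeping point.
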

\begin{remark}  \label{rem:Lyapunov:exp}The above theorem can be proved differently. Indeed, the key of our proof  is showing that the quantity $\ds \lambda_i(\mu):=\int_0^{\infty}\rb{f_i(X(t))-\frac{\sigma_i^2}{2}}\mu(dX)>0$, where $\mu$ is the Lebesgue measure. This is referred to in 
\cite{Hening2022} as the {\it external Lyapunov exponent}, which determines the infinitesimal per-capita growth of species not supported by the measure $\mu$. In that paper, it is shown under certain conditions (see section \ref{sect:StatDist} below) that if $\lambda_i(\delta^*)>0$ (where $\delta^*$ is the Dirac measure concentrated at the origin),  then $X$ is strongly stochastically persistent. 
\end{remark}
\subsection{Existence of Stationary Distributions}\label{sect:StatDist}
 We are now concerned with the existence of a stationary distribution for our model. First,  let us define the notion of stationary distribution and strongly stochastic persistence along the lines of \cite{Hening2021}.
 \begin{defn} The probability measure $\xi$ is an invariant probability measure for a process (or a stochastic differential equation with solution) $X(t)$ if, whenever $X(0)=x$ has distribution $\xi$ , then for any time $t \geq 0$,  the distribution of $X(t)$ is given by $\xi$.
  \end{defn}
 \begin{defn}  Let  $\Sigma$ be a  $\sigma$-algebra  on $\R^{n,\circ}_+:=(0, \infty)^n$ and let  $E\in \Sigma$. We define the total variation norm as  \[\ds \norm{\mu (E)}_{TV}=\sup_{\pi}\set{\sum_{A \in \pi}\abs{\mu(A)}, \quad \mbox{$\pi$ is a countable disjoint partition of $E$}}\;.\]
 A process $X(t)$ with ${\bf x}=X(0)$ is said to be strongly stochastic persistent if it has a unique invariant probability measure $\xi_0$ defined on $\R^{n,\circ}_+$ and 
 \[\lim_{t\to \infty}\norm{\mathbb{P}(t,X,\cdot)-\xi_0(\cdot)}_{TV}=0, \quad \mbox{for ${\bf x}\in \R^{n,\circ}_+$}\;.\]
 
 \end{defn}
\noindent We can now state the result on stationary measures related to our model.

\begin{thm}\label{thm:StatDist}
Let  $X(t)$ be the solution of the stochastic differential equation \eqref{eqn:stochastic}.
If \[\frac{1}{2}\max\set{\sigma_1^2e^{-r_1-H_1},\sigma_2^2e^{\frac{m}{a}+\frac{1}{p}-r_2-H_2}}<1\;,\] 
then the stochastic differential equation above has a unique stationary distribution $\xi_0$ with support on $\R^{2,\circ}_+$ and  \[\lim_{t\to \infty}\norm{\mathbb{P}(t,X,\cdot)-\xi_0(\cdot)}_{TV}=0, \quad \mbox{for ${\bf x}\in \R^{2,\circ}_+$}\;.\]

\end{thm}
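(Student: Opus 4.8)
The plan is to deduce the result from the general theory of strong stochastic persistence for Kolmogorov diffusions developed in \cite{Hening2021} and \cite{Hening2022}, rather than to construct the invariant measure by hand. Writing the system in the Kolmogorov form $dX_i = X_i f_i(X)\,dt + \sigma_i X_i\,dW_i$ (with $X_1=x$, $X_2=y$), that theory yields a unique invariant probability measure $\xi_0$ on $\R^{2,\circ}_+$ together with the total variation convergence claimed, provided three hypotheses hold: (H1) the diffusion is non-degenerate on the interior; (H2) the process is tight, i.e.\ it admits a Lyapunov function forcing return to compact subsets of $\R^{2,\circ}_+$; and (H3) at every ergodic invariant measure supported on the extinction boundary $\partial\R^{2,\circ}_+$, the external Lyapunov exponent of at least one missing species is strictly positive. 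I would organize the proof as the successive verification of (H1)--(H3).

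Hypothesis (H1) is immediate: the diffusion matrix is $\diag(\sigma_1 x,\sigma_2 y)$, invertible whenever $x,y>0$ (recall $\sigma_1,\sigma_2>0$), so the generator is uniformly elliptic on every compact subset of the interior and the H\"ormander condition holds trivially. For (H2) I would exploit the uniform upper bounds $f_1(x,y)\le e^{r_1+H_1+d/p}$ and $f_2(x,y)\le e^{r_2+H_2}$ already established in the proof of the existence of a global solution, together with the fact that $x f_1(x,y)$ and $y f_2(x,y)$ stay bounded on $\R^{2,\circ}_+$ (each factor is of Ricker type and decays to $0$ as the corresponding coordinate grows). Passing to logarithmic coordinates $u=\ln x$, $v=\ln y$, the It\^o correction $-\sigma_i^2/2$ supplies a restoring drift for large coordinates while the positive constants $e^{r_1+H_1}$, $e^{r_2+H_2-m/a}$ supply an outward drift near the boundary; this lets me build a Lyapunov function of the form $V = x + y + x^{-\theta}+y^{-\theta}$ (small $\theta>0$) for which $\mathcal{L}V \le K - cV$ outside a compact set, which is precisely the tightness estimate required.

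The heart of the argument is (H3). The boundary $\partial\R^{2,\circ}_+$ decomposes into the origin and the two open half-axes, and I would classify the ergodic boundary measures accordingly: the Dirac mass $\delta_0$ (the origin is absorbing), the stationary law $\mu_x$ of the predator-only diffusion $dx = x e^{r_1+H_1-x}\,dt + \sigma_1 x\,dW_1$ on the $x$-axis, and the stationary law $\mu_y$ of the prey-only diffusion on the $y$-axis; existence and uniqueness of $\mu_x,\mu_y$ follow from the one-dimensional analogue of (H1)--(H2). I would then compute $\lambda_i(\mu)=\int (f_i-\sigma_i^2/2)\,d\mu$. Because $f_2(x,0)=e^{r_2+H_2-m/a}$ is independent of $x$ and $f_1(0,y)\ge e^{r_1+H_1}$ for all $y$, one gets $\lambda_2(\mu_x)=e^{r_2+H_2-m/a}-\sigma_2^2/2$ and $\lambda_1(\mu_y)\ge e^{r_1+H_1}-\sigma_1^2/2$, while at the origin both $\lambda_1(\delta_0)=e^{r_1+H_1}-\sigma_1^2/2$ and $\lambda_2(\delta_0)=e^{r_2+H_2-m/a}-\sigma_2^2/2$ are available. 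Under the hypothesis $\tfrac12\max\set{\sigma_1^2 e^{-r_1-H_1},\sigma_2^2 e^{m/a+1/p-r_2-H_2}}<1$ each of these is strictly positive — indeed the stated bound carries an extra factor $e^{1/p}$ of slack coming from the crude estimate of the Holling term, so it is stronger than what (H3) strictly needs — and hence every boundary ergodic measure has a missing species with positive growth rate. This is exactly the positivity of the external Lyapunov exponent highlighted in Remark \ref{rem:Lyapunov:exp}.

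With (H1)--(H3) in place, the cited persistence theorem delivers the unique invariant measure $\xi_0$ on $\R^{2,\circ}_+$ and the convergence $\lim_{t\to\infty}\norm{\mathbb{P}(t,X,\cdot)-\xi_0(\cdot)}_{TV}=0$, completing the proof. The main obstacle I anticipate is not the Lyapunov-exponent bookkeeping of (H3), which is short once $f_2(x,0)$ is seen to be constant in $x$, but rather the rigorous verification of the tightness estimate (H2): the drifts $x f_1$ and $y f_2$ vanish at infinity instead of providing a strong inward push, so the confinement must be extracted from the It\^o term in logarithmic coordinates, and the Lyapunov function must simultaneously control escape to the boundary (the $x^{-\theta},y^{-\theta}$ terms) and escape to infinity. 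Some care is also needed to confirm that the list $\set{\delta_0,\mu_x,\mu_y}$ exhausts the ergodic measures on $\partial\R^{2,\circ}_+$.
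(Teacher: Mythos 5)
Your overall route is the same as the paper's: both arguments reduce the theorem to Theorem 2.1 of \cite{Hening2021} and verify its hypotheses, namely nondegeneracy of the diffusion on the interior (the paper's $A_1$ = your (H1)), a regularity/tightness condition (the paper's $A_2$--$A_3$ = your (H2)), and positivity of external Lyapunov exponents at boundary ergodic measures (your (H3)). In fact your (H3) is carried out \emph{more} carefully than in the paper, which disposes of the boundary condition in a single sentence by pointing to Remark \ref{rem:Lyapunov:exp}: your classification of the boundary ergodic measures into $\set{\delta_0,\mu_x,\mu_y}$, and the computations $\lambda_2(\mu_x)=e^{r_2+H_2-m/a}-\sigma_2^2/2$ and $\lambda_1(\mu_y)\ge e^{r_1+H_1}-\sigma_1^2/2$ (using $h(x,0)=0$ and $h(0,y)\ge 0$), are correct, as is your observation that the stated hypothesis carries an extra factor $e^{1/p}$ of slack.

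The genuine gap is in (H2). The Lyapunov function you propose, $V=x+y+x^{-\theta}+y^{-\theta}$, cannot satisfy $\mathcal{L}V\le K-cV$ outside a compact set, and it does not even satisfy the weaker Khasminskii requirement $\mathcal{L}V\le-\delta$ there. Indeed, take $y=1$ fixed and $x\to\infty$: then $xf_1(x,1)\le xe^{r_1+H_1+d/p-x}\to 0$, the $x^{-\theta}$ contribution tends to $0$, and $f_2(x,1)\to e^{r_2+H_2-1-m/(a+1)-1/p}>0$, so $\mathcal{L}V$ converges to the strictly positive constant $(1-\theta)e^{r_2+H_2-1-m/(a+1)-1/p}+\tfrac{\theta(\theta+1)}{2}\sigma_2^2$, while $K-cV\to-\infty$. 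The failure is structural, and it is exactly the tension you yourself flag: the confinement at infinity comes only from the It\^o correction $-\sigma_i^2/2$ visible in logarithmic coordinates, but a linear leading term is blind to it, since $\mathcal{L}(x)=xf_1$ contains no second-derivative term at all and the drifts $xf_1,\,yf_2$ are nonnegative, bounded, and vanish at infinity rather than pointing inward. The repair is small but necessary: replace the linear part by sublinear powers, e.g.\ $V=x^{\epsilon}+y^{\epsilon}+x^{-\theta}+y^{-\theta}$ with $0<\epsilon<1$, for which $\mathcal{L}(x^{\epsilon})=\epsilon x^{\epsilon}\rb{f_1-\tfrac{(1-\epsilon)\sigma_1^2}{2}}$; since $x^{\epsilon}f_1\to 0$, the It\^o term $-\tfrac{\epsilon(1-\epsilon)\sigma_1^2}{2}x^{\epsilon}$ dominates at infinity, while the $x^{-\theta},y^{-\theta}$ parts work near the boundary exactly as you intended under the stated hypothesis. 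Alternatively, one can do what the paper does and verify the integral condition $A_3$ of \cite{Hening2021} directly: its first two terms are precisely $\mathcal{L}\ln\rb{1+c_1x+c_2y}$, i.e.\ the hypothesis of the cited theorem is already formulated in the logarithmic coordinates you describe, so no coercive bound of the form $K-cV$ ever needs to be produced.
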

\begin{remark}
The proof relies of checking the key assumptions of Theorem 2.1 in \cite{Hening2021}. This theorem is quite strong in that it proves strong stochastic persistence, which is a stronger notion than persistence in probability and almost sure persistence in probability. Another advantage of Theorem 2.1 is that it is easy to extend it to multiple species.
Often, to show existence of a stationary distribution, one would  show almost sure persistence as in  the proposition below. However, it is not easier to check the second part of the proposition, which technically amounts to proving   Theorem 2.1 in \cite{Hening2021} from scratch. 
\end{remark}
\begin{proposition}
The Markov process $X(t)$ has  a unique stationary distribution $\xi$ if there is a bounded domain $V\in \R^n$ with a regular boundary $\Sigma$ for which the following are hold true:
\begin{enumerate}
\item In the domain $V$ and some of its neighborhood, the smallest eigenvalue of the associated diffusion matrix $\Gamma(x)$ is far from zero.
 \item If ${\bf x}\in \R^n\setminus V$, the mean time $\tau({\bf x})$ required for any path emerging from ${\bf x}$ reaching the set $V$ is finite and $\ds \sup_{{\bf x}\in U}\; \mathbb{E}[\tau({\bf x})]<\infty$ for every compact set $U\subset \R^n$, then for any integrable function $f$ with respect to the measure $\xi$, 
 \[\mathbb{P}\rb{\lim_{T\to \infty}\frac{1}{T}\int_0^T f(X(t))dt=\int_{\R^n}f(x)\xi(dx)}=1\quad \mbox{for all $x\in \R^n$}\;.\]
\end{enumerate}
\end{proposition}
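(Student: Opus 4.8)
The plan is to recognize this statement as the classical Khasminskii positive-recurrence criterion and to prove it by combining the two ingredients supplied by the hypotheses: the uniform nondegeneracy of the diffusion matrix on $V$, which furnishes the regularity and irreducibility of the process, and the uniformly bounded mean hitting time of $V$, which furnishes positive recurrence. Together these make $X(t)$ a Harris-recurrent diffusion, for which existence, uniqueness, and the ergodic theorem follow from the general theory.

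First I would exploit condition 1. Since the smallest eigenvalue of $\Gamma(x)$ is bounded away from zero on $V$ and a neighborhood, the generator is uniformly elliptic there, and standard parabolic estimates give a smooth, strictly positive transition density on that region. Consequently $X(t)$ is strong Feller and irreducible (it reaches every open subset of $V$ with positive probability), so every compact subset of the accessible region is a small set for the sampled skeleton chain. Next I would use condition 2: writing $\tau$ for the first return time to $V$, the bound $\sup_{{\bf x}\in U}\mathbb{E}[\tau({\bf x})]<\infty$ on compacts says the excursions away from $V$ have finite expected length, i.e. the process is positively recurrent. I would then build the invariant measure as the normalized occupation measure of a single excursion,
\[
\xi(A)=\frac{1}{\mathbb{E}_{\nu}[\tau]}\,\mathbb{E}_{\nu}\!\Intv{\int_0^{\tau}\mathbf{1}_A(X(t))\,dt},
\]
where $\nu$ is the stationary law of the return chain on a small set inside $V$; finiteness of $\mathbb{E}_{\nu}[\tau]$ makes $\xi$ a finite measure, and the strong Markov property at the return times shows it is invariant, so after normalization $\xi$ is an invariant probability measure.

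Uniqueness and the pathwise conclusion then follow from Harris recurrence. Irreducibility from the first step forces any two invariant probability measures to agree, so $\xi$ is the unique stationary distribution. For the ergodic identity I would use the regenerative structure induced by successive visits to the fixed small set: these visits cut the trajectory into i.i.d.\ excursion cycles, over each of which the reward $\int f(X(t))\,dt$ has finite mean whenever $f$ is $\xi$-integrable. The renewal-reward theorem --- equivalently, the strong law of large numbers applied to the i.i.d.\ cycle rewards and cycle lengths --- then gives
\[
\mathbb{P}\rb{\lim_{T\to\infty}\frac{1}{T}\int_0^T f(X(t))\,dt=\int_{\R^n}f(x)\,\xi(dx)}=1
\]
from every starting point, as claimed.

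The main obstacle is this last step. Converting positive recurrence into an almost-sure ergodic theorem requires a genuine regeneration, namely that the excursions between visits to the small set be i.i.d., which for a continuous-time diffusion is obtained through a split-chain (Nummelin-type) construction on the sampled skeleton, together with a check that $\xi$-integrability of $f$ yields integrable cycle rewards. The ellipticity hypothesis (condition 1) and the recurrence hypothesis (condition 2) are precisely what make that construction succeed, so in practice this is the point at which I would invoke the general results of Khasminskii (or Meyn and Tweedie) rather than reassemble the renewal machinery from scratch.
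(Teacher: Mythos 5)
Your proposal is essentially correct, but you should know that the paper itself offers no proof of this proposition: it is quoted as a classical criterion (it is Khasminskii's positive-recurrence theorem, the standard tool in the stochastic population-dynamics literature), and the remark preceding it makes clear that the author's actual stationary-distribution result (Theorem \ref{thm:StatDist}) is obtained instead by verifying the hypotheses of Theorem 2.1 of \cite{Hening2021}, precisely because checking condition 2 of this proposition is considered the hard part. What you have written is a faithful outline of the classical proof of that criterion: local uniform ellipticity on $V$ yields a minorization (small set) for the chain of returns to $V$; the uniformly bounded mean hitting times yield positive recurrence; the invariant measure is the normalized expected occupation measure over a regeneration cycle; and uniqueness plus the almost-sure ergodic identity follow from the regenerative structure via the renewal-reward theorem. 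Two points deserve flagging. First, since you ultimately invoke Khasminskii (or Meyn--Tweedie) for the split-chain/regeneration step, and the proposition being proved \emph{is} Khasminskii's theorem, your argument is closer to a structured reduction to the literature than an independent proof --- which is defensible, and is in fact exactly the level at which the paper treats the statement. Second, a regeneration cycle consists of an excursion away from $V$ \emph{and} a leg inside or near $V$; condition 2 bounds only the former, and finiteness of the expected exit time from the bounded neighborhood of $V$ (which follows from the nondegeneracy in condition 1) is a small but necessary ingredient that your sketch leaves implicit, since without it the cycle length could fail to be integrable.
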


\subsection{Simulations}
To simulate the behavior of the stochastic model on an interval $[0,T]$ for a given $T>0$, we will use the following algorithm due to \cite{Milstein1975}.
\begin{enumerate}
\item Select an initial condition $X_0=(x_0,y_0)$
\item Select and integer $N>1$.
\item Partition the interval $[0,T]$ as $0=t_0<t_1<t_2<\cdots<t_N=T$ with $t_k=k\Delta t$ with $ \ds \Delta t= \frac{T}{N}$.
\item Use Milstein Higher Order Scheme, that is, for $0\leq k\leq N$.
\[
\begin{cases}
x_{k+1}=&x_k+x_kf_1(x_k,y_k)\Delta t+x_k\sigma_1\Delta W_{1,k}\sqrt{\Delta t}+x_k\frac{\sigma_1^2}{2}[(\Delta W_{1,k})^2-\Delta t]\\
y_{k+1}=&y_k+y_kf_2(x_k,y_k)\Delta t+y_k\sigma_2\Delta W_{2,k}\sqrt{\Delta t}+y_k\frac{\sigma_2^2}{2}[(\Delta W_{2,k})^2-\Delta t]
\end{cases}\;.
\]
\noindent where $\Delta W_{i,k}=W_{k+1}-W_{k}$ are  independent  normally distributed random variables with zero mean and variance $\Delta t$.
\end{enumerate}
In all figures below, we consider $N=100, T=1$, and the model parameters corresponding to a two interior fixed points for the deterministic model: 
$r_1=0.5, r_2=0.1, H_1=1.2,  H_2=1.8,  m=0.4,  a=0.1,  b=0.9,  c=0.01,  d=0.3,  p=0.1$.\\
In the  first panel on the right, we plot the phase space diagram of the deterministic model  overlayed with 100 random trajectories   with five initial  points  with  coordinates  given by $X_0=(0.25, 0.25, 2.75, 2.75, 2.75), Y_0=(2, 0.25, 2, 0.25, 0.1)$. The deterministic trajectories are represented by the solid thick red curves.
The second  and fifth panels are histograms of stationary distributions of prey and predator respectively. The third and fourth panels represent the levels curves  and  three dimensional representation of the stationary distribution to have more perspective. In the third plots, the black lines represent respectively the estimated sample mean of densities of both predator and prey. What they show is how stochasticity shifts the stable fixed points of deterministic models.

\subsubsection{Low stochasticity  on both species}

  \begin{figure}[H]
  \centering  \resizebox*{5.9in}{!}{
 \begin{tabular}{ccc}
\includegraphics[scale=0.29]{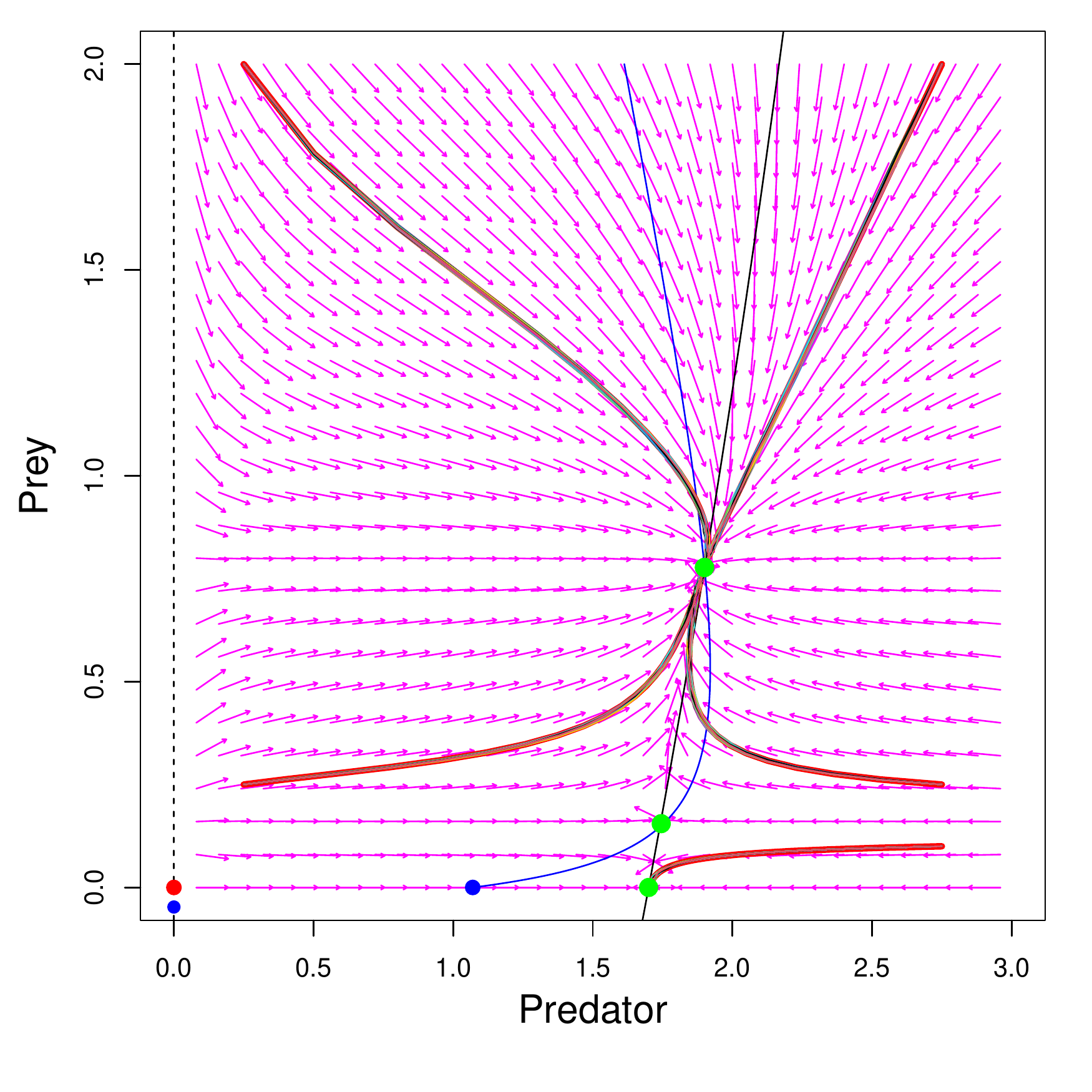} 
 &  \includegraphics[scale=0.3, angle =90]{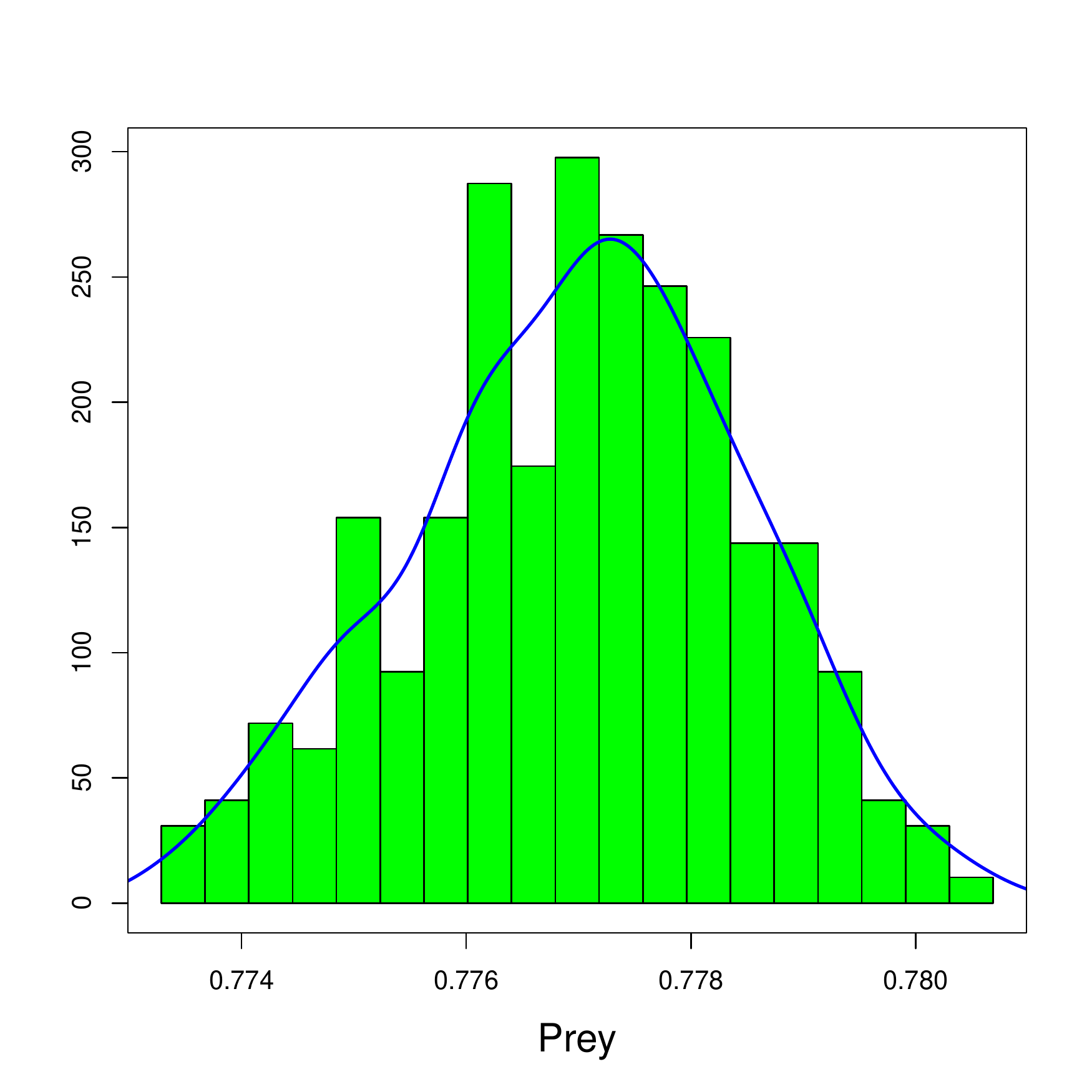} &   \includegraphics[width=2in, height=2.05in]{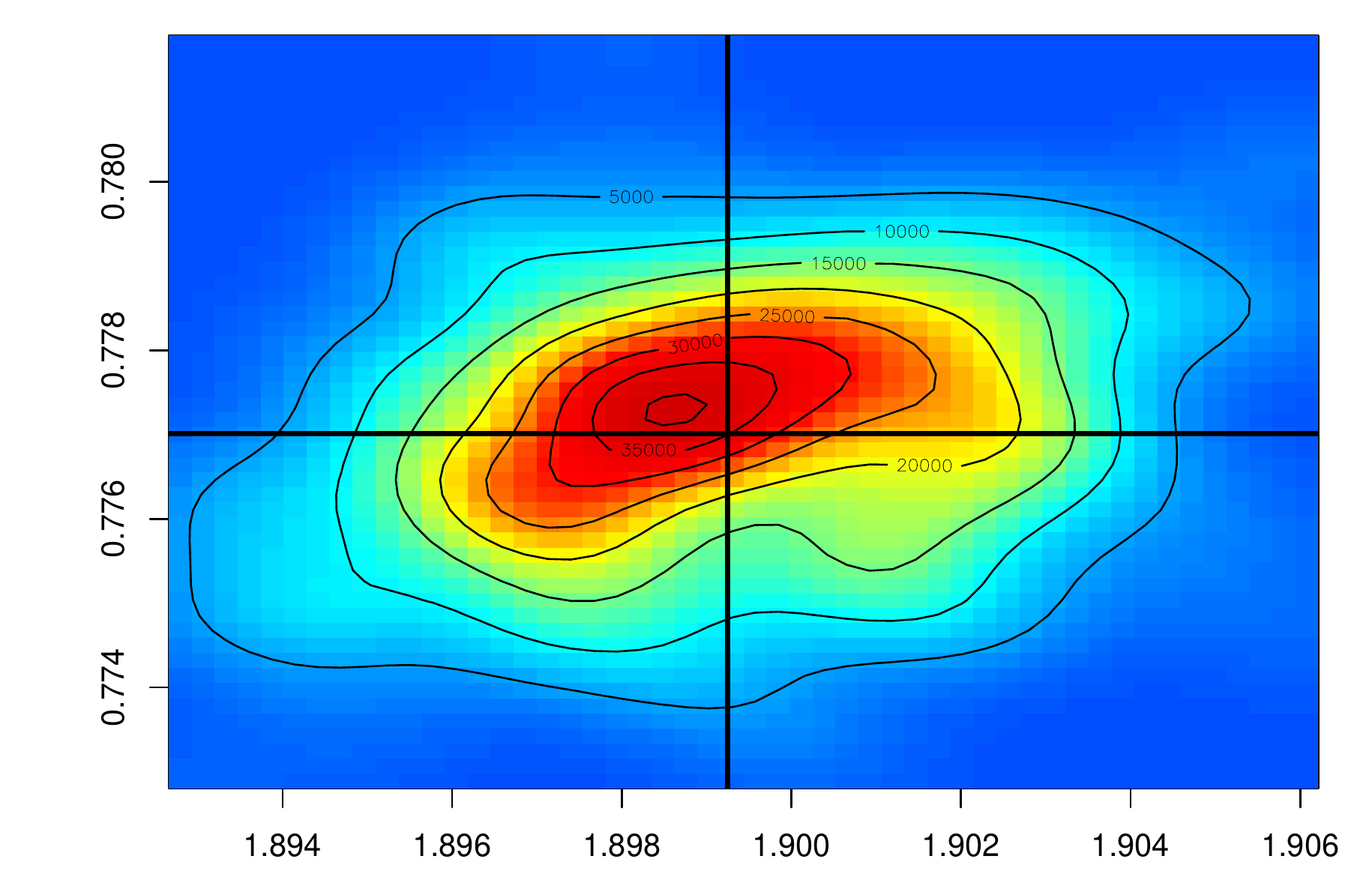} \\  
  &\includegraphics[scale=0.4]{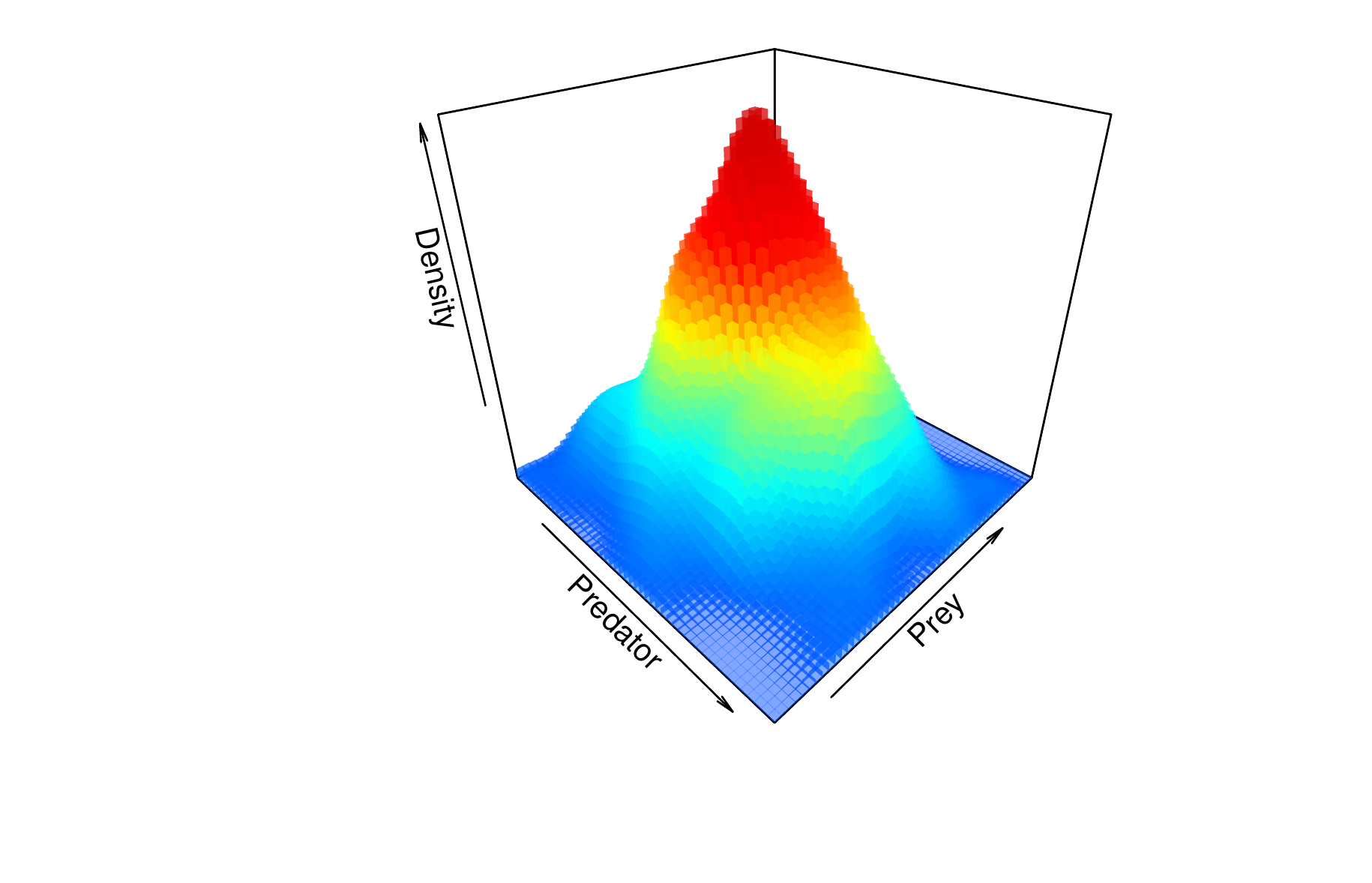} & \includegraphics[scale=0.32]{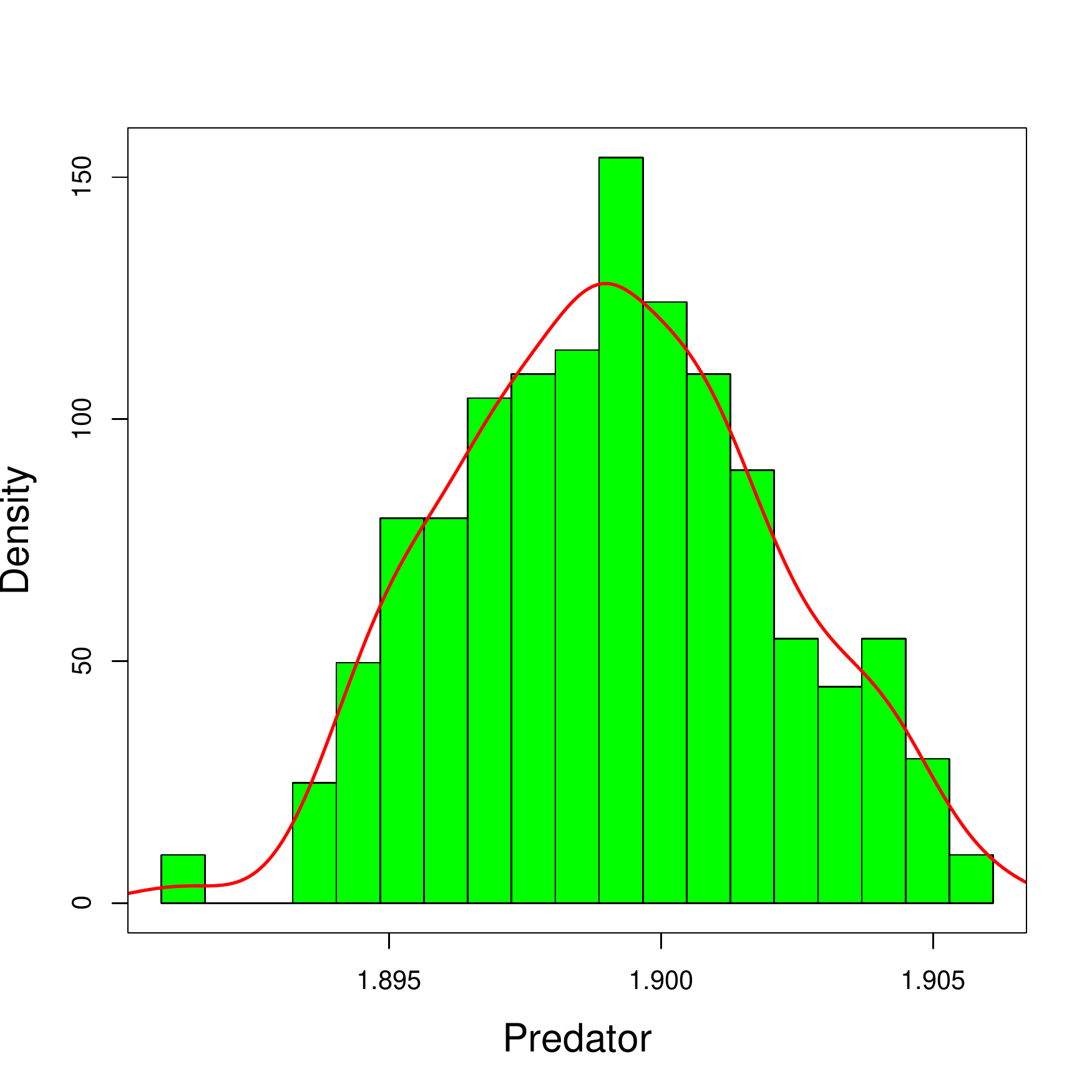} 
 \end{tabular}}
 \caption{In this case, we let $\sigma_1=\sigma_2=0.1$. Unsurprisingly, the trajectories are similar to that of the deterministic model, with densities close to symmetric.}
 \label{Fig}
 \end{figure}

\subsubsection{\bf Low stochasticity  on predator, high on prey}

 \begin{figure}[H]
  \centering  \resizebox*{5.9in}{!}{
 \begin{tabular}{ccc}
 \includegraphics[scale=0.29]{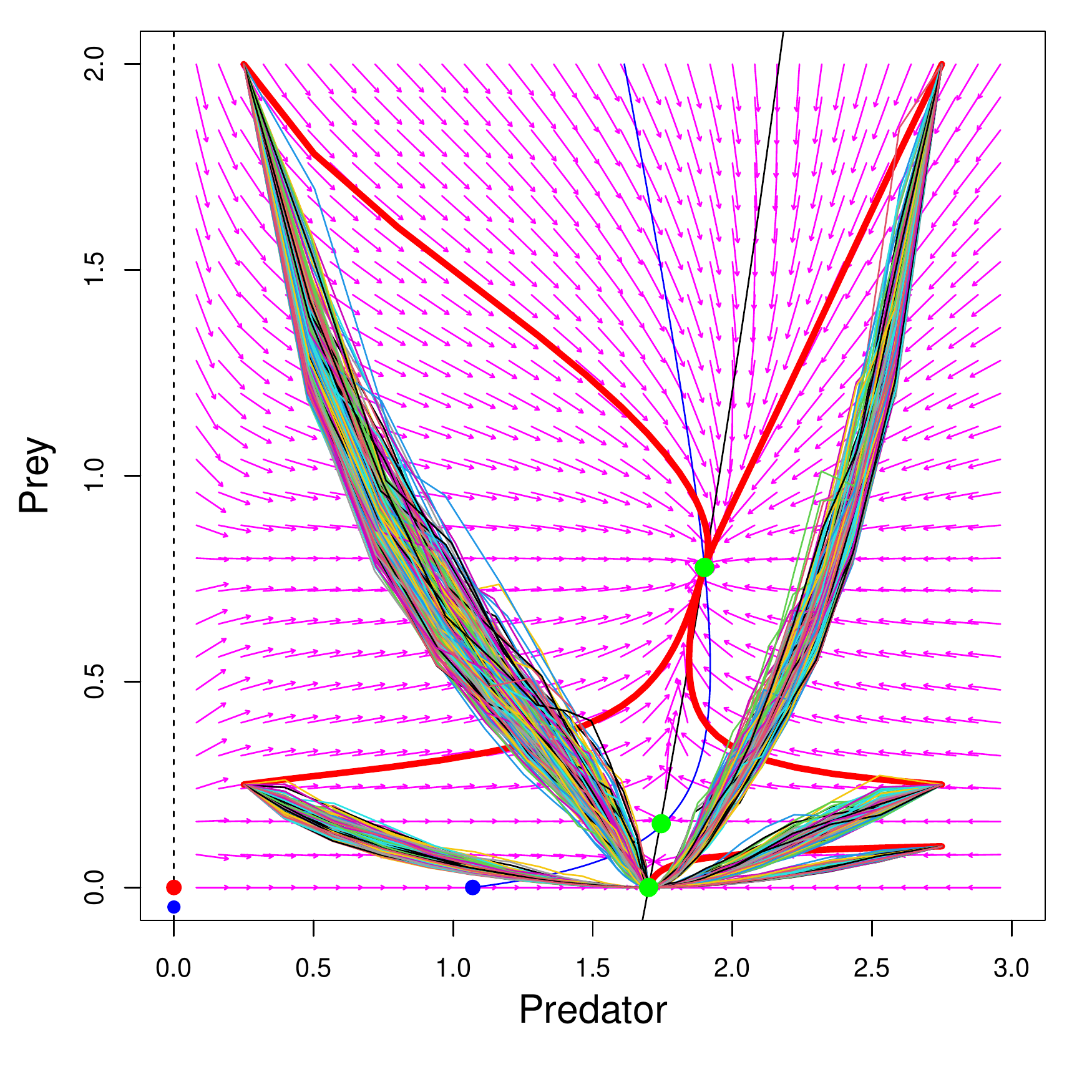} &  \includegraphics[scale=0.3, angle =90]{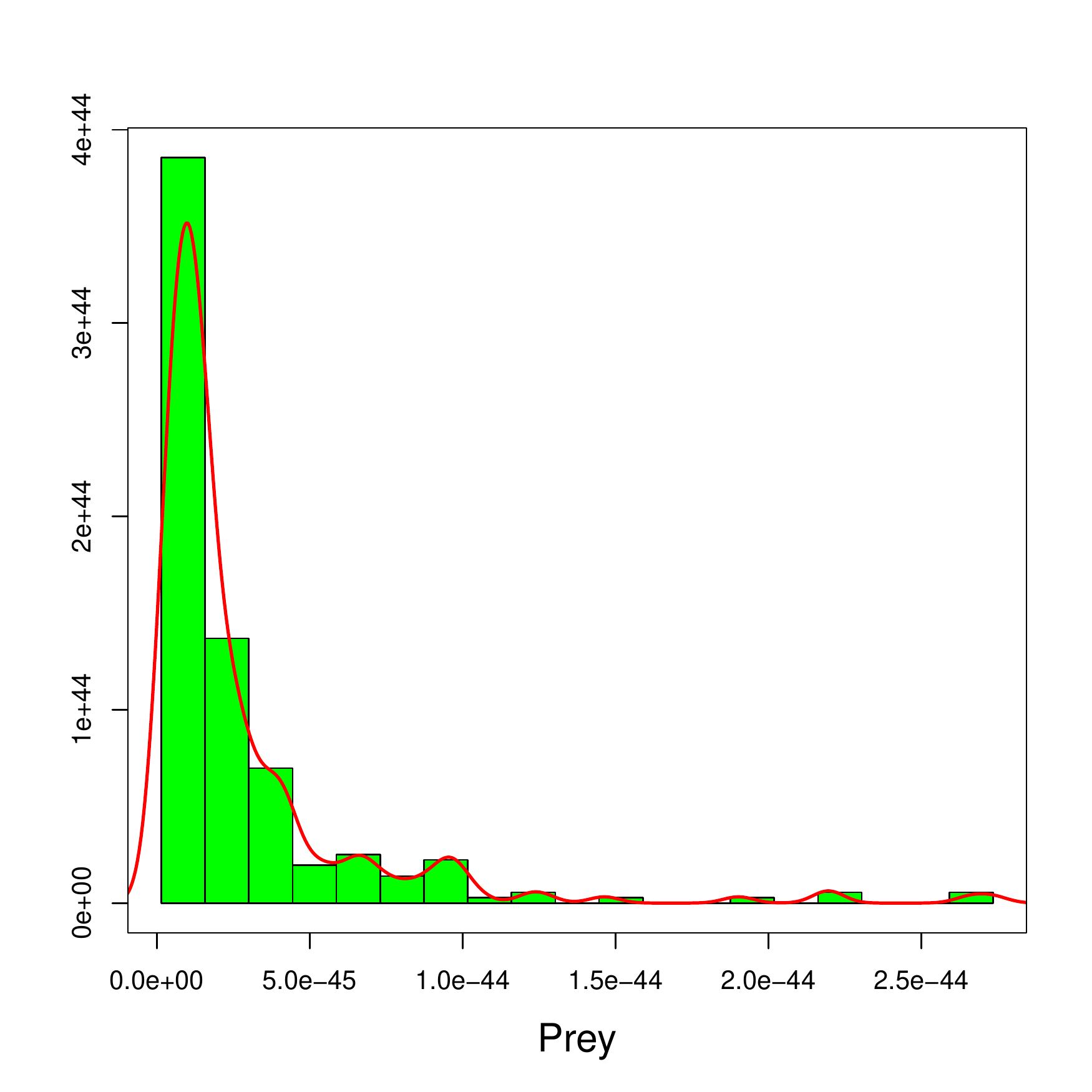} &   \includegraphics[width=2in, height=2.05in]{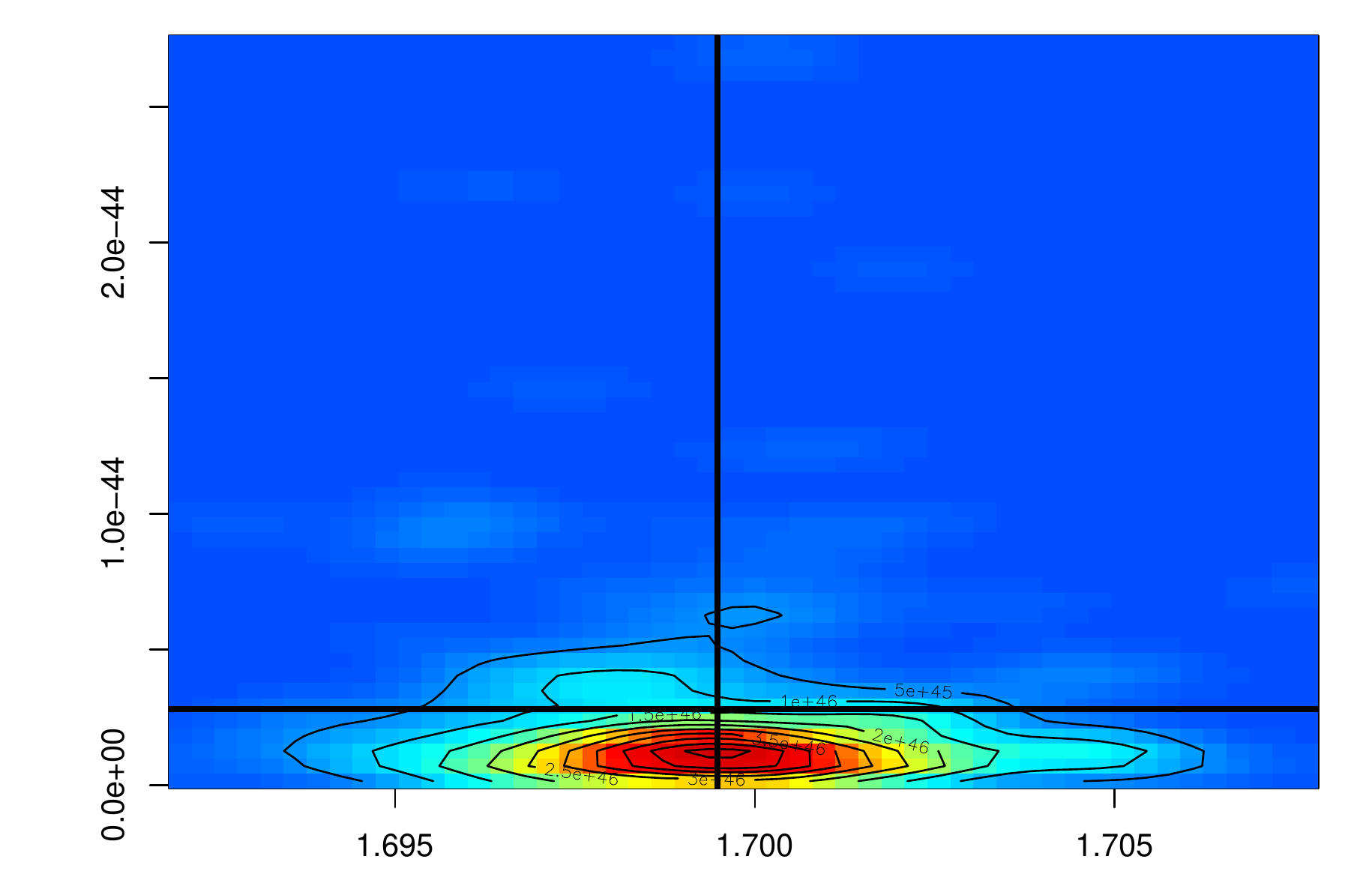} \\  
  &\includegraphics[scale=0.4]{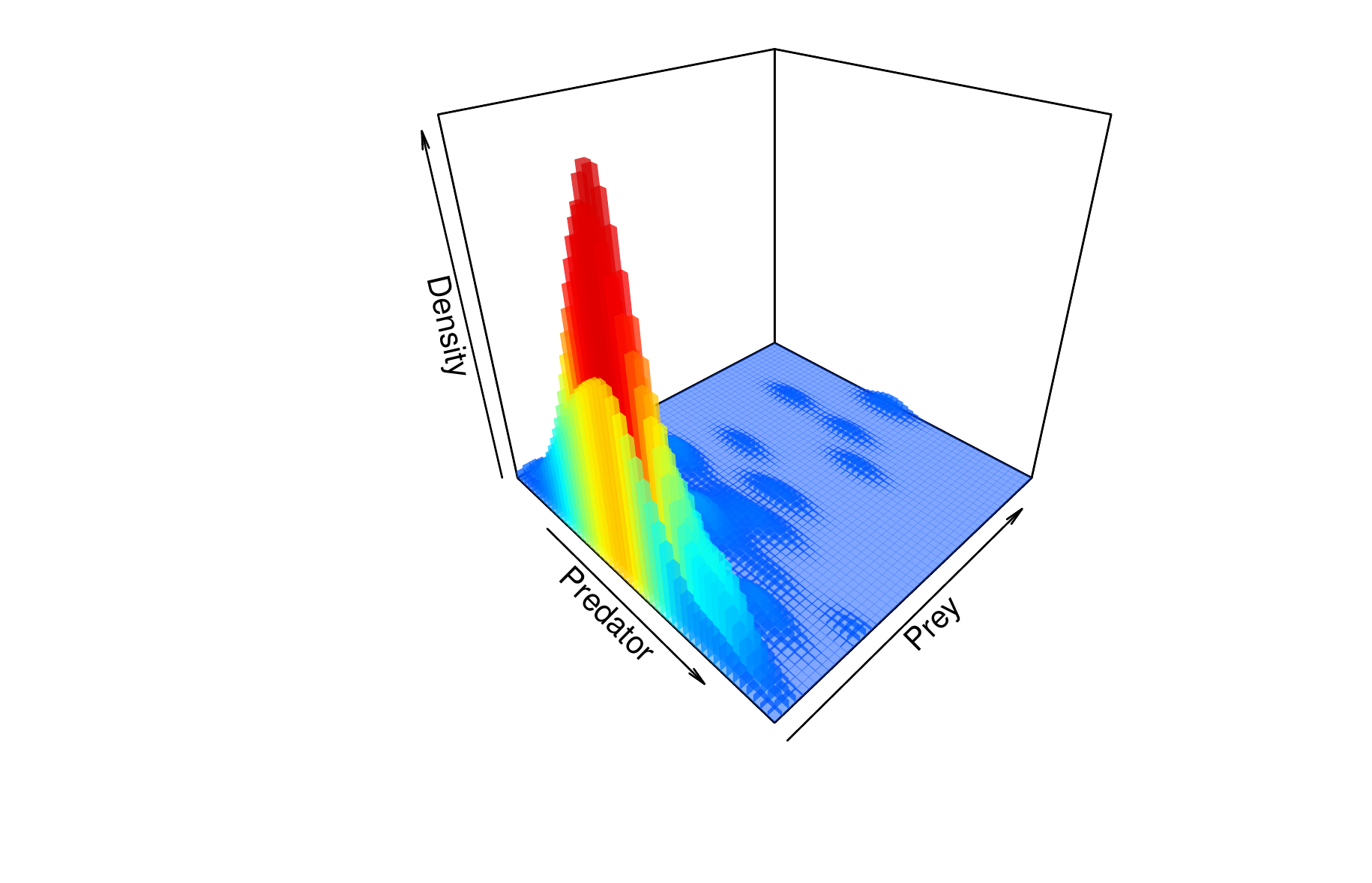} & \includegraphics[scale=0.32]{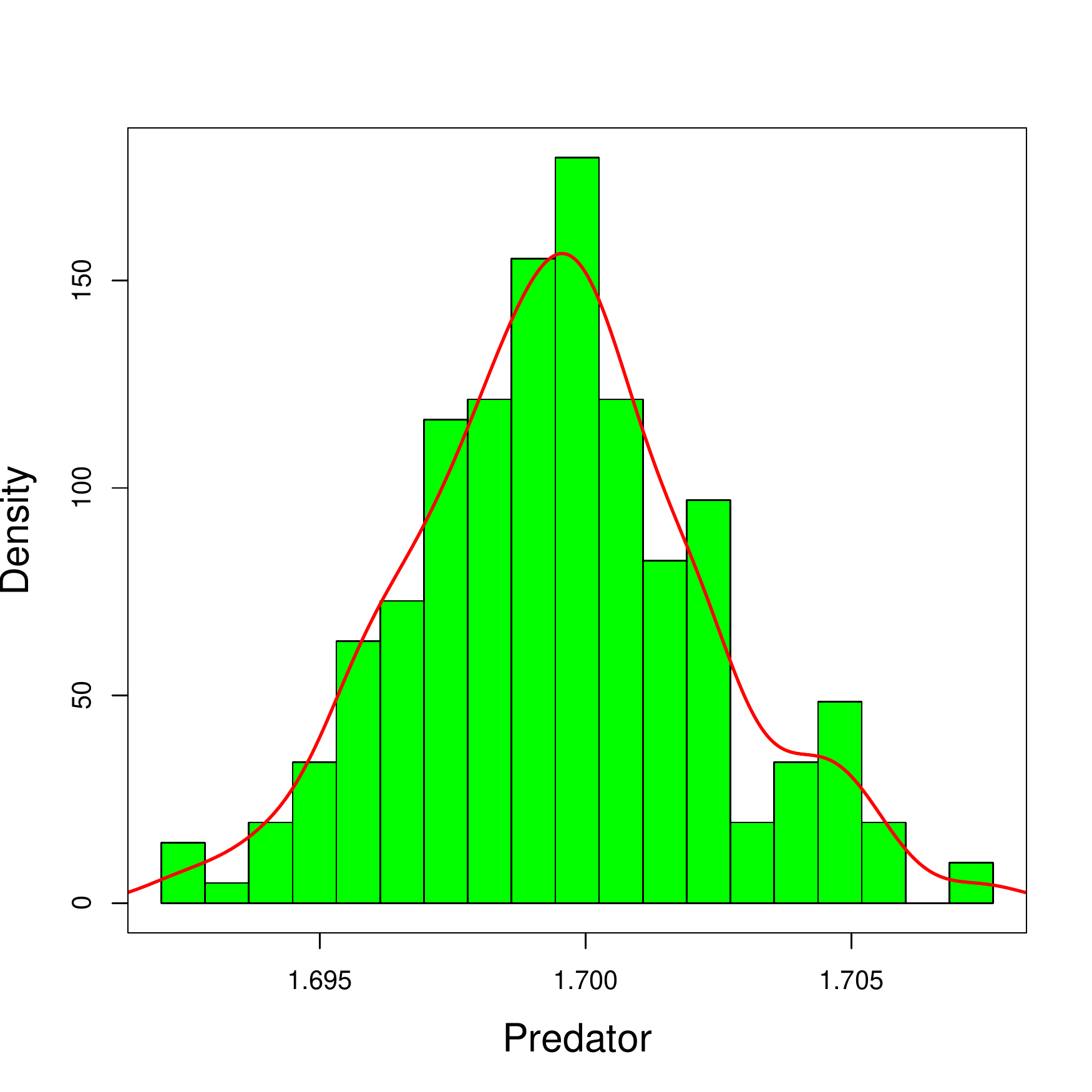} 
 \end{tabular}}
  \caption{In this case, we let $\sigma_1=6$ and $\sigma_2=0.1$. In this case, the predator survives and the prey quickly approaches limits of extinction.
}
 \label{Fig}
 \end{figure}
\subsubsection{\bf High stochasticity   on predator, low  on prey}

 \begin{figure}[H]
\centering \resizebox*{5.9in}{!}{
 \begin{tabular}{ccc}
 \includegraphics[scale=0.29]{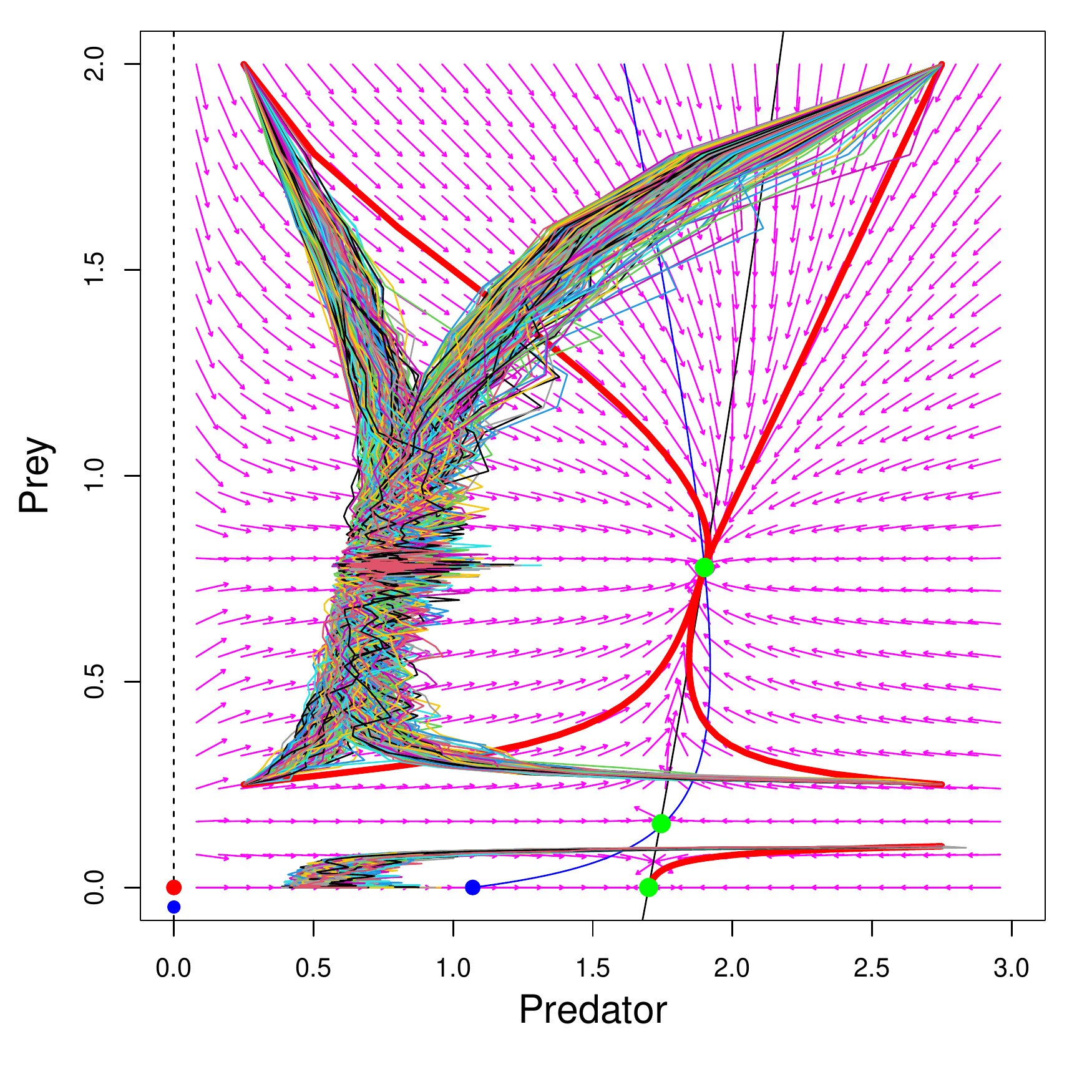} &  \includegraphics[scale=0.3, angle =90]{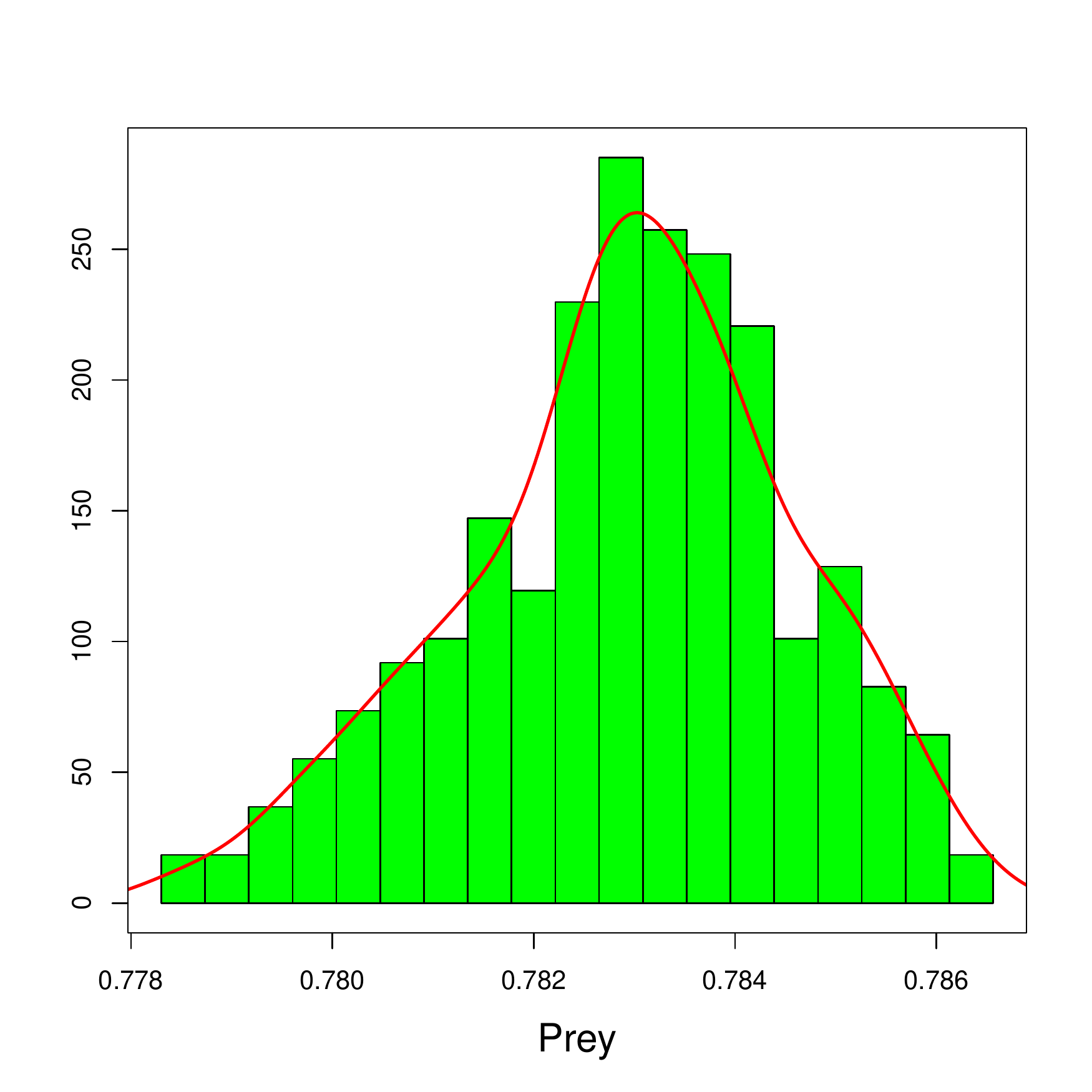} &   \includegraphics[width=2in, height=2.05in]{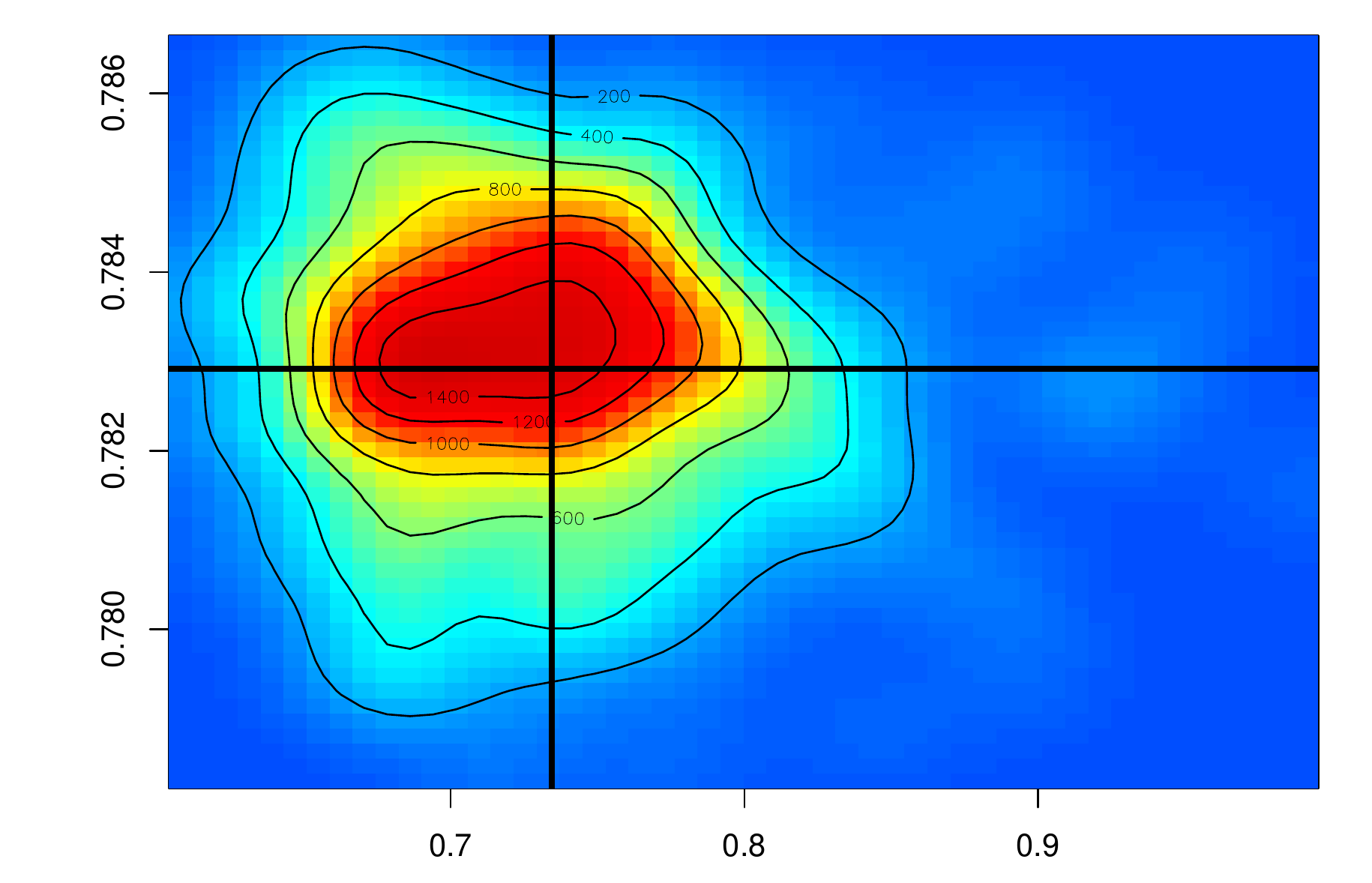} \\  
  &\includegraphics[scale=0.4]{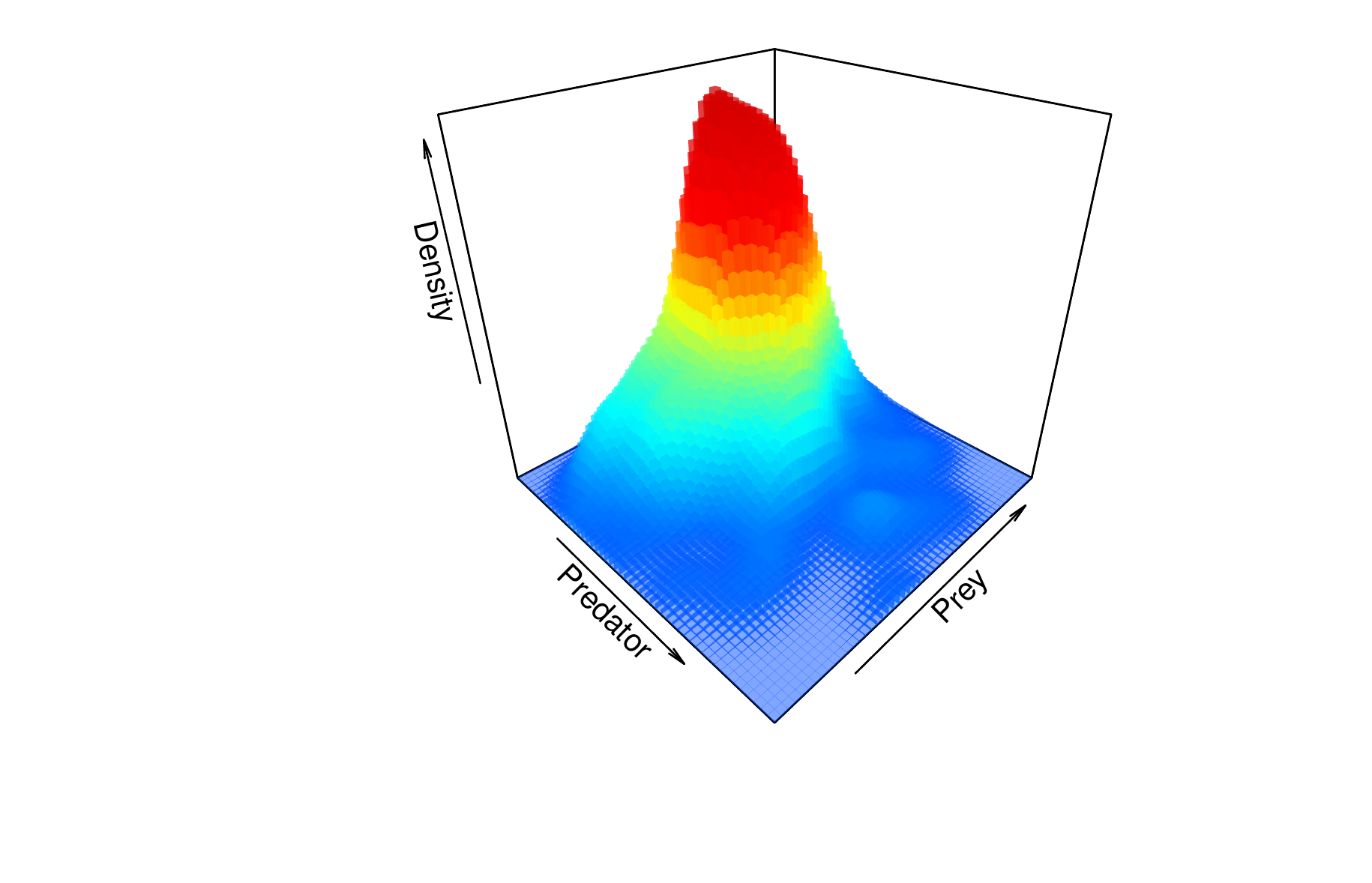} & \includegraphics[scale=0.32]{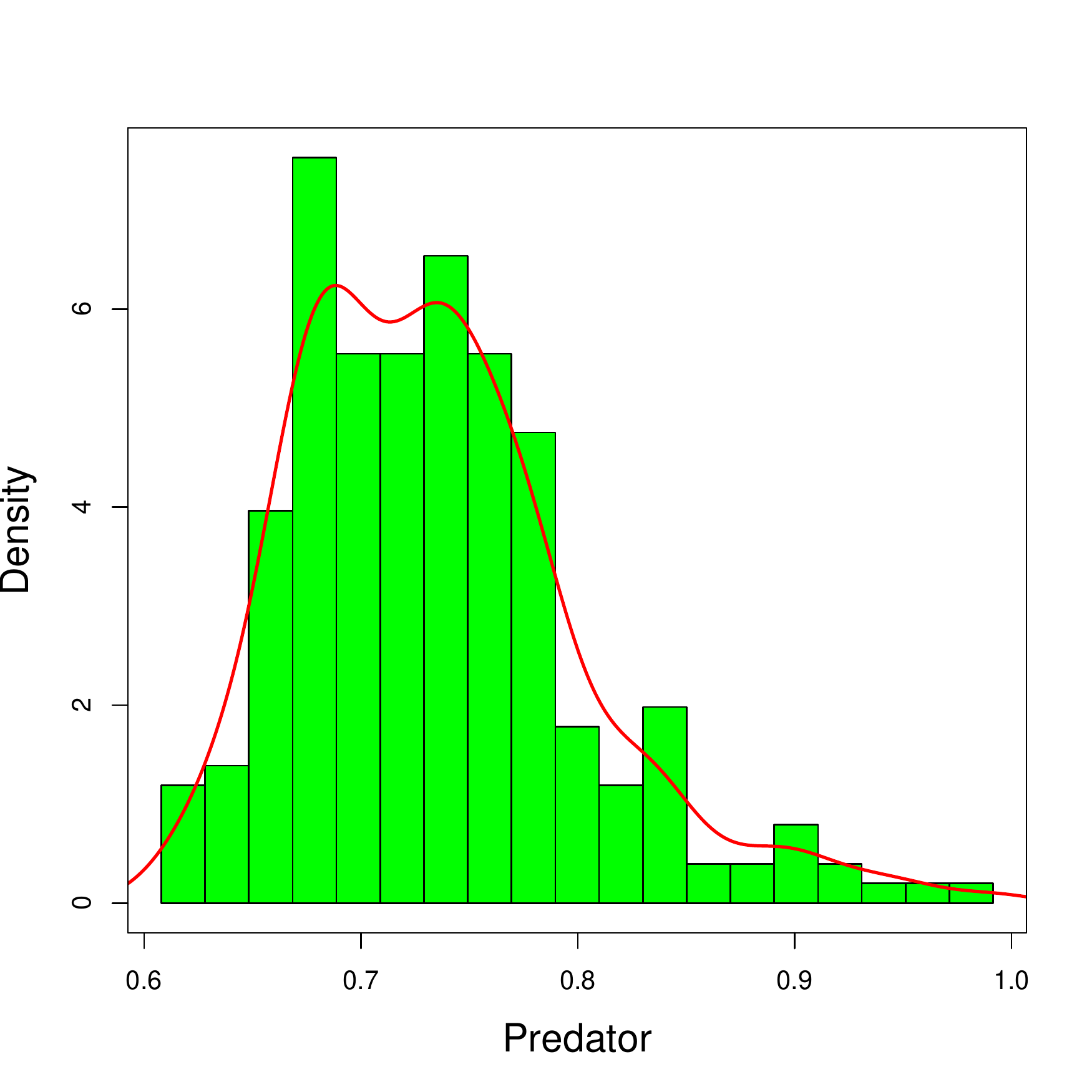} 
 \end{tabular}}
  \caption{In this case, we let $\sigma_1=6$ and $\sigma_2=0.1$.    In this case, the predator is persistent for trajectories above the deterministic unstable fixed point.}
 \label{Fig}
 \end{figure}

\subsubsection{\bf High stochasticity  on both species}

 \begin{figure}[H]
\centering \resizebox*{5.9in}{!}{
 \begin{tabular}{ccc}
 \includegraphics[scale=0.29]{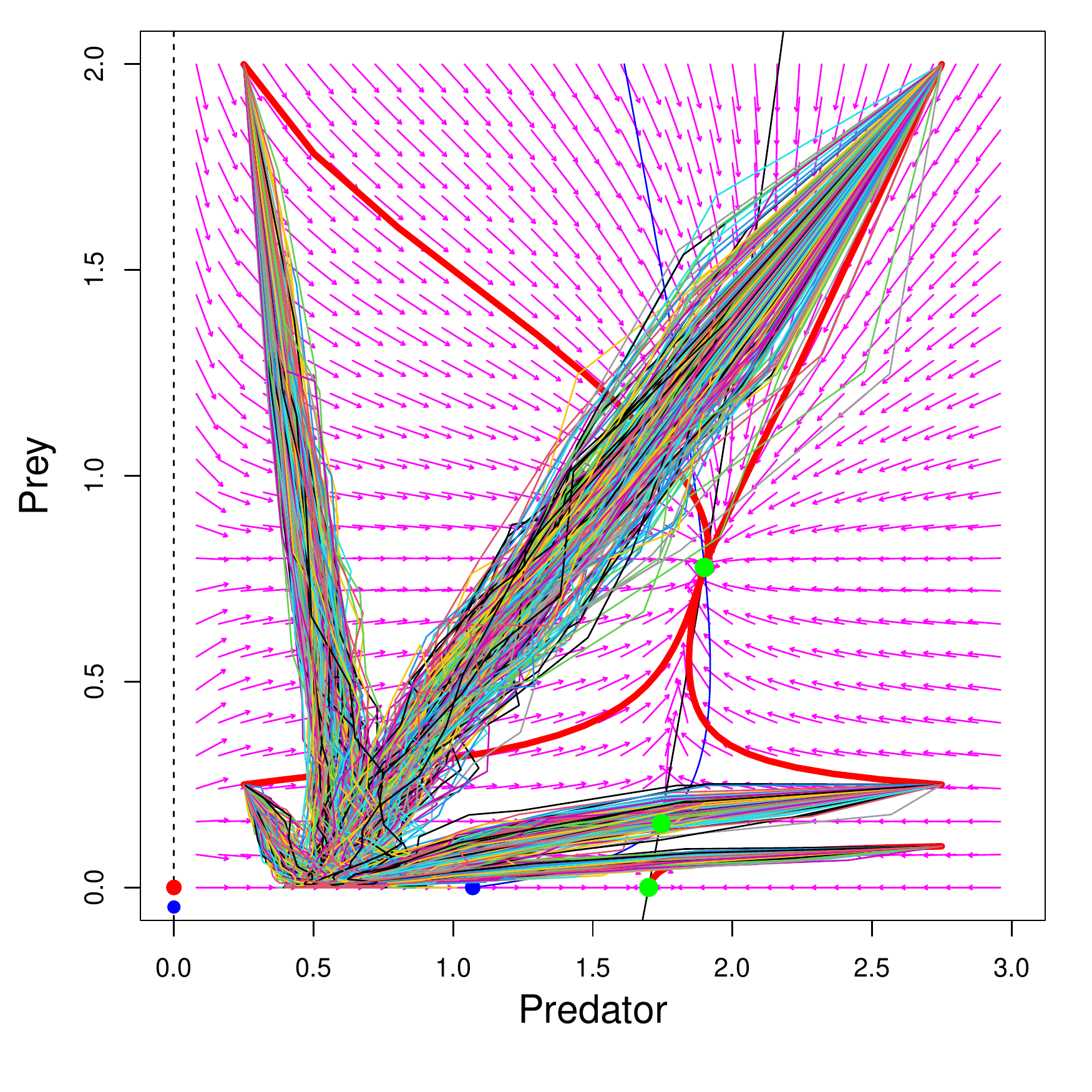} &  \includegraphics[scale=0.3, angle =90]{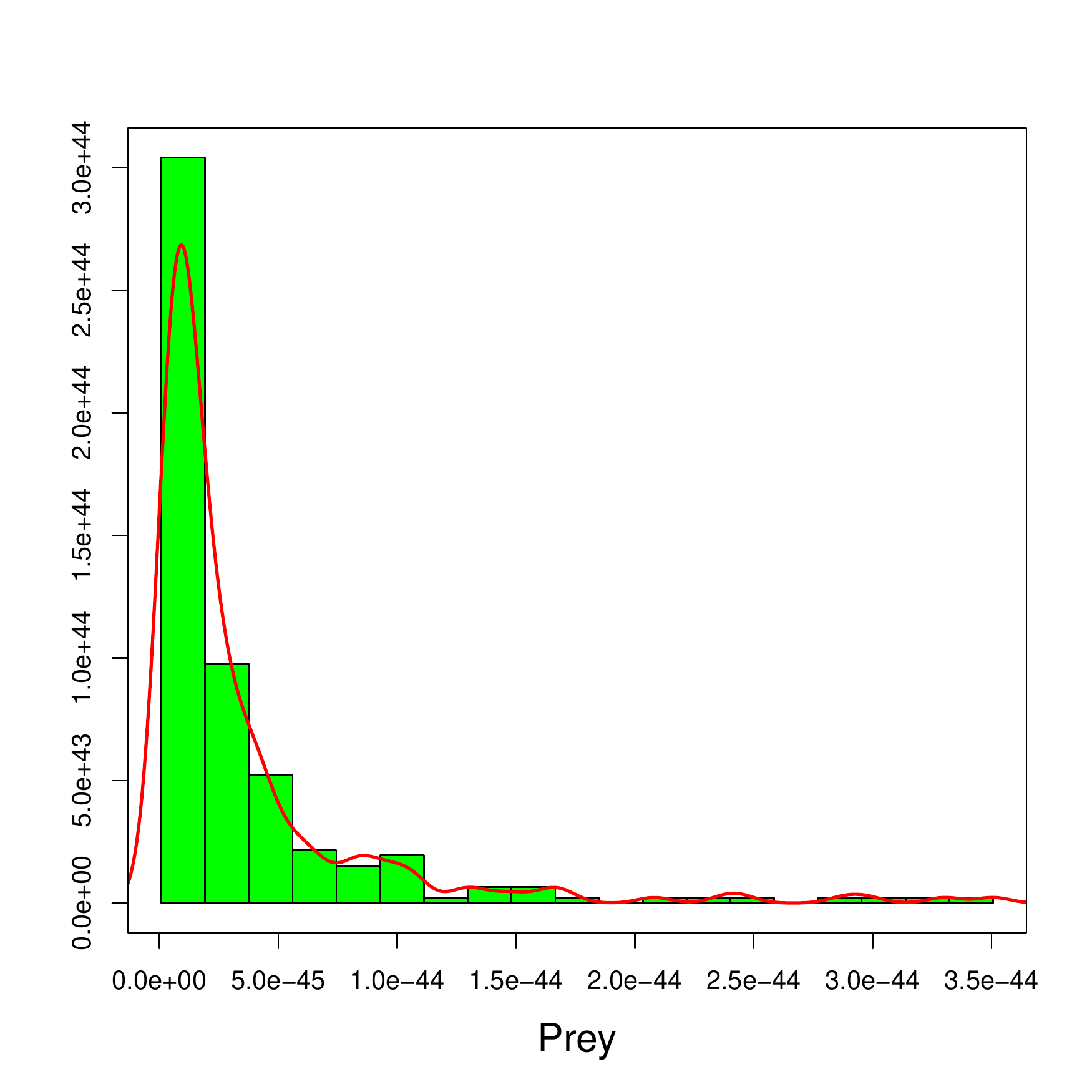} &   \includegraphics[width=2in, height=2.05in]{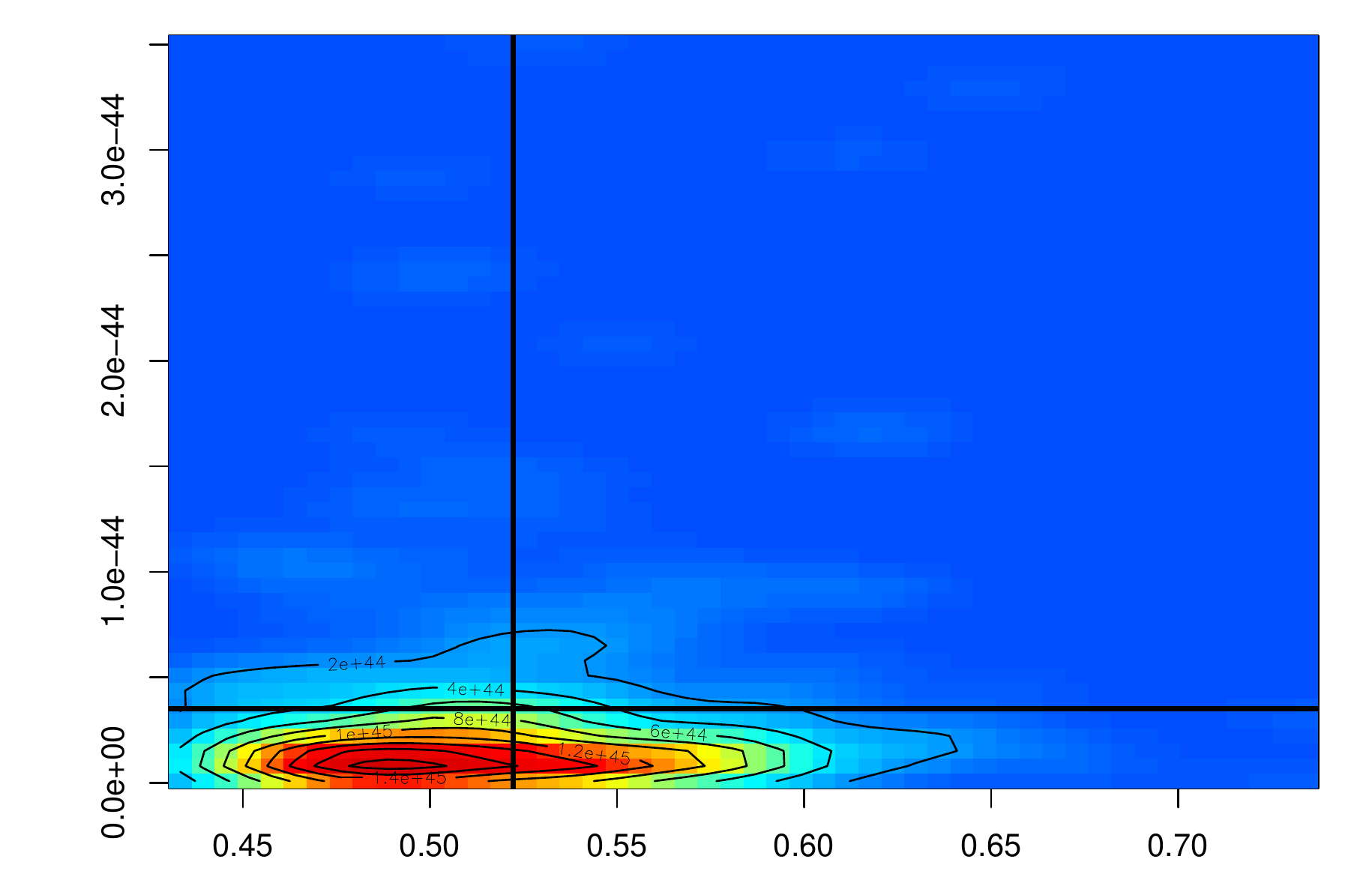} \\  
  &\includegraphics[scale=0.4]{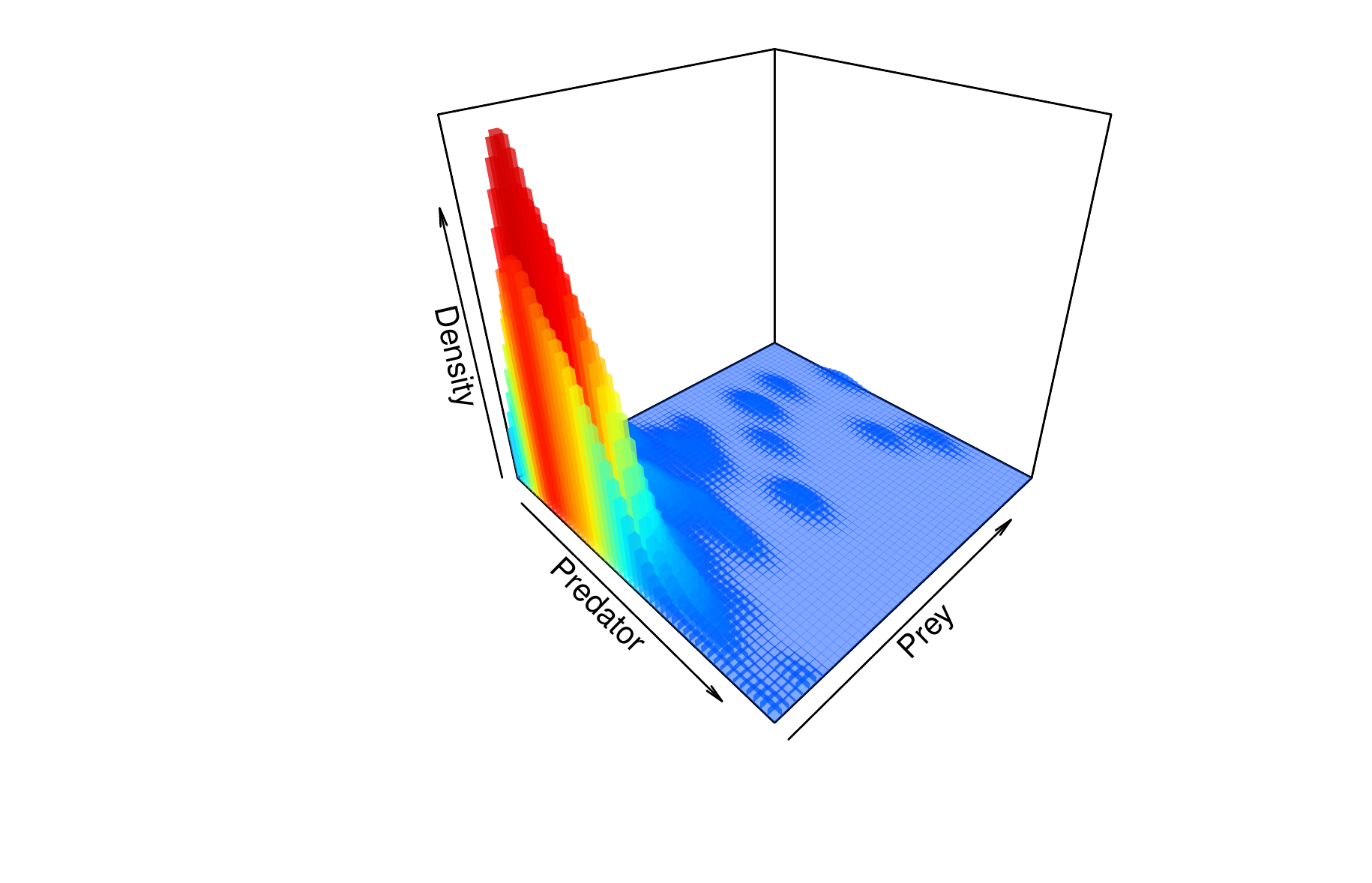} & \includegraphics[scale=0.32]{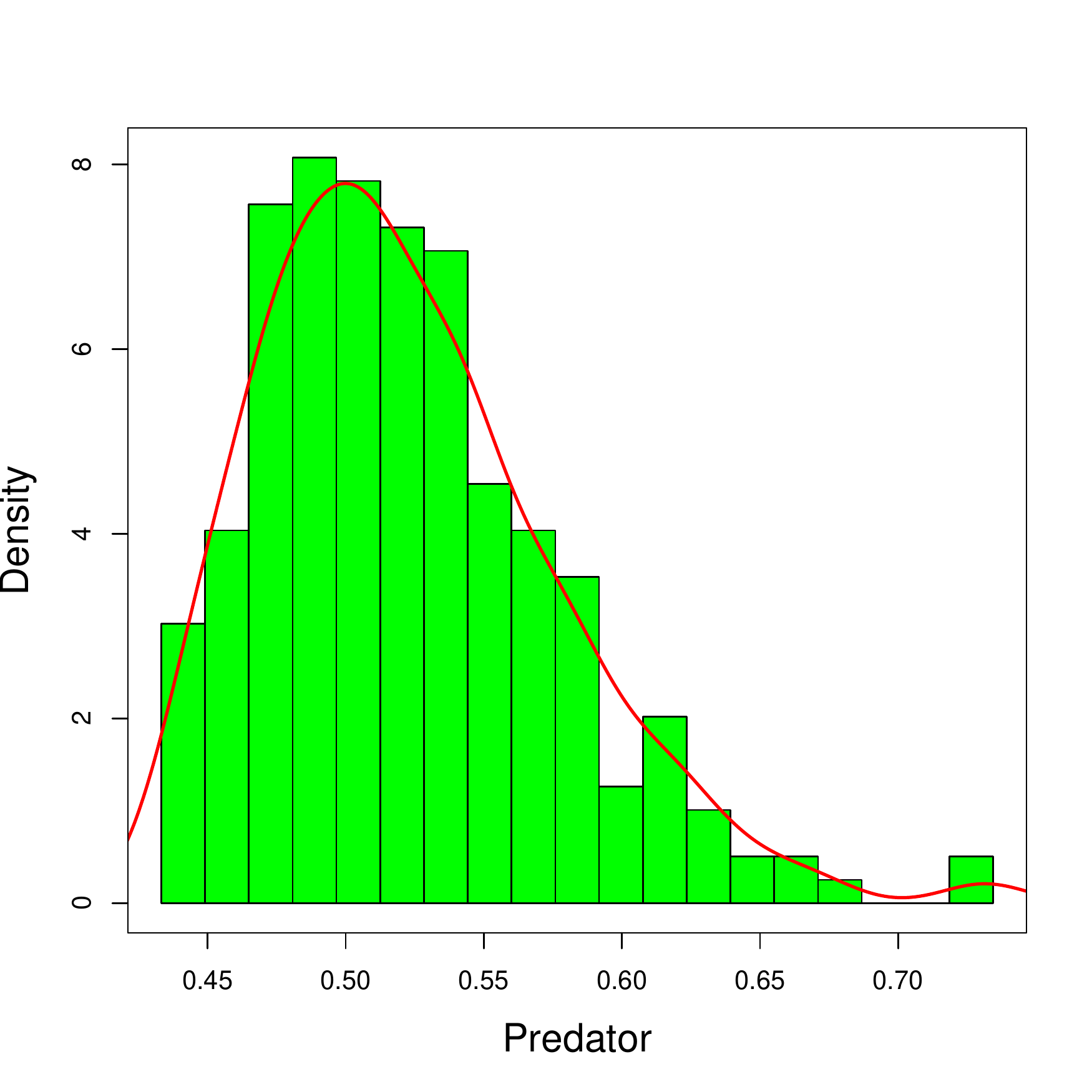} 
 \end{tabular}}

  \caption{In this case, we let $\sigma_1=6$ and $\sigma_2=6$. In this case, both species  will eventually goes extinct, though the species will be extinct  quicker than the predator.}
 \label{Fig}
 \end{figure}
To understand the effect of stochasticity on the densities of the species for different system parameters, one could track the change in density of predator and prey from successive increments of parameters $\sigma_1$ and $\sigma_2$. We can then assess the Wasserstein distance (see \cite{Villani2008}) between the these consecutive distribution to assess how similar or dissimilar they are. This would provide a threshold for stochasticity. At the same time, one could also track the average densities of predator and prey. Let $p>1$ be a real number. Let us recall that the Wasserstein distance $W(\theta_1,\theta_2)$ on a metric space $(\mathcal{M},D)$ between two probability measures $\theta_1$ and $\theta_2$ with  joint probability measure $\theta$ defined  on a set $\Gamma(\theta_1,\theta_2)$ is given as 
\[W_p(\theta_1,\theta_2)=\rb{\inf_{\theta \in \Gamma(\theta_1,\theta_2)}\set{ \int_{\mathcal{M}\times \mathcal{M}}\theta(u,v)[D(u,v)]^pdudv}}^{\frac{1}{p}}\;.\]
For practical purposes, distributions $\theta_1$ and $\theta_2$ will be taken as the  empirical probability measures with samples $U=(U_{(1)},U_{(2)},\cdots, U_{(K)})$ and $V=(V_{(1)},V_{(2)},\cdots, V_{(K)})$ from order statistics. 
Therefore,  an estimator of $W_p(\theta_1,\theta_2)$ is \[\widetilde{W}_p(\theta_1,\theta_2)=\rb{\sum_{i=1}^K\abs{U_{(i)}-V_{(i)}}^p}^{\frac{1}{p}}\;.\]
In our case, we will use $p=2$.  For  the predator, we use the samples $U_1=X(\sigma_{1,i})$ and $V_1=X(\sigma_{1,i+1})$ corresponding to  a  given stochastic parameter $\sigma_{1,i}\in [0.1,8]$ for $i=1,2,\cdots, K$.  Likewise, for the prey we use the samples $U_2=Y(\sigma_{2,i})$ and $V_2=Y(\sigma_{2,i+1})$, for $\sigma_{2,i}\in [0.1,8]$. 
In the figures below, we chose the starting point of the trajectories to be $x_0=5,y_0=1$ and $K=80$. For $i=1,\cdots, 80$. The parameters  $\sigma_{1,\cdot}=\sigma_{2,\cdot}=\sigma$ will be referred to   as stochasticity.  \noindent The left panels represent the Wasserstein distance between consecutive predator and prey distributions by stochasticity and the right panels stochasticity versus average population densities. 



%
%
\begin{figure}[H]
\centering \includegraphics[scale=0.65]{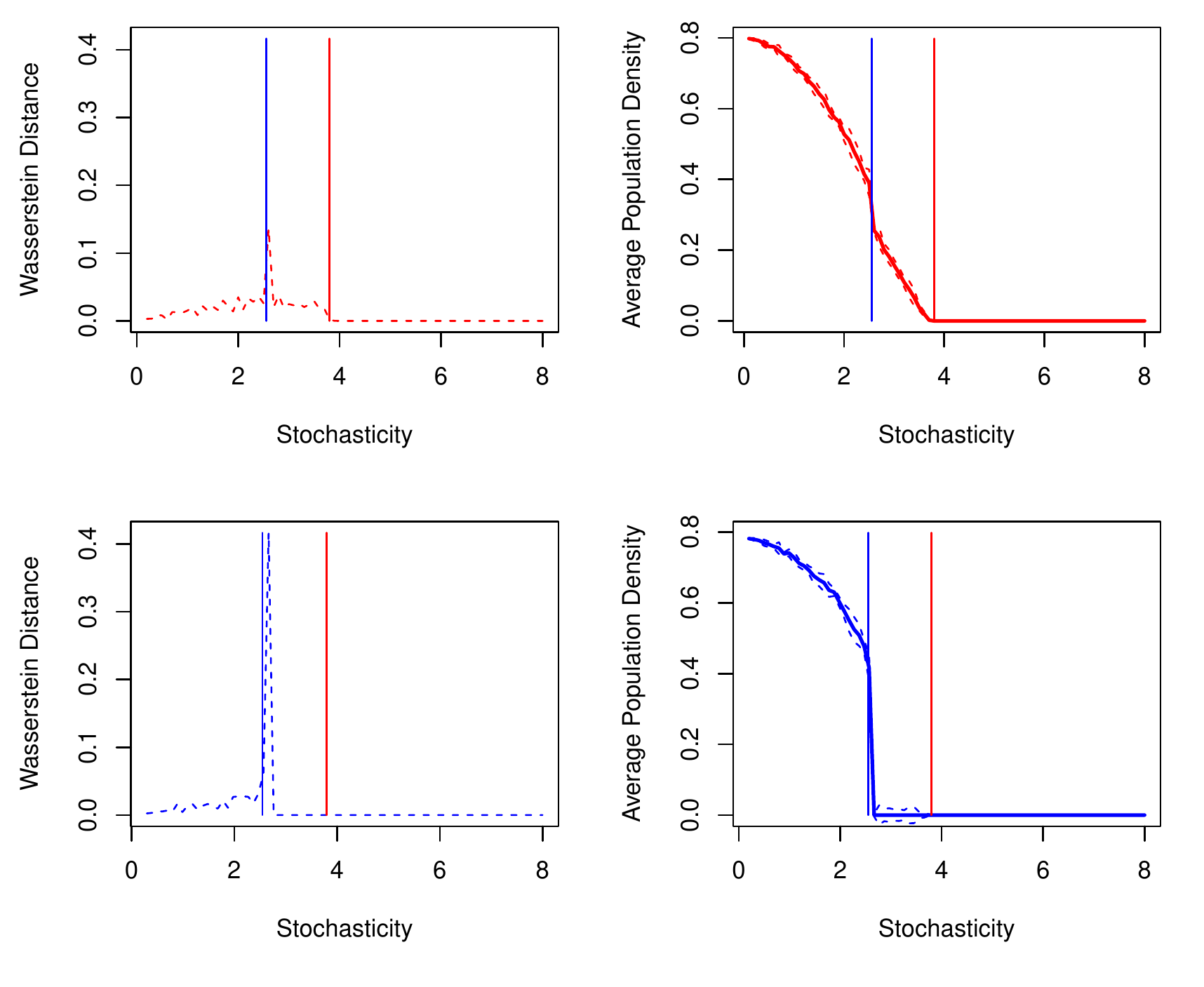}
\caption{ Here, $r_1=0.5,  r_2=0.1, H_1=1.2,H_2=1.8, m=0.4, a=0.1, b=0.9, c=0.01;d=0.3; p=0.1$. corresponding to  two interior fixed points in the deterministic case.
}
\label{fig-Wasserstein1}
\end{figure}
\begin{figure}[H]
\centering \includegraphics[scale=0.65]{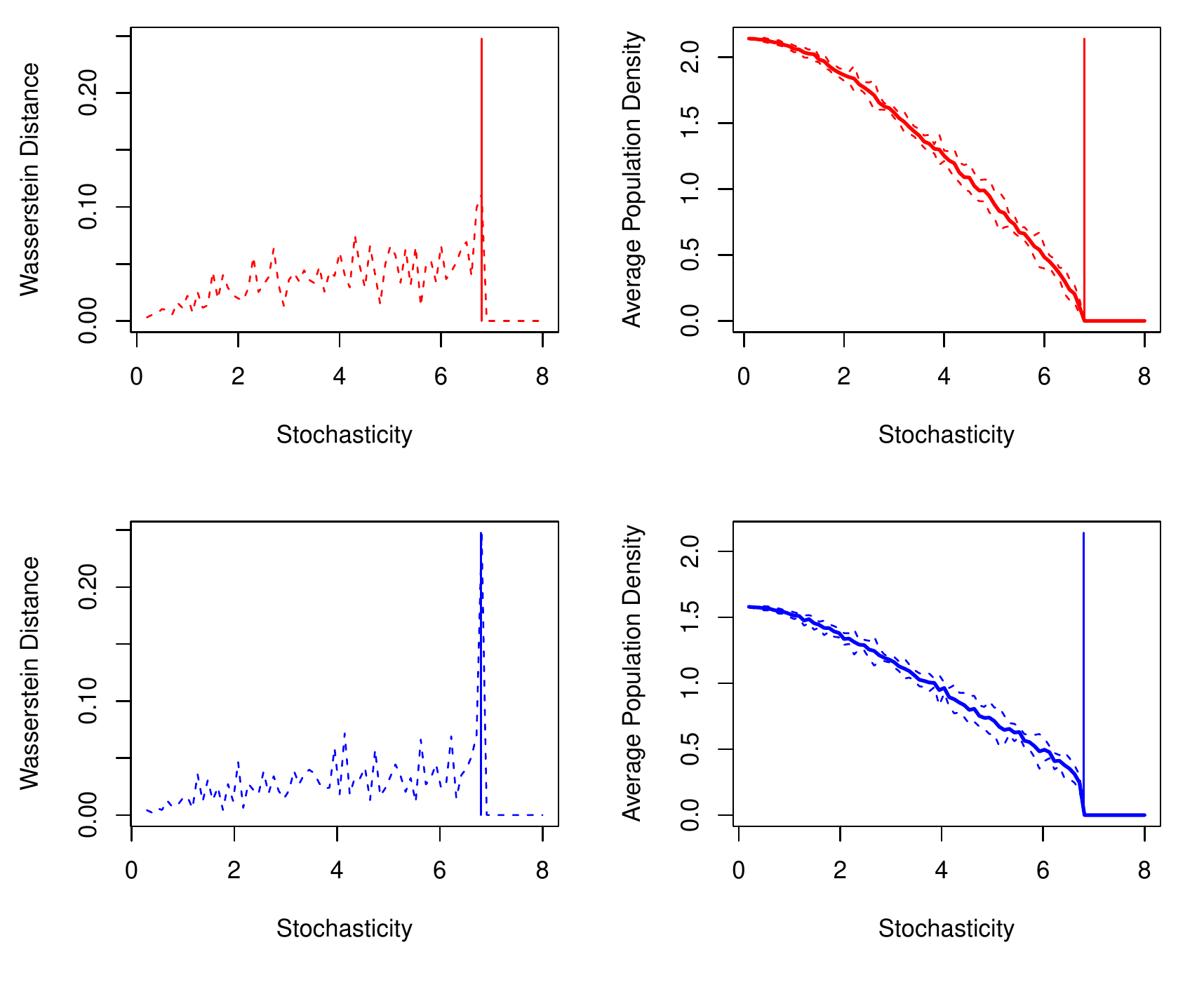}
\caption{$r_1=0.1,r_2=0.1,  H_1=0.9, H_2=2, m=0.1, a=0.05, b=0.9, c=0.01, d=2.5, p=1.5$
corresponding to a single interior  fixed point in the deterministic case.}
\label{fig-Wasserstein2}
\end{figure}

\begin{figure}[H]
\centering \includegraphics[scale=0.65]{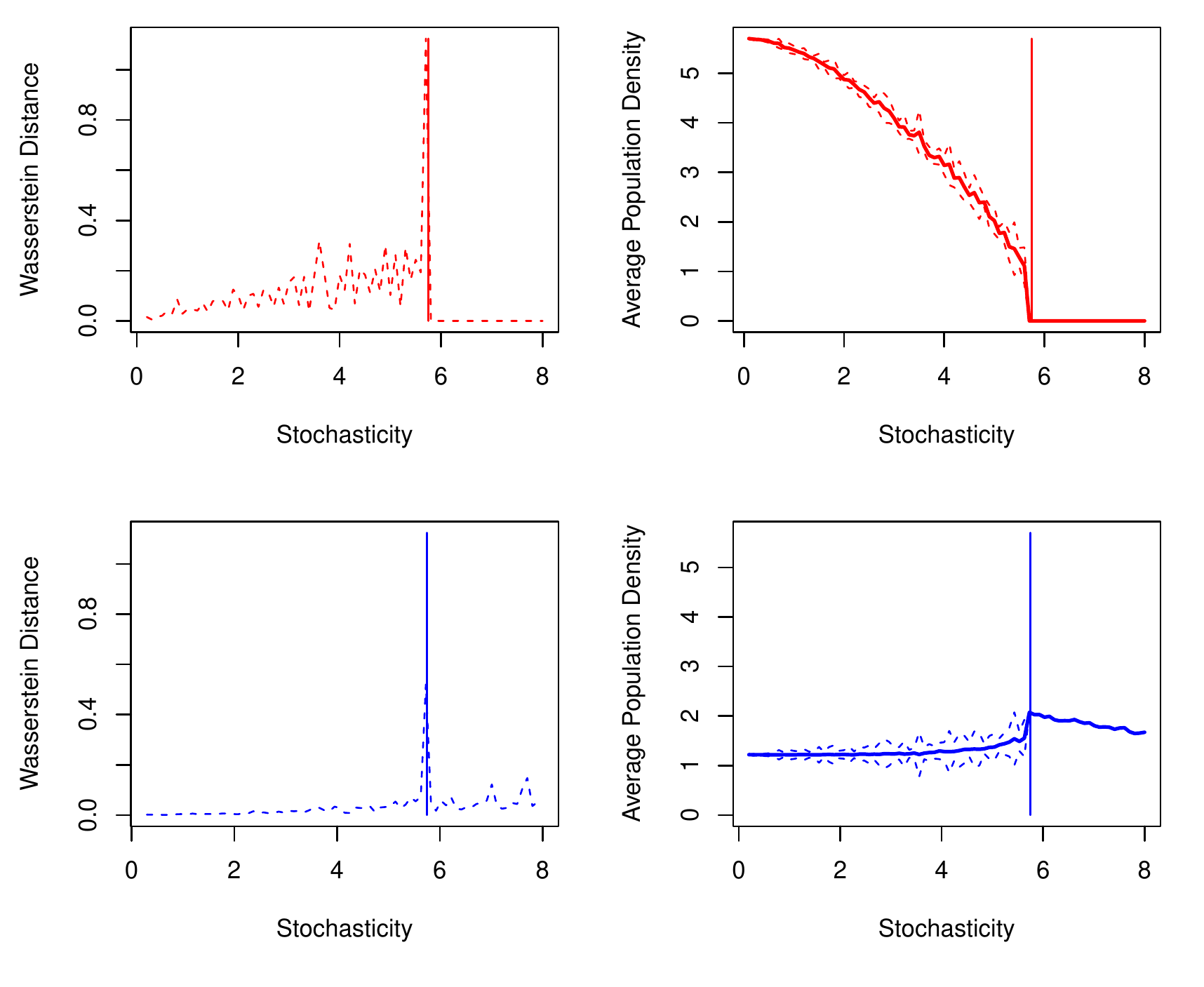}
\caption{$r_1=0.9;, r_2=0.1, H_1=-1, H_2=3.5, m=0.1, a=0.5, b=0.2, c=0.4, d=2.5,  p=0.1$ corresponding to the predator-free axial fixed point in the deterministic case.
}
\label{fig-Wasserstein3}
\end{figure}

\begin{figure}[H]
\centering \includegraphics[scale=0.65]{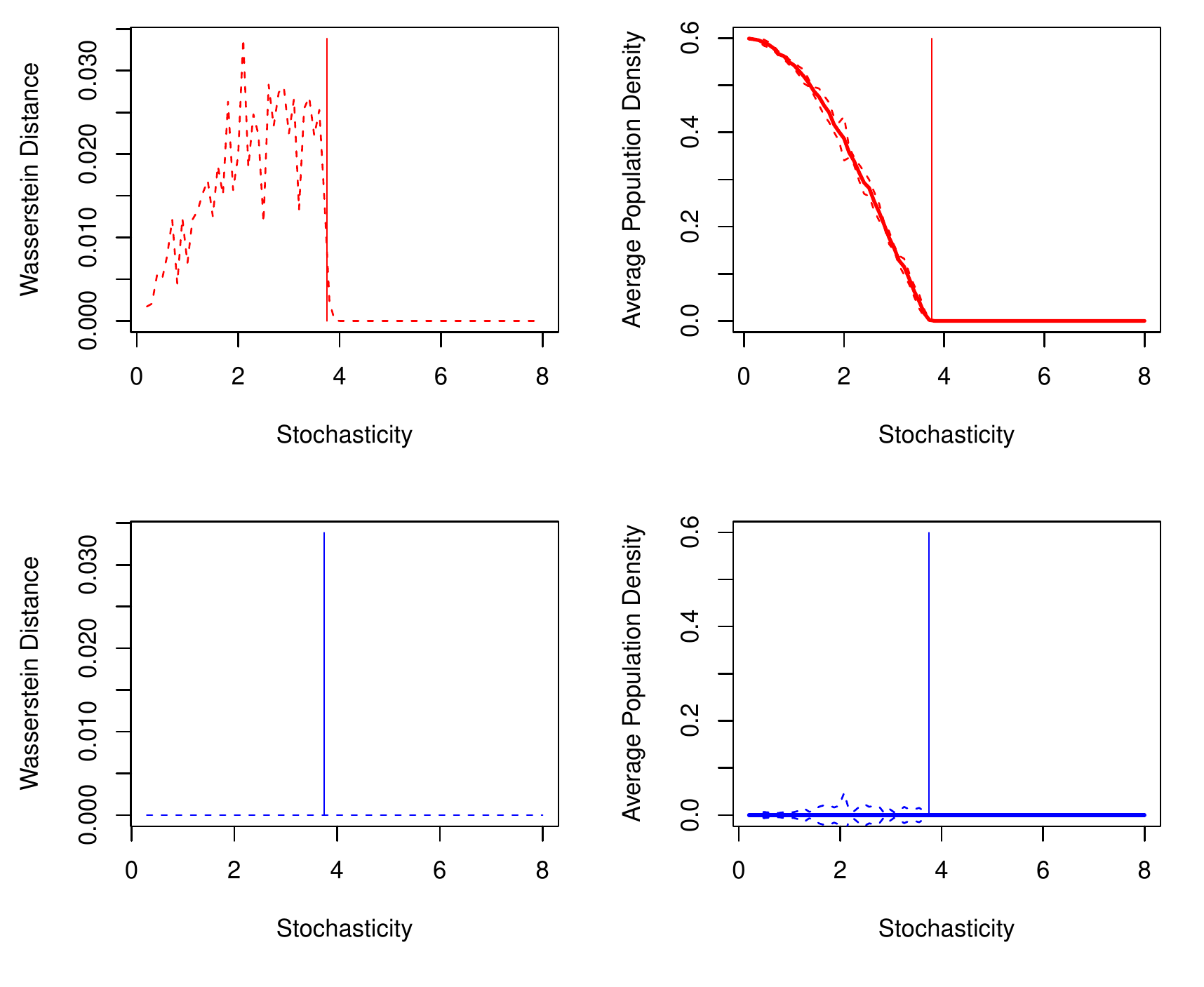}
\caption{ $r_1=0.1, r_2=0.8, H_1=0.5, H_2=-.9, m=0.8, a=1.2, b=0.4, c=0.6, d=1, p=10$ corresponding to the prey-free axial fixed point in the deterministic case.
}
\label{fig-Wasserstein4}
\end{figure}

\begin{figure}[H]
\centering \includegraphics[scale=0.75]{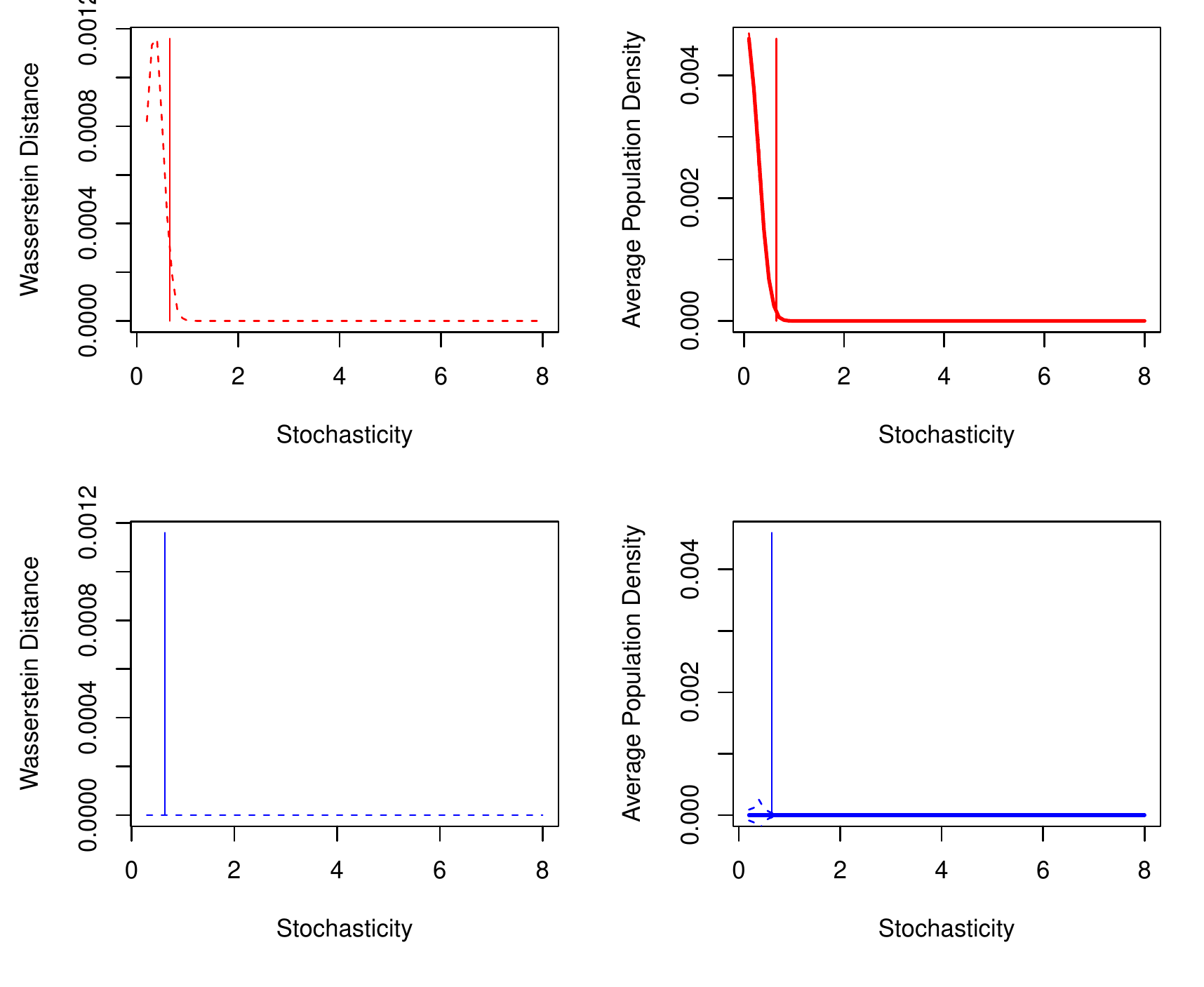}
\caption{ $r_1=0.1,r_2=0.8, H_1=-0.1, H_2=-1, m=0.9, a=0.4, b=0.5, c=1, d=2.2, p=0.2$. corresponding to no interior fixed point in the deterministic case.
}
\label{fig-Wasserstein5}
\end{figure}

\section{Discussion} \label{sect:discussion}

\begin{enumerate}
\item It is interesting to note the similarities in both Wasserstein distance plots and average population densities plots. 
\item Indeed, in  Figures \ref{fig-Wasserstein1}--\ref{fig-Wasserstein5}, there are critical values for stochasticity from which quantitatively, the average densities of both predator and prey change. They are represented by the solid vertical lines, red for the predator and blue for the prey.  This is matched in  Wasserstein plots with a drop in distance. 
\item From a purely bifurcation point of view, this drop suggests that stochasticity seems to produce a supercritical pitchfork bifurcation in both species. 
\item From a purely statistical point of view, the Wasserstein distance obtained is clearly unimodal and right-skewed  when a bifurcation occurs with the right skew being clearly zero.
\item Another important takeaway is that persistence of the prey is clearly seen through the Wasserstein distance, see Figure \ref{fig-Wasserstein3}. In fact, the Wasserstein is unimodal with values in the same range except for a single peak due to the disappearance of the predator.
\item Figure \ref{fig-Wasserstein4} is interesting in that it shows that while the predator persists for while under environmental stochasticity, it eventually goes extinct due to lack of prey. On the other, the prey quickly goes extinct due to both predation and environmental fluctuations.
\item Is important to note that the simulations were done with the parameters of the deterministic case corresponding to the two interior fixed points. Similar observations as above can be drawn from the remaining cases.
\item We observe that the per-capita magnitude of the environmental fluctuations are just linear functions of the populations densities. More structured per-capita magnitude functions  can be considered, especially polynomial of higher degree without fundamentally altering the conclusions we obtained here.
\end{enumerate}
Our simulations are just an initial proof of concept that  the Wasserstein distance may be used successfully to study bifurcation in stochastic environments. However, further investigations, both theoretically and practically  are needed establish a more solid understanding. Since the Wasserstein distance  is easy to calculate, it may be worthwhile to see if it can be used to empirically assess  chaotic events as an alternative to calculating the largest Lyapunov exponent.  
\bibliography{HierarchicalHollinsII}

\begin{thebibliography}{17}
\providecommand{\natexlab}[1]{#1}
\providecommand{\url}[1]{\texttt{#1}}
\expandafter\ifx\csname urlstyle\endcsname\relax
  \providecommand{\doi}[1]{doi: #1}\else
  \providecommand{\doi}{doi: \begingroup \urlstyle{rm}\Url}\fi

\bibitem[Allee(1949)]{Allee1949}
W.~C. Allee.
\newblock \emph{Principles of Animal Ecology}.
\newblock W.B. Saounders Co., Philadelphia, 1949.
\newblock ISBN 0-7216-1120-6.

\bibitem[Assas et~al.(2015{\natexlab{a}})Assas, Dennis, Elaydi, Kwessi, and
  Livadiotis]{Kwessi2015_6}
L.~Assas, B.~Dennis, S.~Elaydi, E.~Kwessi, and G.~Livadiotis.
\newblock Hierarchical competition models with the allee effect ii: the case of
  immigration.
\newblock \emph{Journal of Biological Dynamics}, 9\penalty0 (1):\penalty0
  288--316, 2015{\natexlab{a}}.

\bibitem[Assas et~al.(2015{\natexlab{b}})Assas, Elaydi, Kwessi, Livadiotis, and
  Ribble]{Kwessi2015_7}
L.~Assas, S.~Elaydi, E.~Kwessi, G.~Livadiotis, and D.~Ribble.
\newblock Hierarchical competition models with allee effects.
\newblock \emph{Journal of Biological Dynamics}, 9:\penalty0 32--44,
  2015{\natexlab{b}}.

\bibitem[Balreira and Luis(2014)]{Balreira2014}
S.~Balreira, E.~Elaydi and R.~Luis.
\newblock Global stability of higher dimensional monotone maps.
\newblock \emph{Journal of Difference Equations and Applications}, 23\penalty0
  (12):\penalty0 2037--2071, 2014.

\bibitem[Bena\"{i}m and Schreiber(2009)]{Benaim2009}
P.~Bena\"{i}m and S.~J. Schreiber.
\newblock Persistence of structured populations in random environments.
\newblock \emph{Theor. Pop. Biol.}, 76\penalty0 (1):\penalty0 19--34, 2009.

\bibitem[Chesson(2009)]{Chesson2009}
P.~Chesson.
\newblock General theory of competitive coexistence in spatially-varying
  environments.
\newblock \emph{Theor. Popul. Biol.}, 58\penalty0 (3):\penalty0 211--237, 2009.

\bibitem[Durrett(1996)]{Durrett1996}
R.~Durrett.
\newblock \emph{Stochastic Calculus. A practical Introduction}.
\newblock CRC Press, 1996.
\newblock \doi{https://doi.org/10.1201/9780203738283}.

\bibitem[Elaydi et~al.(2018)Elaydi, Kwessi, and Livadiotis]{Kwessi2018_4}
S.~Elaydi, E.~Kwessi, and G.~Livadiotis.
\newblock Hierarchical competition models with allee effect iii: Multispecies.
\newblock \emph{Journal of Biological Dynamics}, 2018.
\newblock \doi{10.1080/17513758.2018.1439537}.

\bibitem[Hening and Li(2021)]{Hening2021}
A.~Hening and Y.~Li.
\newblock Stationary distributions of persistent ecological systems.
\newblock \emph{J. of Math. Biol.}, 82\penalty0 (64), 2021.
\newblock \doi{https://doi.org/10.1007/s00285-021-01613-2}.

\bibitem[Hening et~al.(2022)Hening, Nguyen, and Schreiber]{Hening2022}
A.~Hening, D.~H. Nguyen, and S.~J. Schreiber.
\newblock A classification of th dynamics of thre-diemnsional stochastic
  ecological systems.
\newblock \emph{The Annals of Applied Probability}, 32\penalty0 (2):\penalty0
  893--931, 2022.

\bibitem[Holling(1959)]{Holling1959}
C.~S Holling.
\newblock Some characteristics of simple types of predation and parasitism.
\newblock \emph{The Canadian Entomologist}, 91\penalty0 (7):\penalty0 385--395,
  1959.

\bibitem[Hutchings(2015)]{Hutchings2015}
J.~A. Hutchings.
\newblock Thresholds for impaired species recovery.
\newblock \emph{Proceedings of the Royal Society B}, 282:\penalty0 20150654,
  2015.
\newblock \doi{10.1098/rspb.2015.0654}.

\bibitem[Kwessi et~al.(2018)Kwessi, Elaydi, Dennis, and
  Livadiotis]{Kwessi2018_3}
E.~Kwessi, S.~Elaydi, B.~Dennis, and G.~Livadiotis.
\newblock Nearly exact discretization of single species population models.
\newblock \emph{Natural Resource Modeling}, 2018.
\newblock \doi{10.1111/nrm.12167}.

\bibitem[Milstein(1975)]{Milstein1975}
G.~N. Milstein.
\newblock Approximate integration of stochastic differential equations.
\newblock \emph{Theory of Probability and its Applications}, 19\penalty0
  (3):\penalty0 557--000, 1975.

\bibitem[Oksendal(2014)]{Oksendal2014}
B.~Oksendal.
\newblock \emph{Stochastic Differential Equations: An Introduction with
  Applications}.
\newblock (Universitext). Springer, 2014.

\bibitem[Turelli(1977)]{Turelli1977}
M.~Turelli.
\newblock Random environments and stochastic calculus.
\newblock \emph{Theor. Popul. Biol}, 12\penalty0 (2):\penalty0 140--178, 1977.

\bibitem[Villani(2008)]{Villani2008}
C.~Villani.
\newblock \emph{Optimal Transport, Old and New}.
\newblock Springer, 2008.
\newblock ISBN 978-3-540-71050-9.

\end{thebibliography}

\section{Appendix}  \label{sect: Append}

\subsection{Proof of Theorem \ref{thm:GlobalStat}}
\begin{proof}
Let \[F(x,y)=\rb{xe^{\ds r_1-x+\frac{d(b+cx)y}{1+py(b+cx)}+H_1}, \quad ye^{\ds r_2-y-\frac{m}{a+y}-\frac{(b+cx)y}{1+py(b+cx)}+H_2}}\;.\]
  Consider the Lyapunov function $L(x,y)=x^2+y^2$.\\
  Then 
  \begin{eqnarray*}
  L(F(x,y))-L(x,y)&=& x^2\Intv{e^{\ds 2\rb{r_1+H_1-x+\frac{d(b+cx)y}{1+py(b+cx)}}}-1}\\
  &+& y^2\Intv{e^{\ds 2\rb{r_2+H_2-\frac{m}{a+y}-\frac{(b+cx)y}{1+py(b+cx)}}}-1}\\
  &\leq&   x^2\Intv{e^{\ds 2\rb{r_1+H_1+\frac{ db}{p}}}-1}+y^2\Intv{e^{\ds 2\rb{r_2+H_2}}-1}\;.
  \end{eqnarray*}
  Thus, $L(F(x,y))-L(x,y)\leq 0$ if $\ds r_1+H_1+\frac{ d}{p}<0$ and $r_2+H_2<0$\;. This shows that the equilibrium point $(0,0)$ is globally asymptotic stable.  Recall that  
  \begin{eqnarray*}
h(x,y)&=&\frac{(b+cx)y}{1+py(b+cx)}\leq \frac{d}{p}\;,\\
F(x,y)&=&(F_1(x,y), F_2(x,y))=(xf_1(x,y), yf_2(x,y))\;.
\end{eqnarray*} and 
\begin{equation}\label{eqn:boundedness}
\begin{cases}
f_1(x,y)=e^{\ds r_1+H_1-x+dh(x,y)}\leq e^{r1+H_1+\frac{d}{p}}\\
f_2(x,y)= e^{\ds r_2+H_2-y-\frac{m}{a+y}-h(x,y)}\leq e^{r_2+H_2}
\end{cases}
\end{equation}
Consider the Lyapunov function \[L(x,y)=(x-x_*)^2+(y-y_*)^2\;.\]
and let $\Delta L=L(F(x,y))-L(x,y)\;.$
Then
\begin{eqnarray*}
\Delta L &=& (xf_1(x,y)-x_*)^2+(yf_2(x,y)-y_*)^2-(x-x_*)^2-(y-y_*)^2\\
&\leq & (\alpha x-x_*)^2+(\beta x-y_*)^2-(x-x_*)^2-(y-y_*)^2\\ 
&=& (\alpha^2-1)x^2-2(\alpha-1)x_*x+(\beta^2-1)y^2-2(\beta-1)y_*y\\
&=& (\alpha^2-1)\rb{x-\frac{1}{\alpha+1}x_*}^2-\frac{\alpha-1}{\alpha+1}x_*^2\\
&+& (\beta^2-1)\rb{y-\frac{1}{\beta+1}y_*}^2-\frac{\beta-1}{\beta+1}y_*^2
\end{eqnarray*}
Let $a_0=\max\set{\alpha^2-1,\beta^2-1}$ and let $\ds b_0=\frac{\alpha-1}{\alpha+1}x_*^2+\frac{\beta-1}{\beta+1}y_*^2$.
If $a_0\leq 0$and $x_*=y_*=0$ then $\Delta L\leq 0$.
In particular, if $a_0=0$, then clearly, for any $(x_*,y_*)$, we have $\Delta L\leq 0$.\\
Suppose $a_0>0$. Then  $b_0\geq 0$. Put $r_0^2=\frac{b_0}{a_0}$. \\Then $\Delta L\leq 0$ if $\rb{x-\frac{1}{\alpha+1}x_*}^2+\rb{y-\frac{1}{\beta+1}y_*}^2<r_0^2$. That is, $\Delta L\leq 0$ if  $(x,y)\in B\rb{\rb{\frac{1}{\alpha+1}x_*,\frac{1}{\beta+1}y_*},r_0}$,  hence  local stability.
\end{proof}

\subsection{Proof of Theorem \ref{thm:MeanPersistence}}

\begin{proof}
Let $0<\theta<1$. Then $e^{-x}\geq -x+\theta$.\\
Let $\ds g_1(x(t)):=\ln(x(t))$. Then using It\^{o}'s formula, we have 
\begin{eqnarray*} 
dg_1(x(t))&=&\Intv{\frac{\partial g_1}{\partial t}+x(t)f_1[x(t),y(t)]\cdot \frac{\partial g_1}{\partial x}+\frac{\partial^2 g_1}{\partial x^2}\cdot \frac{\sigma_1^2}{2}}dt+x(t)\sigma_1 \frac{\partial g_1}{\partial x}dW_1(t)\\
&=&\Intv{e^{r_1+H_1-x(t)+\frac{d(b+cx(t))}{1+py(t)(b+cx(t))}}-\frac{\sigma_1^2}{2}}dt+\sigma_1dW_1(t)\\
&\geq& \Intv{e^{r_1+H_1-x(t)}-\frac{\sigma_1^2}{2}}dt+\sigma_1dW_1(t)\\
& \geq & \Intv{e^{r_1+H_1}\rb{-x(t)+\theta}-\frac{\sigma_1^2}{2}}dt+\sigma_1dW_1(t)\;.\\
\end{eqnarray*}
Integrating from 0 to $t$, we have 
\begin{eqnarray*}
\ln\rb{\frac{x(t)}{x(0)}}&\geq& \Intv{\theta e^{r_1+H_1}-\frac{\sigma_1^2}{2}}t-e^{r_1+H_1} \scp{x(t)}+\sigma_1W_1(t)\;.
\end{eqnarray*}
Let $\gamma_{11}=\theta e^{r_1+H_1}-\frac{1}{2}\sigma_1^2$ and $\gamma_{12}=e^{r_1+H_1}$. Let $\ds g_2(y(t)):=\ln(y(t))$.\\
Then using It\^{o}'s formula, we have 
\begin{eqnarray*}
dg_2(y(t))&=&\Intv{\frac{\partial g_1}{\partial t}+y(t)f_2[x(t),y(t)]\cdot \frac{\partial g_1}{\partial y}+\frac{\partial^2 g_1}{\partial y^2}\cdot \frac{\sigma_2^2}{2}}dt+y(t)\sigma_1 \frac{\partial g_1}{\partial y}dW_1(t)\\
&=&\Intv{e^{r_2+H_2-y(t)-\frac{m}{a+y(t)}-\frac{b+cx(t)}{1+py(t)(b+cx(t))}}-\frac{1}{2}\sigma_2^2}dt+\sigma_2dW_2(t)\\
&\geq& \Intv{e^{r_2+H_2-y(t)-\frac{m}{a}-\frac{1}{p}}-\frac{1}{2}\sigma_2^2}dt+\sigma_2dW_2(t)\\
& \geq & \Intv{e^{r_2+H_2-\frac{m}{a}-\frac{1}{p}}(-y(t)+\theta)-\frac{1}{2}\sigma_2^2}dt+\sigma_2dW_2(t)\;.\\
\end{eqnarray*}
Integrating from 0 to $t$, we have 
\begin{eqnarray*}
\ln\rb{\frac{y(t)}{y(0)}}&\geq& \Intv{\theta e^{r_2+H_2-\frac{m}{a}-\frac{1}{p}}-\frac{1}{2}\sigma_2^2}t-e^{r_2+H_2-\frac{m}{a}-\frac{1}{p}} \scp{y(t)}+\sigma_2W_2(t)\;.
\end{eqnarray*}
Let $\gamma_{21}=\theta e^{r_2+H_2-\frac{m}{a}-\frac{1}{p}}-\frac{1}{2}\sigma_2^2$ and $\gamma_{22}=e^{r_2+H_2-\frac{m}{a}-\frac{1}{p}}$\;.\\
It follows that $\gamma_{11},\gamma_{21}>0$ if $1>\theta>\frac{1}{2}\max \set{\sigma_1^2e^{-r_1-H_1},\sigma_2^2e^{\frac{m}{a}+\frac{1}{p}-r_2-H_2}}$.\\
For such $\theta$,  put $\gamma_1=(\gamma_{11},\gamma_{12})$ and  $\gamma_2=(\gamma_{21},\gamma_{22})$. Let $\sigma=(\sigma_1,\sigma_2)$.\\
Since $X(t)=(x(t),y(t))$ and $W(t)=(W_1(t),W_2(t))$, we will have in vector form

\[\ln(X(t))\geq\gamma_1 t-\gamma_2\int_0^t X(u)du+\sigma W(t)\;.\]
Then by the Lemma \ref{lemMeanPers} above, we conclude that
\[\lim_{t \to \infty}\inf \scp{X(t)}\geq \frac{\gamma_1}{\gamma_2}>0\quad \mbox{almost surely}\;.\]
Therefore, $X(t)$ is strongly persistent in mean.

\end{proof}
\subsection{Proof of Theorem \ref{thm:StatDist}}

Consider the the stochastic differential equation \eqref{eqn:stochastic}. It is of the form 
\[dZ_i(t)=Z_i(t)f_i({\bf Z}(t))dt+Z_i(t)g_i({\bf Z}(t))dW_i(t), \quad i=1,2 \;,\]
where $Z_1(t)=x(t)$ and $Z_2(t)=y(t)$. We observe from Remark \ref{rem:Lyapunov:exp} that if \[\ds \frac{1}{2}\max\set{\sigma_1^2e^{-r_1-H_1},\sigma_2^2e^{\frac{m}{a}+\frac{1}{p}-r_2-H_2}}<1\;,\] then $\ds \max_{i=1,2} \lambda(\mu)>0$, for any ergodic invariant measure $\mu$. For given  a ${\bf z}\in \R^2$ and according to Theorem 2.1 in \cite{Hening2021}, it remains to  check the following assumptions
\begin{enumerate}
\item[$A_1:$] $\mbox{diag}(g_1({\bf z}),\cdots, g_n({\bf z}))\Gamma^T\Gamma \mbox{diag}(g_1({\bf z}),\cdots, g_n({\bf z}))$ is a positive definite matrix. 
\item[$A_2:$] $f_i(\cdot), g_i(\cdot): \R^{2,\circ}_+\to \R$ are locally Lipschitz functions, for $i=1,2$.
\item[$A_3:$] There exist ${\bf c}=(c_1,c_2)\in \R^{2,\circ}_+ $ and $\gamma_b>0$ such that 
\[\limsup_{\norm{z}\to \infty}\Intv{ \frac{\sum_{i=1}^2c_iz_if_i({\bf z})}{1+\sum_{i=1}^2 c_iz_i}-\frac{1}{2}\frac{\sum_{i,j=1}^2\sigma_{ij}c_ic_jz_iz_jg_i({\bf z})g_j({\bf z})}{\rb{1+\sum_{i=1}^n c_iz_i}^2}+\gamma_b\rb{1+\sum_{i=1}^2\rb{\abs{f_i({\bf z})+g_i^2({
\bf z})}}}}<0\;.\] 
\end{enumerate}
We note that in our case, $g_i({\bf Z}(t))=1$ for $i=1, 2$. \\
For $A_1$, we have
\[\mbox{diag}(g_1({\bf z}),\cdots, g_n({\bf z}))\Gamma^T\Gamma \mbox{diag}(g_1({\bf z}),\cdots, g_n({\bf z}))=\begin{pmatrix} \sigma_1 & 0\\ 0& \sigma_2\end{pmatrix}\;, \] which is a positive definite matrix since $\sigma_i>0$ for $i=1,2$.\\
For $A_2$, the $f_i(\cdot )$'s are locally continuously differentiable functions with bounded derivative,  therefore they are  locally Lipschitz functions.\\
 $A_3$ requires a little bit of work. We observe  from \eqref{eqn:boundedness}, that given ${\bf z}\in \R^{2,\circ}_+$, there exist  $k_1, k_2>0$ such $f_i({\bf z})\leq k_i$. Let $k=\max\set{k_1,k_2}$. Then we have for all ${\bf c}=(c_1,c_2)\in \R^{2,\circ}_+ $
 \[\frac{\sum_{i=1}^2c_iz_if_i({\bf z})}{1+\sum_{i=1}^2 c_iz_i}\leq k\frac{\sum_{i=1}^2c_iz_i}{\sum_{i=1}^2 c_iz_i}\leq k\;.\]
Similarly, 
\[1+\sum_{i=1}^2\rb{\abs{f_i({\bf z})+g_i^2({
\bf z})}}\leq 1+k+2=3+k\;.\]
Also, 
\[\sum_{i,j=1}^2\sigma_{ij}c_ic_jz_iz_jg_i({\bf z})g_j({\bf z})=\sum_{i=1}^2\sigma_i^2c_i^2z_i^2\;.\]
Hence $A_3$ will be satisfied if we can find $\gamma_b>0$ and ${\bf c}=(c_1,c_2)\in \R^{2,\circ}_+ $
such that 
\[k+\gamma_b(3+k)\leq \frac{1}{2} \frac{\sum_{i=1}^2\sigma_i^2c_i^2z_i^2}{\rb{1+\sum_{i=1}^2 c_iz_i}^2}\;.\]
We can use the Cauchy-Schwarz inequality to refine this condition further more. Indeed, put $\sigma=\min\set{\sigma_1,\sigma_2}$. Then by the Cauchy Schwarz inequality, 
\[ \rb{\sum_{i=1}^2 c_iz_i}^2\leq \sum_{i=1}^21^2\sum_{i=1}^2 c_i^2z_i^2=2\sum_{i=1}^2 c_i^2z_i^2\;.\]
Then $\ds \sum_{i=1}^2\sigma_i^2c_i^2z_i^2\geq \frac{1}{2}\sigma^2\rb{\sum_{i=1}^2 c_iz_i}^2$\;. Hence $A_3$ will be satisfied if we can find $\gamma_b>0$ and ${\bf c}=(c_1,c_2)\in \R^{2,\circ}_+ $
such that 
\[ k+\gamma_b(3+k)\leq \frac{\sigma}{4} \frac{\rb{\sum_{i=1}^2c_iz_i}^2}{\rb{1+\sum_{i=1}^2 c_iz_i}^2}\;.  \]
Now let $z_1, z_2\in \R_+$. We pick a  $\gamma_0 \in (0,1)$. There exists  $x\in \R^+$ such that $x\geq \frac{\gamma_0}{1-\gamma_0}$.  Therefore, we will have  $\frac{x}{1+x}\geq \gamma_0$ which implies $\frac{x^2}{(1+x)^2}\geq \gamma_0^2$. Choosing in particular $x$ in the interval  $[\min\set{z_2,z_2},\max\set{z_1,z_2}]$, there  exists $\bm{c}=(c_1,c_2) \in \R^{2,\circ}_+$ such that $x=\sum_{i=1}^2c_iz_i$ and $\ds \frac{\rb{\sum_{i=1}^2c_iz_i}^2}{(1+\sum_{i=1}^2 c_iz_i)^2}\geq \gamma_0^2$.
To finish, we choose $\gamma_b$ such that  $\ds \gamma_0^2=\frac{4(k+\gamma_b(3+k))}{\sigma}$ and the proof is complete.

 \end{document}